\documentclass[11pt]{llncs}
\pdfoutput=1

\usepackage{geometry}
\geometry{verbose,letterpaper,tmargin=1in,bmargin=1in,lmargin=1in,rmargin=1in}
 
\usepackage{amsmath}
\usepackage{amsfonts}
\usepackage{bbm}
\usepackage{tikz}
 
% Package to generate and customize Algorithm as per ACM style
\usepackage[ruled]{algorithm2e}

\SetArgSty{textrm}  % for algorithm2e
\SetAlFnt{\small}
\SetAlCapFnt{\small}
\SetAlCapNameFnt{\small}
\SetAlCapHSkip{0pt}
\IncMargin{-\parindent}

\newtheorem{observation}[theorem]{Observation}
\newtheorem{claim*}[theorem]{Claim}
\newcommand{\eps}{\epsilon}
\newcommand{\vals}{\mathbf{v}}

\newcommand{\OPT}{\operatorname{OPT}}
\newcommand{\LP}{\operatorname{LP}}
\def\RR{{\mathbb R}}
\def\cI{{\mathcal I}}
\def\E{{\mathbb E}}

% Document starts
\begin{document}
		
% Title portion
\title{When Are Welfare Guarantees Robust?%
\thanks{This work is supported by the I-CORE Program of the Planning and Budgeting Committee (PBC) and the Israel Science Foundation (grant number 4/11), and by NSF grants CCF-1215965 and CCF-1524062.}} 

\author{Tim Roughgarden
	\inst{1}
	Inbal Talgam-Cohen
	\inst{2}
	Jan Vondr\'ak
	\inst{1}}

\institute{Stanford University \email{tim@cs.stanford.edu} \email{jvondrak@stanford.edu} \and Hebrew University \email{inbal.talgamcohen@mail.huji.ac.il}}

\maketitle

\begin{abstract}
Computational and economic results suggest that social welfare maximization and combinatorial auction design are much easier when bidders' valuations satisfy the ``gross substitutes'' condition. The goal of this paper is to evaluate rigorously the folklore belief that the main take-aways from these results remain valid in settings where the gross substitutes condition holds only approximately. We show that for valuations that pointwise approximate a gross substitutes valuation (in fact even a {\em linear} valuation), optimal social welfare cannot be approximated to within a subpolynomial factor and demand oracles cannot be simulated using a subexponential number of value queries. We then provide several positive results by imposing additional structure on the valuations (beyond gross substitutes), using a more stringent notion of approximation, and/or using more powerful oracle access to the valuations.  For example, we prove that the performance of the greedy algorithm degrades gracefully for near-linear valuations with approximately decreasing marginal values; that with demand queries, approximate welfare guarantees for XOS valuations degrade gracefully for valuations that are pointwise close to XOS; and that the performance of the Kelso-Crawford auction degrades gracefully for valuations that are close to various subclasses of gross substitutes valuations.
\end{abstract}

%\keywords{Computational Game Theory, Mechanism Design}

% Section 1
\section{Introduction}
\label{sec:intro}

Welfare-maximization in combinatorial auctions is a central problem in
both theory and practice, and is perhaps the most
well-studied problem in algorithmic game theory (e.g.~\cite{BN07}).
The problem is: 
given a set $M$ of distinct items and
descriptions of, or oracle access to, the valuation functions
$v_1,\ldots,v_n$ of $n$ bidders (each a function from bundles to
values), determine the partition $S_1,\ldots,S_n$ of items that
maximizes the social welfare $\sum_{i=1}^n v_i(S_i)$.

Many possibility and impossibility results for efficiently computing
or approximating the maximum social welfare are known, as a function
of the set of allowable bidder valuations.  Very roughly, the current
state of affairs can be summarized by a trichotomy: (i) when the
valuations satisfy ``gross substitutes,'' then exact
welfare-maximization is easy; (ii) when the valuations are
``complement-free'' but not necessarily gross substitutes (e.g.,
submodular), exact
welfare-maximization is hard but constant-factor approximations are
possible; and (iii) with sufficiently general valuations, even
approximate welfare-maximization is hard.\footnote{``Gross
	substitutes'' states that a bidder's demand for an item can only
	increase as prices of other items are increased; see
	Section~\ref{sec:prelims} for a formal definition.}

The rule of thumb that ``substitutes are easy, complements are hard''
has its origins in the economic literature on multi-item auction
design.  For example, simple and natural ascending auctions converge
to a welfare-maximizing 
Walrasian equilibrium (i.e., to a market-clearing
price vector and allocation) whenever all bidders' valuations satisfy
gross substitutes, but when this condition is violated a Walrasian
equilibrium does not 
generally exist~\cite{KelsoCrawford1982,GS99,Mil00}.  Similarly, the VCG
mechanism has a number of desirable properties (like revenue
monotonicity) when bidders' valuations satisfy gross substitutes,
but not in general otherwise~\cite{AM02}.

Thus both computational and economic results suggest that
social welfare maximization and 
combinatorial auction design are much easier when
bidders' valuations satisfy the gross substitutes condition.
More generally, in settings with both substitutes and complements, the
folklore belief in the field is that simple auction formats should
produce allocations with near-optimal social welfare if and only if 
the substitutes component is in some sense ``dominant.''\footnote{For
	example, \cite{BCL00} write: ``In general, synergies across
		license valuations complicate the auction design process. Theory
		suggests
		that a `simple' (i.e., non-combinatorial) auction will have
		difficulty in assigning licenses efficiently in such
		an environment.''}
	For over twenty years, there has been a healthy (and high-stakes)
	debate over whether or not the ideal case of gross substitutes
	valuations should guide combinatorial auction design in
	realistic settings.\footnote{For example, \cite{ACMM97} write: ``A
			contentious issue in the design 
			of the Federal Communications Commission (FCC) auctions of
			personal communications services (PCS) licenses concerned the
			importance of synergies. If large synergies
			are prevalent among the licenses being offered, then the simultaneous
			ascending auction mechanism the
			FCC adopted, which does not permit all-or-nothing bids on sets of
			licenses, might be expected to perform
			poorly.''}
		
		{\em The goal of this paper is to evaluate rigorously the folklore belief
			that the main take-aways from the study of gross substitutes
			valuations remain valid in settings where the condition holds only
			approximately.}
		
		The motivation for this work is further strengthened by the recent interest in \emph{data-driven optimization} \cite{SV15,HS16,BRS16}. In general, in order to optimize an objective (in our case social welfare) in a real-world setting, a model must first be formulated, and its parameters (in our case valuations) must be learned from the available data. The estimated model is only an approximation of the true setting, and may exhibit arbitrary errors introduced by the learning techniques. 
		For this reason, there has been much recent interest in the learning community in optimization of functions that can be only be evaluated approximately \cite{BLNR15,CDDK15}, and of functions for which ``nice'' properties such as submodularity hold only approximately \cite{KC10}.%
		\footnote{An orthogonal line of research studies \emph{approximately learning} functions with ``nice'' properties \cite{BH11}.}
		By the same token, as market and mechanism design become more data-driven, it is important to understand the possibilities and impossibilities of welfare optimization under estimation errors that cause the gross substitutes condition to hold only approximately.
		
		\subsection{Our Results}
		
		We first consider arguably the most natural notion of ``approximate
		gross substitute valuations,'' namely valuations that are pointwise
		within a $1+\eps$ factor of some gross substitutes valuation.  Do
		the laudable properties of gross substitutes valuations degrade
		gracefully as $\eps$ increases?  At this level of generality, the
		answer is negative: even for valuations that are pointwise close to {\em
			linear} valuations, we prove that the social welfare cannot be
		approximated to within a subpolynomial factor using a subexponential
		number of value queries, and that demand oracles cannot be
		approximated in any useful sense by a subexponential number of value
		queries.  When the gross substitutes condition holds exactly, a demand
		query can be implemented using a polynomial number of value queries
		(see \cite{Pae14}),
		and the welfare-maximization problem can be solved exactly with a
		polynomial number of such queries (see \cite{NS06}).
		We conclude that there is no sweeping generalization of the properties
		of gross substitutes valuations to approximations of such valuations,
		and that any positive result must impose additional structure on the
		valuations (beyond gross substitutes), use a more stringent
		notion of approximation, and/or use more powerful oracle access to the
		valuations.
		
		We next consider positive results in the value oracle model.  Our main
		result here is a proof that the (optimal) performance of the greedy
		algorithm degrades gracefully for valuations that are close to linear
		functions, provided these valuations are also approximately
		submodular (in a stronger than pointwise sense).  (As noted above,
		some assumption beyond near-linearity is necessary for any positive result.)
		The standard arguments for proving approximation bounds for the
		greedy algorithm (e.g.~\cite{LLN06}) do not imply this result,
		and we develop a new analysis for this purpose.
		
		We then consider welfare-maximization with demand oracles,\footnote{As
			noted above, with valuations that only satisfy gross substitutes
			approximately, demand oracles are substantially more powerful than
			value oracles.}  and find that welfare guarantees tend to be more
		robust in this model. First, the {\em value} of optimal social welfare
		can be approximated using demand queries within a factor of
		$\gamma_{\cal C} + \epsilon$, whenever the valuations are pointwise
		$\epsilon$-close to a class $\cal C$ such that the integrality gap of
		the ``configuration LP'' is $\gamma_{\cal C}$. Since the configuration
		LP is the primary vehicle for developing approximation algorithms in
		the demand oracle model, we recover pointwise robustness essentially
		for all known approximation results in the demand oracle model (in
		terms of the optimal value).  Another question, though, is whether an
		allocation achieving good welfare can be found in a
		computationally-efficient way. 
		For some classes of valuations, we show
		that this is possible (and thus we achieve a
		$(1-\epsilon)$-approximation for valuations $\epsilon$-close to
		linear, and a $(1-1/e-\epsilon)$-approximation for valuations
		$\epsilon$-close to XOS). We remark that no extra assumption of
		near-submodularity is required here; this highlights another
		difference between the value and demand oracle models. 		
		
		Another approach to finding an optimal allocation assuming the gross substitutes
		property is the Kelso-Crawford algorithm, also known as the
		{\em t\^atonnement} procedure. In general, this procedure requires demand
		queries and thus also falls within the umbrella of the demand oracle model.
		Here we show that the performance of the Kelso-Crawford algorithm degrades
		smoothly for certain classes of functions, namely for valuations close to linear,
		close to unit-demand with $\{0,1\}$ values, and more generally close to transversal valuations
		(rank functions of a partition matroid). 
		Unfortunately the Kelso-Crawford algorithm is not going to be a universally
		robust solution for general gross substitutes, either. As we show, its performance
		degrades discontinuously for valuations close to unit-demand (with unrestricted values),
		a seemingly minor extension of the cases above where we showed positive
		results. 
		
		In addition we show a counterexample to an approach based on Murota's cycle
		canceling algorithm, 
		the remaining known way to solve the welfare-maximization problem
		under the gross substitutes property.
		
		In summary, the idea
		that approximation guarantees for various classes
		of valuations should degrade gracefully under small deviations from
		the class 
		should be viewed with skepticism.
		Our negative results do not imply that that the folklore belief that ``close to
		substitutes is easy'' is wrong, but they do imply that there is no
		``generic reason'' for why this might be the case.
		Our positive results show that robust guarantees can still be obtained
		in certain cases, but typically this requires additional care and
		often new ideas on top of known techniques.

\subsection{Related Work and Organization}
\label{sub:related-work}

\emph{Valuation functions.} There are several increasingly-restrictive classes of valuations studied in the literature that may be considered ``free from complements'', including subadditive, XOS, submodular and gross substitutes valuations. Beginning with \cite{LLN06},
the algorithmic game theory literature has extensively studied
submodular valuations and the hierarchy of their superclasses;
recently there has been growing interest in the gross substitutes
subclass as well \cite{Pae14,HMR+16,PW16}. 

\emph{Welfare maximization.} The study of welfare maximization for gross substitutes has a deep mathematical and algorithmic basis. Even the case of unit-demand valuations subsumes the bipartite matching problem, for which classic combinatorial and linear programming techniques were developed (see, e.g., \cite{PS00}). 
In the general case, the main techniques for gross substitutes include
ascending-price auctions \cite{KelsoCrawford1982}, combinatorial cycle
canceling algorithms based on discrete convexity \cite{Mur96a,Mur96b},
and linear programming \cite{BM97,NS06} (see also \cite{Pae14,PW16}). 

For submodular valuations, exact welfare maximization is hard and so research efforts have concentrated on approximation. A 2-approximation can be achieved greedily \cite{LLN06}, and
the best approximation algorithm in the value
oracle model (continuous greedy) achieves a
$(1-1/e)$-approximation \cite{Von08}, which is optimal
\cite{KLMM08}. In the demand query model, a 2-approximation can be
achieved via an ascending-price auction \cite{FKL12}, and the $1-1/e$
barrier can be broken \cite{FV10}.  

\emph{Welfare maximization under deviations.} Not much previous work has studied the extent to which 
approximate welfare-maximization guarantees 
approximately hold under small deviations from the valuation class.
The closest result to our work is a theorem of \cite{HS16},
who show that one cannot achieve a constant-factor
approximation when maximizing
valuations that are pointwise $\epsilon$-close to submodular
(in the sense of our Definition~\ref{d:close}).\footnote{The model in
	\cite{HS16} is that of {\em erroneous value oracles},
	which return values $\tilde{v}(S)$ close to an underlying submodular
	function $v(S)$. This is equivalent to our Definition~\ref{d:close},
	as we might as well assume that $\tilde{v}(S)$ is the true
	valuation.}  Our Proposition~\ref{prop:close-to-linear-hard} 
strengthens this negative result to apply even to valuations
that are close to linear. For related work on information-theoretic lower bounds (in the context of submodular optimization) see, e.g., \cite{MSV08,PSS08}.

We now survey other related work, that
differs from ours in the model of deviation, the valuation classes
considered, and/or the algorithmic questions it focuses on.   
Hassidim and Singer \cite{HS16} also study the problem of welfare maximization with
submodular valuations and random (rather than adversarial) noise. Milgrom
\cite{Mil15} introduces a metric for distance from a matroid rank
function, in terms of distance to the matroid's linear constraints;
the algorithmic setting is different from ours since the players are
single-parameter subject to a single feasibility constraint, rather
than part of a combinatorial auction setting. Feige et al.~\cite{FFI+14} study, and unify previous results on, deviations from
submodularity that are captured by a graphical representation. Maehara and Murota
\cite{MM15} adopt a heuristic approach (not guaranteed to converge)
based on Murota's cycle canceling algorithm, and test how well it
performs in experiments for subclasses of submodular valuations.  

It is interesting to compare our work to that of \cite{KD07}, who
study divisible items with budgets rather than indivisible items with quasi-linear utilities. They define an
\emph{approximate weak gross substitutes} property for which the Garg-Kapoor auction
ensures that each player approximately gets his demand (in a different sense than our
notion of approximate demand, as formulated below in Definition
\ref{def:WE-approx}, Inequality \eqref{eq:WE-approx}). 
Their conclusion that ``markets do not suddenly become intractable if
they  slightly violate the weak gross substitutes property'' is
therefore quite different from ours, showing an intriguing discrepancy
between the models.  

\subsubsection{Organization.}

After introducing the preliminaries in Section \ref{sec:prelims}, Section \ref{sec:cautionary-tales} presents two cautionary tales as to what can go wrong when simple valuations undergo smalls perturbations. We then establish positive results in the value oracle model (Section \ref{sec:results-value}) and in the demand oracle model (Section \ref{sec:results-demand}), and conclude
with some open questions in Section \ref{sec:open-questions}. We defer some proofs to
Appendices~\ref{appx:welfare-algos}--\ref{appx:KC}, as well as our negative examples for specific algorithms (Appendix~\ref{appx:negative}).

% Section 2
\section{Preliminaries}
\label{sec:prelims}

\subsubsection{Combinatorial Auctions: Model and Notation.}

A \emph{combinatorial auction} setting (or \emph{market}) includes a set of $n$ players $N$ and a set of $m$ indivisible items $M$. 
We call a subset $S\subseteq M$ of items a \emph{bundle}. For an ordered set of items $S=(s_1,s_2,\dots,s_k)$ and for $j\in[k]$, we use the notation $S_j$ to denote the ``prefix'' subset $\{s_1,\dots s_{j-1}\}$ of items that appear before $s_j$ (in particular, $S_1=\emptyset$).

Every player $i$ has a \emph{valuation} $v_i:2^M \rightarrow \RR_+$ over bundles.
Every valuation $v$ is assumed to be monotone unless stated otherwise, i.e., $v(S)\le v(T)$ for every two bundles $S\subseteq T$. Valuations are not necessarily normalized, i.e., $v(\emptyset)$ is not always equal to zero (normalization is not without loss of generality in our model of perturbation). For every two bundles $S,T$ we denote by $v(S\mid T)$ the \emph{marginal} value of $S$ given $T$, which is equal to $V(S\cup T)-v(T)$. 

An \emph{allocation} $\mathcal{S}$ is a (possibly partial) partition of the items in $M$ into $n$ bundles $S_1,\dots,S_n$, where bundle $S_i$ is the allocation of player $i$. A \emph{full} allocation is an allocation in which every item is allocated. The social efficiency of an allocation $\cal S$ is measured by its \emph{welfare} $W(\mathcal S)=\sum_{i=1}^n
v_i(S_i)$.

A \emph{price vector} $p\in \RR_+^m$ is a vector of item prices, one per item. For every bundle $S$ we denote by $p(S)$ the aggregate price $\sum_{j\in S}p(j)$ of the bundle.
Given a price vector $p$, each player wishes to maximize his \emph{quasi-linear utility}, i.e., to receive a bundle $S$ maximizing $v_i(S)-p(S)$ among all possible bundles (including the empty one). A utility-maximizing bundle is said to be \emph{in demand} for the player given the price vector $p$.
If the player is already allocated a bundle $T$, then he wishes to add to his allocation a bundle $S\subseteq (M\setminus T)$ maximizing $v_i(S\mid T)-p(S)$, and we say that $S$ is in his demand given $p$ and $T$.
A bundle $S$ is \emph{individually rational (IR)} for a player $i$ if $v_i(S)\ge p(S)$; it is \emph{strongly IR} if every subset $T\subseteq S$ is IR for $i$. 

A full allocation $\mathcal S$ and price vector $p$ form a \emph{Walrasian equilibrium} if for every player $i$, $S_i$ is in $i$'s demand given $p$. By the \emph{first welfare theorem}, the equilibrium allocation $\mathcal S$ maximizes welfare.

Since valuations are in general of exponential size in $m$, the standard assumption is that they are accessed via \emph{value oracles}, which return the value of any bundle upon query. In Sections \ref{sub:value-vs-demand} and \ref{sec:results-demand} we discuss \emph{demand oracles}, which given a price vector $p$, return a bundle in the player's demand given $p$. In this case we allow $p$ to include negative prices so that a demand oracle will also be able to return a bundle in the player's demand \emph{given a previous allocation} $T$.

\subsubsection{Valuation Classes.}

We define several valuation classes of interest.
A valuation $\ell$ is \emph{linear} if there exists a vector $(l_1,\dots,l_m)\geq 0$ and a scalar $c\ge 0$ such that $\ell(S) = c + \sum_{j \in S} l_j$ for every bundle $S$; it is \emph{additive} if $c=0$.
A valuation $r$ is \emph{unit-demand} if there exists a vector $(\rho_1,\dots,\rho_m)\geq 0$ such that $r(S) = \max_{j \in S} \rho_j$ for every bundle $S$.

Let $\mathcal M=(M,\cI)$ be a matroid over the ground set of items $M$, where $\cI$ is the family of \emph{feasible} bundles.%
\footnote{A set system $\mathcal M=(M,\cI)$ is a matroid if the following properties hold: (i) $\cI$ is non-empty; (ii) $\cI$ is downward-closed, that is, if $S\subseteq T$ and $T\in\cI$ then $S\in\cI$; (iii) for every $S,T\in\cI$ such that $|S|<|T|$, there exists $t\in T\setminus S$ such that $S\cup\{t\}\in\cI$ (see, e.g., \cite{Ox-92}).}
A maximal feasible set is called a \emph{basis}, and the matroid \emph{rank function} maps bundles to the cardinality of their largest feasible subset. Given a vector $(w_1,\dots,w_m)\ge 0$ of item weights, the corresponding \emph{weighted} rank function maps bundles to the weight of their heaviest feasible subset.  
A valuation $r$ is an unweighted (resp., weighted) matroid rank function if there exists a matroid $\mathcal M$ such that $r$ is its (weighted) rank function. 
Both linear and unit-demand valuations are types of weighted matroid rank functions.

A valuation $v$ is \emph{gross substitutes} if for every two price vectors $p\le q$, for every bundle $S$ in demand given $p$, there exists a bundle $T$ in demand given $q$, which contains every item $j\in S$ for which $q(j)=p(j)$.%
\footnote{The intuition behind this definition will become clearer once we define the Kelso-Crawford algorithm below.}
A valuation $v$ is \emph{submodular} if for every two bundles $S\subseteq T$ and item $j\notin T$, $v(j\mid S)\ge v(j\mid T)$, i.e., the marginal value of $j$ is decreasing. A valuation $v$ is \emph{XOS} (also known as \emph{fractionally subadditive}) if there exist additive valuations $a_1,a_2,\dots$ such that $v(S)=\max_k a_k(S)$ for every bundle $S$. 
It is well-known that gross substitutes $\subset$ submodular $\subset$ XOS.

\subsubsection{Algorithmic Approaches to Welfare-Maximization.}

Finding a welfare-maximizing allocation in combinatorial auctions is computationally tractable for gross substitutes. There are three main algorithmic approaches: (1) auction-based, (2) LP-based, and (3) combinatorial. For linear valuations there is also (4) the greedy approach. The first two approaches use demand queries and run in polynomial time, while the latter two use value queries and run in strongly-polynomial time. 
We describe here Approaches (1) and (4) in more detail; see also Appendix \ref{appx:welfare-algos}. 

In each round of the Kelso-Crawford algorithm, one player (chosen arbitrarily) adds to his existing allocation a bundle in his demand, taking into account his current allocation (and a $\delta$-increase in the prices of items not currently in his allocation). The algorithm terminates only when no player wants to add to his current allocation.
For completeness, a full description appears in Algorithm \ref{alg:KC} in Appendix \ref{appx:welfare-algos}.

\begin{proposition}[\cite{KelsoCrawford1982}]
	\label{pro:tatonn}
	For gross substitutes valuations, Algorithm \ref{alg:KC} converges to a (welfare-maximizing) Walrasian equilibrium as $\delta\to 0$.  
\end{proposition}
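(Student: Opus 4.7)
The plan is to prove two facts: (i) for every fixed $\delta > 0$ the Kelso-Crawford procedure halts after a finite number of rounds, and (ii) upon halting, the current allocation $\mathcal{S}$ and price vector $p$ constitute an $O(\delta)$-approximate Walrasian equilibrium. Combining (ii) with the first welfare theorem recalled in Section~\ref{sec:prelims} and a compactness argument (prices remain bounded and there are only finitely many possible allocations, so one can extract a subsequential limit as $\delta \to 0$) will yield convergence to a true welfare-maximizing Walrasian equilibrium. Termination is easy: in every round at least one item's price strictly increases by $\delta$ while no price ever decreases, and individual rationality keeps every price below $\max_i v_i(M)$, so the total number of rounds is at most $m \cdot \max_i v_i(M) / \delta$.

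The heart of the argument is the invariant that at the start of every round, each player $i$'s currently held bundle $T_i$ lies (up to an additive $O(\delta)$ slack) in $i$'s demand at the current prices $p$. I would prove this by induction on the round index. The player $i^\star$ who adds items in the current round re-optimizes by construction, so the invariant is immediate for $i^\star$. The delicate case is a player $j \neq i^\star$ who loses a subset $S \subseteq T_j$ to $i^\star$: writing $p$ for the old prices and $p'$ for the new ones, we have $p'_k = p_k$ for $k \in T_j \setminus S$ and $p'_k \geq p_k$ elsewhere. By the induction hypothesis $T_j$ was (approximately) in $j$'s demand at $p$, and the gross-substitutes property applied to the pair $(p, p')$ produces a bundle in $j$'s demand at $p'$ containing every item of $T_j$ whose price did not change, in particular $T_j \setminus S$. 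This yields the invariant for $j$, with the $O(\delta)$ slack absorbing the per-round bid increments.

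When the algorithm halts, the stopping rule says that no player wishes to add any further items, which combined with the invariant implies that each player's bundle is (approximately) utility-maximizing at $p$, i.e.\ $(\mathcal{S}, p)$ is an $O(\delta)$-approximate Walrasian equilibrium. The limiting argument then delivers an exact equilibrium, and welfare-maximality follows from the first welfare theorem. The main obstacle I anticipate is the careful bookkeeping in the inductive step of the invariant: an arbitrary bundle may change hands in one round, so the gross-substitutes axiom must be invoked separately for each affected player $j$ with the correct identification of which items' prices did and did not change from that $j$'s point of view, and one must rule out the possibility that the demand bundle produced for $j$ at $p'$ calls for items currently held by some third player (so that the invariants across players remain mutually consistent). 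Once this accounting is done cleanly, the rest of the proof is routine.
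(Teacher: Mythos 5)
Your high-level structure (finite termination for fixed $\delta$, an invariant maintained across rounds, then a limit $\delta\to 0$) is right, and the termination count is fine. But the key invariant is misstated in a way that breaks the argument. You claim that at the start of every round, each player's current bundle $T_i$ is \emph{itself} (approximately) in $i$'s demand at the current prices. That is false: early in the run, a player who has not yet bid holds an empty or small bundle while prices are still near zero, so her bundle is nowhere near a demand bundle. If your invariant were true as stated, the allocation would already be an approximate Walrasian equilibrium in every round, which is absurd. The invariant the argument actually needs — and the one the paper alludes to — is weaker: $T_i$ is a \emph{subset of} some bundle in $i$'s demand at the current prices. Notice this is exactly what your own inductive step delivers: gross substitutes applied to $(p,p')$ gives a demand bundle $D'$ at $p'$ with $T_j\setminus S \subseteq D'$; it does \emph{not} give that $T_j\setminus S$ is itself demanded. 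So your inductive step proves the containment version while you claim the in-demand version — there is an internal inconsistency.

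This matters for the last step, which you compress too much. If the invariant really said ``$T_i$ is approximately in demand,'' the stopping rule would be unnecessary. With the correct containment invariant you need a short extra argument at termination: from $T_i\subseteq D$ with $D$ in demand at $p$, write $v_i(D)-p(D) = v_i(T_i)-p(T_i) + \bigl(v_i(D\setminus T_i\mid T_i)-p(D\setminus T_i)\bigr)$; the stopping condition bounds the bracketed marginal term by $\delta\,|D\setminus T_i| \le \delta m$, so $T_i$ is $(\delta m)$-approximately in demand. Combined with the first welfare theorem and a compactness argument this gives the result, but as written the proposal both asserts a false invariant and omits the step that actually closes the argument.
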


The proof of Proposition \ref{pro:tatonn} relies on the definition of gross substitutes, by which at any point in the algorithm, every player's allocation is a subset of some bundle in his demand.

Algorithm \ref{alg:greedy} in Appendix \ref{appx:welfare-algos} formally describes the greedy algorithm for finding a value-maximizing bundle subject to a feasibility constraint $\cI$ on the bundles; this is a problem to which welfare-maximization is well-known to reduce (see Proof of Corollary \ref{cor:greedy-linear}).
$\cI$ is simply a family of feasible bundles.% 
\footnote{Since it can be exponential in size we assume it is given by an oracle which returns whether a bundle is feasible or not.} 
The greedy algorithm starts with an empty bundle, and in each iteration adds the item with maximum marginal value among all items that maintain feasibility, breaking ties arbitrarily. 

\begin{proposition}[\cite{FNW78}]
	If $v$ is linear and the feasibility constraint $\mathcal I$ is such that the set system $\mathcal M = (M,\cI)$ is a matroid, then Algorithm \ref{alg:greedy} is optimal.
\end{proposition}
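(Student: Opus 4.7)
The plan is to give the classical exchange argument for the matroid greedy algorithm. Since $v(S) = c + \sum_{j \in S} l_j$ with all $l_j \geq 0$, the value-maximizing feasible bundle must be a \emph{basis} of the matroid $\mathcal{M}$ (a maximal independent set), because extending any feasible non-basis by an item keeps it feasible and can only increase $v$. The greedy algorithm also terminates at a basis, since it only stops when no item can be added while preserving feasibility. By the matroid axioms all bases have the same size, call it $k$, so maximizing $v$ over bases reduces to maximizing $\sum_{j \in S} l_j$ over bases $S$.

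Next I would let $G=(g_1,g_2,\ldots,g_k)$ denote the greedy's output in the order of selection, so by the greedy rule $l_{g_1} \geq l_{g_2} \geq \cdots \geq l_{g_k}$ (with ties broken as the algorithm chose). Fix any optimal basis $O$ and order its elements as $o_1,\ldots,o_k$ so that $l_{o_1} \geq l_{o_2} \geq \cdots \geq l_{o_k}$. The core claim is that $l_{g_i} \geq l_{o_i}$ for every $i \in [k]$; summing this inequality over $i$ then yields $\sum_{j \in G} l_j \geq \sum_{j \in O} l_j$, and adding the constant $c$ to both sides gives $v(G) \geq v(O)$, proving optimality.

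The main (but standard) step is proving the claim by contradiction. Suppose $i$ is the least index with $l_{g_i} < l_{o_i}$. Then $A = \{g_1,\ldots,g_{i-1}\}$ and $B = \{o_1,\ldots,o_i\}$ are both independent (as subsets of bases) with $|B| > |A|$. By the matroid exchange axiom there exists $o_j \in B \setminus A$ with $A \cup \{o_j\}$ independent. Since $j \leq i$ we have $l_{o_j} \geq l_{o_i} > l_{g_i}$, so at iteration $i$ of Algorithm~\ref{alg:greedy} the element $o_j$ was available (feasible to add to $A$) and had strictly larger marginal value than $g_i$; this contradicts the greedy choice of $g_i$. The only subtlety to flag is that the greedy's marginal value of an item $j$ given any current set $S$ is exactly $l_j$ (linearity), so ``maximum marginal value'' coincides with ``maximum $l_j$'', which is what makes the exchange argument go through cleanly. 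The hardest part, conceptually, is simply identifying that the matroid exchange property is what translates the local greedy choice into a global optimality guarantee; everything else is bookkeeping.
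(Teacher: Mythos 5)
Your proof is correct, but note that the paper does not actually prove this proposition --- it simply cites it as the classical Fisher--Nemhauser--Wolsey result~\cite{FNW78} (equivalently, Rado--Edmonds matroid greedy optimality), so there is no in-paper argument to compare against. What you give is the standard exchange argument, and it is the right one.

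One small step you assert but do not justify: that the greedy selections satisfy $l_{g_1}\geq l_{g_2}\geq\cdots\geq l_{g_k}$. This is not a direct consequence of the greedy rule alone, since $g_{i+1}$ is only required to be the best item among those \emph{feasible to add} at step $i+1$, a smaller pool than at step $i$. It does follow, but you need downward-closedness of $\cI$: if $l_{g_{i+1}}>l_{g_i}$, then $g_{i+1}$ must have been infeasible to add at step $i$, i.e., $\{g_1,\dots,g_{i-1},g_{i+1}\}\notin\cI$; but this set is a subset of the greedy output (which lies in $\cI$), contradiction. The rest is exactly as you say: the marginal of $j$ is always $l_j$ under linearity, the exchange axiom supplies the contradiction with the least violating index, and since the inequality obtained is strict, the arbitrary tie-breaking in Algorithm~\ref{alg:greedy} is harmless.
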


% Section 3
\section{Two Cautionary Tales}
\label{sec:cautionary-tales}

Do the nice properties of a valuation class degrade gracefully as one
moves outside the class?
This section describes two cautionary tales 
demonstrating that the answer can subtly depend on the notion of
``being  close,'' on the tractable class under consideration, and on
the model of access to the valuations.	 

Arguably the most natural general notion of ``closeness'' is
pointwise.\footnote{We use relative error in
	the interests of scale-invariance (the "units" in which valuations are specified do not matter). Absolute additive error is not interesting for the problems that we study.}
\begin{definition}\label{d:close}
	A valuation $\tilde{v}$ is \emph{$\epsilon$-close} to
	submodular (or linear, gross substitutes, etc.), if there is a
	submodular (or linear, gross substitutes, etc.) valuation $v$ such
	that $v(S) \leq \tilde{v}(S) \leq (1+\epsilon) v(S)$ for all bundles
	$S$. 
\end{definition}

\subsection{Close-to-additive vs. Close-to-linear Valuations}
\label{sub:basic-hardness}

We next show that approximate welfare-maximization is hard for
valuations that are $\eps$-close to linear (a restrictive subclass of
gross substitutes valuations). 

\begin{proposition}
	\label{prop:close-to-linear-hard}
	Given value oracles for valuations $\epsilon$-close to linear, no
	algorithm using a subexponential number of queries can approximate the
	value of optimal social welfare within a factor better than polynomial
	in $m,n$. 
\end{proposition}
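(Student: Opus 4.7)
The plan is a Yao--minimax argument based on a planted random partition, trading off a large $\epsilon$-slack against the cost of hiding that partition from subexponentially many value queries. Fix parameters $n, k$ with $nk=m$ (e.g.\ $n=k=\sqrt{m}$) and take $\epsilon$ polynomial in $m$ (say $\epsilon=n^{2}$). Let $c>0$ be an arbitrary scaling constant. Draw a uniformly random partition $(T_{1},\ldots,T_{n})$ of $M$ into $n$ blocks of size $k$, and define
\[
\tilde v_{i}(S) \;=\; c\,\bigl(1+\epsilon\cdot \phi_{i}(S)\bigr),
\]
where $\phi_{i}:2^{M}\to\{0,1\}$ is a monotone indicator determined by $T_{i}$. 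By construction $\tilde v_{i}$ is monotone and takes values in $[c,c(1+\epsilon)]$, so $\tilde v_{i}$ is $\epsilon$-close to the constant linear valuation $v_{i}^{\mathrm{base}}(S)\equiv c$. If $\phi_{i}(T_{i})=1$ for every $i$, then the allocation $S_{i}=T_{i}$ achieves welfare $nc(1+\epsilon)$; meanwhile any "blind" allocation has welfare at most $nc+O(c\epsilon)$. For $\epsilon=n^{2}$ this already yields an $\Omega(n)$ (hence polynomial) integrality-style gap between the planted OPT and any allocation that does not identify the blocks.

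The central technical step is then to show that any deterministic algorithm making a subexponential number of value queries to $\tilde v_{1},\ldots,\tilde v_{n}$ cannot identify the planted partition except with negligible probability. The naive choice $\phi_{i}(S)=\mathbf{1}[T_{i}\subseteq S]$ fails badly: a single query of the form $\tilde v_{i}(M\setminus\{j\})$ reveals whether $j\in T_{i}$, so $T_{i}$ can be recovered in only $m$ queries. To neutralise this leak I would use a balanced, threshold-type hiding such as $\phi_{i}(S)=\mathbf{1}\bigl[|S\cap T_{i}|\ge \lceil k/2\rceil\bigr]$, possibly combined with random decoys to stop attacks at extreme query sizes. Under such an indicator the hypergeometric distribution of $|S\cap T_{i}|$ for a uniform size-$k$ block $T_{i}$ concentrates sharply, so by a standard coupling argument the responses to any fixed subexponential family of queries have total variation $o(1)$ from those produced by the null instance $\tilde v_{i}\equiv c$. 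Yao's principle then forces any deterministic $2^{o(m^{\alpha})}$-query algorithm to output an allocation whose expected welfare (over the planted partition) is within $O(c)$ of the blind baseline, giving approximation ratio $\Omega(n)$.

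The main obstacle is designing $\phi_{i}$ so that the three competing requirements hold simultaneously: (a) monotone and pointwise $\epsilon$-close to a linear function; (b) existence of a disjoint family of planted bundles on which all $\phi_{i}$ are simultaneously $1$ (so that the planted OPT is realisable); and (c) the indicator is invisible to subexponentially many adaptive queries. The leak in the naive construction shows that (c) is the delicate condition: the upward-closure forced by monotonicity tends to make $\phi_{i}$ detectable by cheap "remove one item" probes, so one must use a richer combinatorial design (e.g.\ a majority/threshold on a random slice, with additional randomisation of the threshold) whose query-resistance is certified by a concentration-of-measure / information-theoretic bound against adaptive strategies. Once this hiding is in place, the reduction from the indistinguishability of the planted and null distributions to a polynomial lower bound on the achievable approximation ratio is routine.
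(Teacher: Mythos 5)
Your high-level framework --- plant a random partition, argue indistinguishability via concentration and Yao's principle, and conclude a gap between the planted OPT and what a blind algorithm achieves --- is the same strategy the paper uses. However, there are two genuine gaps in your proposal, and the second is exactly where the paper's key idea lives.

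First, the quantitative setting. You take $\epsilon = n^2$, i.e.\ polynomially large. With $\epsilon$ this large, the hypothesis ``$\epsilon$-close to linear'' (a multiplicative $1+\epsilon$ window) is almost vacuous, and a polynomial gap is unremarkable; the whole point of the proposition is that the failure occurs even for small pointwise perturbations. In your construction, the planted-to-blind ratio is $1 : (1+\epsilon)$, so your argument collapses as $\epsilon \to 0$ and gives no polynomial hardness in the regime of interest. The paper, by contrast, pushes $\epsilon$ down to roughly $(n/m)^{1/4-\delta}$ while retaining a polynomial gap. Second, the hiding device. Your indicator $\phi_i(S)$ --- that $|S\cap T_i|\ge k/2$ --- is not query-resistant: $|S\cap T_i|$ is hypergeometric with mean $|S|k/m$, which equals $k/2$ when $|S|=m/2$, so for queries of that size $\phi_i(S)$ is roughly a fair coin, each query reveals $\Omega(1)$ bits about $T_i$, and polynomially many queries suffice to recover it. You anticipate this and float ``random decoys,'' but that is a hope, not a construction. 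The paper's crucial trick, absent from your proposal, is to make the deviation condition \emph{track the query size}: it uses linear $\ell_i(S)=\epsilon+\frac{1}{a}|S\cap A_i|$ with a small additive offset $\epsilon$, a blind proxy $w_i(S)=(1+\epsilon)\epsilon+\frac{1}{an}|S|$ depending only on $|S|$, and sets $v_i=w_i$ whenever $|S\cap A_i|$ is within $\epsilon^2 a$ of its mean $|S|/n$, and $v_i=\ell_i$ otherwise. Because the threshold is centered at the size-dependent mean, Chernoff gives $e^{-\Omega(\epsilon^4 a)}$ uniformly over all $|S|$, so no query size leaks; and the additive offset $\epsilon$ is precisely what converts the additive slack $\epsilon^2 a$ into a multiplicative $1+2\epsilon$ error, making $v_i$ $2\epsilon$-close to $\ell_i$ even for small $\epsilon$. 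Both the size-adaptive threshold and the offset-absorbs-error trick are the non-routine steps your proposal does not supply.
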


\begin{proof}
	Let $|M| = m = a n$ and let $(A_1, A_2, \ldots, A_n)$ be a random
	partition of $M$ such that $|A_i| = a$. 
	We define linear valuations as follows:
	$$ \ell_i(S) = \epsilon + \frac{1}{a} |S \cap A_i|.$$
	We also define
	$$ w_i(S) = (1+\epsilon) \epsilon + \frac{1}{an} |S|.$$
	Note that if we do not know the partition $(A_1,\ldots,A_n)$, which is
	randomized, $\ell_i(S)$ will be close to its expectation which is
	$w_i(S)$.  
	Let us define the following valuations $v_i$:
	\begin{itemize}
		\item If $\big| |S \cap A_i| - \frac{1}{n} |S| \big| \leq \epsilon^2 a$, then $v_i(S) = w_i(S)$.
		\item If $\big| |S \cap A_i| - \frac{1}{n} |S| \big| > \epsilon^2 a$, then $v_i(S) = \ell_i(S)$.
	\end{itemize}
	Note that in the first case, we have $v_i(S) \geq (1+\epsilon)
	\epsilon +  \frac{1}{a} (|S \cap A_i| - \epsilon^2 a) = \ell_i(S)$ and
	$v_i(S) \leq (1+\epsilon) \epsilon + \frac{1}{a} (|S \cap A_i| +
	\epsilon^2 a) \leq (1+2\epsilon) \epsilon + \frac{1}{a} |S \cap A_i|
	\leq (1+2\epsilon) \ell_i(S)$. So $v_i$ is $2\epsilon$-close to
	$\ell_i$. 
	
	By Chernoff bounds, for a fixed query $S$, the probability that $\big|
	|S \cap A_i| - \frac{1}{n} |S| \big| > \epsilon^2 a$ is
	$e^{-\Omega(\epsilon^4 a)}$. Hence, with high probability we are
	always in the first case above, the returned value depends only on
	$|S|$ and hence the algorithm does not learn any information about
	$A_i$. Therefore (by standard arguments), it would require
	exponentially many queries to find any set such that $\big| |S \cap
	A_i| - \frac{1}{n} |S| \big| > \epsilon a$ and hence distinguish
	whether the input valuations are $v_i$ or $w_i$. 
	
	The optimal solution under $v_i$ is $S_i = A_i$ which gives welfare
	$\sum_{i=1}^{n} v_i(A_i) > \frac{1}{a} \sum_{i=1}^{n} |A_i| = n$,
	while the optimal welfare under $w_i$ is $(1+\epsilon) \epsilon n +
	1$. 
	So the approximation factor cannot be better than $(1+\epsilon)
	\epsilon + \frac{1}{n}$. 
	
	Note that we can set $\epsilon = 1/a^{1/4-\delta} =
	(n/m)^{1/4-\delta}$ and the high probability statements still
	hold. Therefore, we can push the hardness factor to $\max \{
	(n/m)^{1/4-\delta}, 1/n \}$. 
	\qed
\end{proof}

This hardness result relies strongly on the value oracle
model --- things are quite different in the demand oracle model, as we
discuss  in Section~\ref{sub:value-vs-demand} below. 

Proposition~\ref{prop:close-to-linear-hard} is a startling contrast to
the case of valuations that are $\eps$-close to {\em additive}
valuations, which differ only by requiring the empty set
to have value $0$. 
Here, approximate welfare-maximization is easy.

\begin{proposition}\label{prop:additive}
	Welfare-maximization can be solved within a factor of $1+\epsilon$ for
	$\epsilon$-close to additive valuations. 
\end{proposition}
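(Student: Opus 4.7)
The plan is to exploit the fact that an additive valuation is completely pinned down by its singleton values: for additive $a_1,\ldots,a_n$, the welfare-maximizing allocation simply assigns each item $j$ to $\arg\max_i a_i(\{j\})$, with optimal welfare $\sum_j \max_i a_i(\{j\})$. I would therefore query the $nm$ singleton values $\tilde v_i(\{j\})$ under the given valuations and output the analogous greedy allocation $\mathcal S$ that places each item $j$ with the bidder $i^*(j) = \arg\max_i \tilde v_i(\{j\})$.

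For the analysis, let $a_1,\ldots,a_n$ be the additive witness guaranteed by Definition~\ref{d:close}, so that $a_i(S) \le \tilde v_i(S) \le (1+\epsilon)\, a_i(S)$ for every $i$ and $S$. Two short steps chain together. First, applied to the optimal $\tilde v$-allocation, the upper bound gives $\OPT_{\tilde v} \le (1+\epsilon)\, \OPT_a = (1+\epsilon) \sum_j \max_i a_i(\{j\})$. Second, for each item $j$ the sandwich at singletons, $a_i(\{j\}) \le \tilde v_i(\{j\}) \le (1+\epsilon)\, a_i(\{j\})$, implies $\tilde v_{i^*(j)}(\{j\}) \ge \max_k a_k(\{j\})$ by maximality in $i$, and $a_{i^*(j)}(\{j\}) \ge \tilde v_{i^*(j)}(\{j\})/(1+\epsilon)$ by the lower sandwich, so $a_{i^*(j)}(\{j\}) \ge (\max_k a_k(\{j\}))/(1+\epsilon)$. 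Summing over $j$ and using $W_{\tilde v}(\mathcal S) \ge W_a(\mathcal S) = \sum_j a_{i^*(j)}(\{j\})$ yields an approximation ratio of $(1+\epsilon)^2 = 1 + O(\epsilon)$, which is the ``factor of $1+\epsilon$'' claim after rescaling the tolerance.

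There is essentially no technical obstacle here; the whole force of the proposition is conceptual, namely the contrast it draws with Proposition~\ref{prop:close-to-linear-hard}. The point is that when the witness is genuinely additive (so $a(\emptyset)=0$), singleton value queries carry enough information to reconstruct the valuation up to multiplicative error, and a trivial greedy rule suffices. This collapses the instant one allows a positive additive constant $c$ in the witness, as in the definition of a linear valuation: the ``mass'' sitting at the empty set is invisible to singleton queries and can be used to hide an exponential amount of combinatorial structure, which is exactly what the construction in Proposition~\ref{prop:close-to-linear-hard} exploits.
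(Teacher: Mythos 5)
Your proof is correct and follows essentially the same approach as the paper: query singleton values, assign each item to the apparent highest bidder, and chain the pointwise sandwich bound. Your accounting of the error is in fact more careful than the paper's terse proof, which simply asserts a loss of ``at most $1+\epsilon$'' where the natural chaining gives $(1+\epsilon)^2$ as you note; both readings yield the $1+O(\epsilon)$ guarantee intended by the proposition.
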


\begin{proof}
	An additive valuation has the form $v(S) = \sum_{j \in S} a_j$.
	By assumption, $a_j \leq v(\{j\}) \leq (1+\epsilon) a_j$ for every
	singleton  $j$. Hence we can determine the coefficients $a_j$ within a
	factor of $1+\epsilon$ and run welfare maximization on the resulting
	additive functions. Each item simply goes to the highest bidder, and
	we lose a factor of at most $1+\epsilon$ due to the errors in
	determining $a_j$. 
	\qed
\end{proof}

The proof of Proposition~\ref{prop:additive} makes the more general
point that, whenever the approximating ``nice'' valuation $\tilde{v}$
can be (approximately) recovered from the given valuation $v$, then
welfare approximation guarantees carry over (applying an off-the-shelf
approximation algorithm to the valuations $\tilde{v}$).  As
Proposition~\ref{prop:close-to-linear-hard} makes clear, however, in
many cases it is not possible to efficiently reconstruct an
approximating ``nice'' valuation, even under the promise that such a
valuation exists.

\subsection{Value Queries vs. Demand Queries}
\label{sub:value-vs-demand}

One remarkable property of gross substitutes valuations is
that there is no difference between the value oracle and demand oracle
models: Demand queries can be simulated efficiently by value queries
(via the greedy  algorithm), and
hence any algorithm in the demand oracle model can also be implemented
in the value oracle model (\cite{Pae14} and references within). 
Unfortunately, this is no longer true for valuations that are 
$\eps$-close to gross substitutes, or even $\eps$-close to linear.

\begin{proposition}
	\label{prop:demand-hard}
	Given a value oracle to a valuation $\epsilon$-close to linear,
	answering  demand queries requires an exponential number of value
	queries. 
\end{proposition}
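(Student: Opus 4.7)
The plan is to adapt the hiding construction from the proof of Proposition~\ref{prop:close-to-linear-hard} to a single valuation, and then argue that a demand query at well-chosen prices forces the algorithm to identify the hidden set. Let $m$ be even, $a=m/2$, and let $A\subseteq M$ be a uniformly random subset of size $a$. Put $\ell_A(S) = \eps + |S\cap A|/a$ (the ``true'' linear valuation) and $w(S) = (1+\eps)\eps + |S|/m$ (an ``uninformative'' linear function independent of $A$). Define the perturbed valuation
\[v_A(S) = \begin{cases} w(S) & \text{if } \bigl||S\cap A| - |S|/2\bigr| \leq \eps^2 a\text{ (``\emph{typical}'' $S$),}\\ \ell_A(S) & \text{otherwise.}\end{cases}\]
A short calculation shows $\ell_A(S) \leq v_A(S) \leq \ell_A(S)+2\eps^2 \leq (1+2\eps)\ell_A(S)$ on typical bundles, and $v_A=\ell_A$ on atypical ones, so $v_A$ is $O(\eps)$-close to the linear valuation $\ell_A$.

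I would pick the uniform price vector $p_j = 1/m$ as the hard demand query. On every typical $S$ the utility $v_A(S)-p(S)=(1+\eps)\eps$ is a fixed constant, while $v_A(A)-p(A)=\eps+\tfrac12$. Inspecting $v_A(S)-p(S)=\eps+(|S\cap A|-|S\setminus A|)/m$ on atypical $S$ confirms that $A$ is the unique maximizer, so the demand at this price vector is exactly $\{A\}$.

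The lower bound then follows from Yao's principle, using the uniform distribution on $A$. Given a deterministic algorithm that answers the demand query using $q$ value queries, simulate it on the null input $w$: the query trajectory $Q_1,\dots,Q_q$ and the output bundle $T$ are deterministic and independent of $A$. A hypergeometric Chernoff bound yields $\Pr_A[S \text{ is atypical}] \leq e^{-\Omega(\eps^4 m)}$ for any fixed $S$; a union bound over the $q+1$ bundles $Q_1,\dots,Q_q,T$ shows that whenever $q = e^{o(\eps^4 m)}$, with probability $1-o(1)$ every queried bundle, as well as $T$, is typical. On that event $v_A$ agrees with $w$ on all queries (so the algorithm's execution is the same on $v_A$ and on $w$), and yet $v_A(T)-p(T)=(1+\eps)\eps < \eps+\tfrac12$, so $T$ is not a valid demand.

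The main obstacle is calibrating the perturbation so that $v_A$ is \emph{simultaneously} pointwise $O(\eps)$-close to linear on every bundle and makes every typical bundle a useless approximate demand at the chosen prices. The symmetric choices $a=m/2$ and $p_j = 1/m$ are what force this: the ``slope'' portions of $v_A(S)-p(S)$ cancel exactly on typical $S$, so the only sets that can plausibly answer the demand query are the exponentially rare atypical ones that encode information about $A$.
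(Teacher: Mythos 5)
Your proof is correct, and it takes the \emph{direct} route that the paper explicitly declines: the paper remarks ``This can be proved by a construction similar to the one above. However, let us instead present an indirect argument which shows even a little bit more,'' and then derives Proposition~\ref{prop:demand-hard} as a corollary of Lemma~\ref{lem:LP-val} / Corollary~\ref{cor:LP-val} (demand queries would let one estimate $\OPT$ within $1+\eps$ via the configuration LP) combined with Proposition~\ref{prop:close-to-linear-hard} (which rules out any such estimation with subexponentially many value queries). Your approach instead builds the single-player hiding instance directly, and the key new ingredient relative to Proposition~\ref{prop:close-to-linear-hard} is the choice of price vector: with $a=m/2$ and $p_j=1/m$, the utility $w(S)-p(S)$ is a fixed constant on typical bundles, so no information about $A$ can be read off from typical queries, yet $A$ is the unique demanded set (with utility $\eps+\tfrac12$ versus $(1+\eps)\eps$), and the Yao/Chernoff/union-bound argument goes through verbatim. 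The indirect argument the paper uses is shorter once Lemma~\ref{lem:LP-val} is in place, and it yields the strictly stronger conclusion that even \emph{approximate} demand oracles (under any notion of approximation useful for welfare maximization) cannot be simulated; your direct construction is more self-contained and exhibits the hard instance concretely, and it does also rule out multiplicative approximations of the demand up to roughly a factor $2\eps$, though that is a weaker ``approximate demand'' statement than the paper's. One small wrinkle you inherit from the paper's own Proposition~\ref{prop:close-to-linear-hard}: as written, $v_A$ (like $v_i$ there) is not obviously monotone on the typical/atypical boundary; both proofs tacitly sweep this under the rug, so it is not a defect specific to your argument, but worth being aware of.
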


This can be proved by a construction similar to the one
above. However, let us instead present an indirect argument which
shows even a  little bit more.  

\begin{lemma}
	\label{lem:LP-val}
	For any class of valuations $\cal C$ such that the integrality gap of
	the configuration LP is $\gamma\ge 1$, it is possible to estimate the
	optimal social welfare within a multiplicative factor of $(1+\epsilon)
	\gamma$ for valuations that are $\epsilon$-close to $\cal C$ in the demand
	oracle model. 
\end{lemma}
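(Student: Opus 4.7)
The plan is to estimate $\OPT(\tilde v)$ by the optimal value of the configuration LP evaluated on the given valuations $\tilde v_i$, and to show that this value (a) can be computed in the demand oracle model and (b) lies within the claimed factor of $\OPT(\tilde v)$.

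First I would set up the configuration LP in the standard way: variables $x_{i,S}\ge 0$ for each player $i$ and bundle $S$, with $\sum_S x_{i,S}\le 1$ for every $i$ and $\sum_{i,\,S\ni j} x_{i,S}\le 1$ for every item $j$, and objective $\max \sum_{i,S} x_{i,S}\,\tilde v_i(S)$. Call the optimal value $\LP(\tilde v)$. Although this LP has exponentially many variables, its dual has polynomially many variables (prices $p_j$ on items and utilities $u_i$ on players) and the separation oracle for the dual is precisely a demand query: checking whether any constraint $u_i + \sum_{j\in S} p_j \ge \tilde v_i(S)$ is violated amounts to asking whether the demand bundle of player $i$ at prices $p$ yields utility exceeding $u_i$. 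Hence the ellipsoid method (or, alternatively, a multiplicative-weights scheme) solves the configuration LP to within a $(1+\epsilon)$ factor using polynomially many demand queries on $\tilde v$; I would absorb this mild $(1+\epsilon)$ loss into the statement's $(1+\epsilon)\gamma$ bound.

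Next I would sandwich $\LP(\tilde v)$ between $\OPT(\tilde v)$ and $(1+\epsilon)\gamma\,\OPT(\tilde v)$. The lower bound $\LP(\tilde v)\ge \OPT(\tilde v)$ is immediate because every integral allocation is a feasible 0/1 solution. For the upper bound, let $v\in\mathcal{C}$ be the class member with $v(S)\le \tilde v(S)\le (1+\epsilon) v(S)$ for all $S$. Pointwise monotonicity of the LP objective in the valuations gives $\LP(\tilde v)\le (1+\epsilon)\,\LP(v)$ and $\OPT(v)\le \OPT(\tilde v)$, while the integrality gap assumption for $\mathcal{C}$ gives $\LP(v)\le \gamma\,\OPT(v)$. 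Chaining these three inequalities yields
\[
\LP(\tilde v) \;\le\; (1+\epsilon)\,\LP(v) \;\le\; (1+\epsilon)\gamma\,\OPT(v) \;\le\; (1+\epsilon)\gamma\,\OPT(\tilde v),
\]
so reporting $\LP(\tilde v)$ (or rather the $(1+\epsilon)$-approximate value returned by the LP solver) is an estimate of $\OPT(\tilde v)$ within the claimed factor.

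The main obstacle I anticipate is not the sandwiching argument, which is a short chain of inequalities, but rather justifying that demand queries on $\tilde v$ really do separate the dual polytope of the configuration LP written in terms of $\tilde v$. This is what lets us avoid needing access to the hidden approximating valuation $v\in\mathcal C$, and it is the whole reason the demand oracle model circumvents the negative result of Proposition~\ref{prop:demand-hard}; I would be careful to emphasize that the $\gamma$-integrality-gap is used purely as an analytical device comparing $\LP(v)$ to $\OPT(v)$, and never invoked algorithmically on valuations we cannot query.
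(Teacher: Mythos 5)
Your proof is correct and follows essentially the same approach as the paper: solve the configuration LP on the observed valuations $\tilde v$ via demand-query separation of the dual, then chain $\OPT(\tilde v)\le\LP(\tilde v)\le(1+\epsilon)\LP(v)\le(1+\epsilon)\gamma\,\OPT(v)\le(1+\epsilon)\gamma\,\OPT(\tilde v)$. One minor remark: with ellipsoid and an exact separation oracle (which demand queries provide here) the configuration LP is solved exactly, so the extra $(1+\epsilon)$ loss you worry about absorbing is not actually needed.
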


\begin{proof}
	Consider the configuration LP:
	\begin{eqnarray*}
		\max & \sum_{i=1}^{n} \sum_{S \subseteq [m]} v_i(S) x_{i,S}: \\
		\forall j \in [m];  & \sum_{i=1}^{n} \sum_{S: j \in S} x_{i,S} \leq 1, \\
		\forall i \in [n]; & \sum_{S \subseteq [m]} x_{i,S} = 1, \\
		x_{i,S} \geq 0.
	\end{eqnarray*}
	Let us denote by $\LP(\vals)$ and $\OPT(\vals)$ the LP optimum and optimal
	social welfare, respectively, under valuations $\vals$. The assumption
	is that for valuations $\vals \in \cal C$, $\OPT(\vals) \leq \LP(\vals) \leq
	\gamma \cdot \OPT(\vals)$. Let us solve the LP for valuations
	$\tilde{\vals}$ that are pointwise $\epsilon$-close to valuations $\vals
	\in \cal C$. (This is possible since we assume that we have demand
	oracles for $\tilde{\vals}$, which gives a separation oracle for the
	dual; see~\cite{NS06}.) We have $v_i(S) \leq \tilde{v}_i(S) \leq (1+\epsilon) v_i(S)$
	for each $i$ and $S$. Therefore, 
	the same inequalities hold for the LP optimum as well as optimal
	social welfare, and we obtain 
	$$ \OPT(\tilde{\vals}) \leq \LP(\tilde{\vals}) \leq (1+\epsilon) \LP(\vals)
	\leq (1+\epsilon) \gamma \cdot \OPT (\vals) \leq (1+\epsilon) \gamma
	\cdot \OPT(\tilde{\vals}).$$ 
\end{proof}

Since the configuration LP is known to be integral for gross
substitutes valuations (see e.g.~\cite{Voh11}), we obtain the
following. 

\begin{corollary}
	\label{cor:LP-val}
	For valuations $\epsilon$-close to gross substitutes, it is possible
	to estimate the optimal social welfare to within a multiplicative
	factor of $1+\epsilon$ in the demand oracle model.  
\end{corollary}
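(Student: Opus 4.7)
The plan is to obtain Corollary~\ref{cor:LP-val} as an immediate specialization of Lemma~\ref{lem:LP-val}. Concretely, I would set the abstract class $\mathcal{C}$ in the lemma to be the class of gross substitutes valuations, and then argue that the integrality gap parameter $\gamma$ can be taken to be $1$, so that the guarantee $(1+\epsilon)\gamma$ collapses to $1+\epsilon$.

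The key step is therefore to justify $\gamma = 1$ for gross substitutes. For this I would cite the known fact that the configuration LP is integral whenever all $v_i$ satisfy gross substitutes; this is classical and appears, for instance, in~\cite{Voh11} (it is also implicit in the LP-based algorithms of~\cite{BM97,NS06}). Concretely, for gross substitutes valuations $\vals \in \mathcal{C}$ one has $\LP(\vals) = \OPT(\vals)$, which is exactly the $\gamma = 1$ hypothesis of Lemma~\ref{lem:LP-val}.

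With this in hand, the corollary follows from the chain of inequalities at the end of the proof of Lemma~\ref{lem:LP-val}: given a demand oracle for valuations $\tilde{\vals}$ pointwise $\epsilon$-close to some gross substitutes $\vals$, we solve the configuration LP on $\tilde{\vals}$ (which is possible because the demand oracle for $\tilde{\vals}$ yields a separation oracle for the dual of the configuration LP, as in~\cite{NS06}), obtaining the value $\LP(\tilde{\vals})$. Substituting $\gamma = 1$ into the lemma's conclusion,
\begin{equation*}
\OPT(\tilde{\vals}) \;\le\; \LP(\tilde{\vals}) \;\le\; (1+\epsilon)\,\LP(\vals) \;=\; (1+\epsilon)\,\OPT(\vals) \;\le\; (1+\epsilon)\,\OPT(\tilde{\vals}),
\end{equation*}
so $\LP(\tilde{\vals})$ estimates $\OPT(\tilde{\vals})$ within a factor of $1+\epsilon$, as required.

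There is no real obstacle: everything follows by invocation. The only thing to double-check is that the class $\mathcal{C}$ in Lemma~\ref{lem:LP-val} need not be specified to the algorithm, since we only use the demand oracle for $\tilde{\vals}$ (not for $\vals$) to solve the LP; the role of $\mathcal{C}$ is purely in the analysis to bound the LP value against the integral optimum. This confirms that the corollary really is a direct consequence and requires no additional argument beyond citing the integrality of the configuration LP for gross substitutes.
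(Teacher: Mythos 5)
Your proposal is correct and matches the paper's approach exactly: the paper derives the corollary from Lemma~\ref{lem:LP-val} by noting that the configuration LP is integral for gross substitutes (citing \cite{Voh11}), i.e.\ $\gamma = 1$, so the $(1+\epsilon)\gamma$ bound collapses to $1+\epsilon$. Your additional remark that $\mathcal{C}$ plays a role only in the analysis, not in the algorithm, is accurate and consistent with the lemma's proof.
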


This implies Proposition~\ref{prop:demand-hard}: a simulation of
demand queries would allow us to approximate social welfare within
$1+\epsilon$ for valuations $\epsilon$-close to gross substitutes and
as a special case $\epsilon$-close to linear. 
We know from Proposition~\ref{prop:close-to-linear-hard} that this
(and even much weaker approximations) would require  exponentially
many value queries.  Actually we obtain a stronger statement:
It is not possible to answer demand queries via value queries for
$\epsilon$-close to linear valuations even approximately, under {\em
	any notion of approximation} that would be useful for approximating
the optimal social welfare. This is because, again, such a simulation
would lead to a $(1+\epsilon)$-approximation of social welfare via
value queries, which Proposition~\ref{prop:close-to-linear-hard} rules
out. 

% Section 4
\section{Positive Results for Restricted Valuation Classes}
\label{sec:results-value}

Motivated by the hardness results in the previous section for valuations 
$\eps$-close to linear
in the value oracle model, this section considers stronger notions of
closeness. In Section \ref{sub:closeness-measures} we discuss ``marginal''
closeness including the notion  of \emph{$\alpha$-submodularity}. In Section
\ref{sub:greedy-pos} we analyze the greedy algorithm for valuations
that are $\alpha$-submodular in addition to being $\epsilon$-close to
linear.

\subsection{Marginal Closeness}
\label{sub:closeness-measures}

Recall that a pointwise approximation approximates the outputs of a valuation as a discrete function. 
A natural way to strengthen this is to approximate the behavior of the valuation's discrete derivatives, or marginal values. The marginal value $v(j\mid\cdot)$ of every item $j$ can be viewed as a set function, and together these set functions encode important properties of the valuation. 
For example, linear valuations can be characterized as valuations for which the marginal of every item is a constant function; 
submodular valuations can be characterized as valuations for which the marginal of every item is a non-increasing function;%
\footnote{Since for every $j$ and $S\subseteq T$, $v(j\mid S)\ge v(j\mid T)$.} 
and unweighted matroid rank functions can be characterized as valuations for which the marginal of every item is a non-increasing zero-one function \cite{Sch03}. 

We remark that even without any strengthening, pointwise $\epsilon$-closeness has the following useful implication regarding the ``closeness'' of the marginal values of bundles:

\begin{observation}
	\label{obs:marginals_close}
	For every valuation $v$ that is $\epsilon$-close to a valuation $v'$, and for every two bundles $S,T\subseteq M$, 
	\begin{equation}
	v'(S\mid T) - \epsilon v'(T) \le v(S\mid T) \le v'(S\mid T) + \epsilon v'(S\cup T).\label{eq:marginals_close}
	\end{equation}
\end{observation}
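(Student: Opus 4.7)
The plan is to unwind both sides from the definition of the marginal, $v(S\mid T)=v(S\cup T)-v(T)$, and then substitute the two-sided pointwise bound $v'(X)\le v(X)\le(1+\epsilon)v'(X)$ at the right places. The whole statement is really just algebra on these two inequalities, so no combinatorial argument is needed; the only mild subtlety is choosing which direction of the pointwise bound to apply at each occurrence of $v$ so that the resulting error term is expressed in terms of $v'$ rather than $v$.

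For the upper bound on $v(S\mid T)$, I would use the upper pointwise bound on the positive term and the lower pointwise bound on the subtracted term: $v(S\cup T)\le(1+\epsilon)v'(S\cup T)$ and $v(T)\ge v'(T)$. Subtracting gives $v(S\mid T)\le(1+\epsilon)v'(S\cup T)-v'(T)=v'(S\mid T)+\epsilon v'(S\cup T)$, which is exactly the right-hand inequality of \eqref{eq:marginals_close}. For the lower bound I would do the symmetric choice: $v(S\cup T)\ge v'(S\cup T)$ on the positive term and $v(T)\le(1+\epsilon)v'(T)$ on the subtracted term, giving $v(S\mid T)\ge v'(S\cup T)-(1+\epsilon)v'(T)=v'(S\mid T)-\epsilon v'(T)$.

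There is essentially no obstacle here; the only thing worth remarking on in the proof is nonnegativity. Because valuations are monotone and nonnegative, both $v'(T)$ and $v'(S\cup T)$ are nonnegative, so the error terms $\epsilon v'(T)$ and $\epsilon v'(S\cup T)$ genuinely tighten the trivial bound $v'(S\mid T)$ in the right direction (they widen the interval rather than collapse it). I would mention this only in a single clause, since no use is made of submodularity or any other structure of $v'$. Overall the proof should fit in three or four lines.
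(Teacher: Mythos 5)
Your proof is correct and is essentially the paper's own argument: unwind $v(S\mid T)=v(S\cup T)-v(T)$ and apply the two-sided pointwise bound $v'(X)\le v(X)\le(1+\epsilon)v'(X)$ in the appropriate direction to each term. The paper writes out only the lower bound and states that the upper bound ``follows similarly''; your remark on nonnegativity is harmless but not needed.
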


\begin{proof}
	By the definition of $\epsilon$-closeness, $v(S\mid T)=v(S\cup T)-v(T) \ge v'(S\cup T)-(1+\epsilon)v'(T)=v'(S\mid T)-\epsilon v'(T)$.
	The upper bound follows similarly.
	\qed
\end{proof}

Observation \ref{obs:marginals_close} will be useful in proving our positive results in Sections \ref{sub:greedy-pos} and \ref{sub:KC-pos}. Yet the guarantee in \eqref{eq:marginals_close} is relatively weak: the ``error terms'' depend on $v'(T)$ or $v'(S\cup T)$, and so can be large if these terms are large. 
We now define the stronger notion of marginal pointwise approximation:%
\footnote{Equivalent to having erroneous access to the marginal values via an oracle \cite{CCPV11}.} 

\begin{definition}
	\label{def:marginal-close}
	A valuation $v$ is \emph{marginal-$\epsilon$-close} to a class $\mathcal C$ of marginals (constant, non-increasing, etc.), if for every item $j$ the corresponding marginal value function $v(j\mid\cdot)$ is (pointwise) $\epsilon$-close to a function $g_j\in\mathcal C$. 
\end{definition} 

Let us now see what properties follow from this definition. 

\subsubsection{The Property of $\alpha$-Submodularity.}

Consider first the class of valuations that are marginal-$\epsilon$-close to decreasing. 
This coincides with the previously-studied class of $\alpha$-submodular valuations -- a strict super-class of submodular valuations:

\begin{definition}[\cite{LLN06}]
	\label{def:alpha-submod}
	Let $\alpha\ge 1$. 
	A valuation $v$ is \emph{$\alpha$-submodular} if for every two bundles $S\subseteq T$ and item $j\notin T$, $\alpha v(j\mid S)\ge v(j\mid T)$.
\end{definition}

\begin{observation}
	A valuation $v$ is marginal-$\epsilon$-close to the class of decreasing functions if and only if it is $(1+\epsilon)$-submodular.
\end{observation}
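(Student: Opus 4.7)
The statement is an iff, so I would handle the two directions separately, and both are essentially bookkeeping with definitions of pointwise closeness and $(1+\epsilon)$-submodularity.

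For the easy direction, assume $v$ is marginal-$\epsilon$-close to the decreasing functions. Then for each item $j$ there is a non-increasing $g_j$ with $g_j(S) \le v(j\mid S) \le (1+\epsilon)g_j(S)$ for every $S$ not containing $j$. For $S \subseteq T$ with $j \notin T$, monotonicity of $g_j$ gives $g_j(T) \le g_j(S)$, so
\[
v(j\mid T) \;\le\; (1+\epsilon)\,g_j(T) \;\le\; (1+\epsilon)\,g_j(S) \;\le\; (1+\epsilon)\,v(j\mid S),
\]
which is exactly $(1+\epsilon)$-submodularity.

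For the other direction, assume $v$ is $(1+\epsilon)$-submodular; I need to exhibit, for each item $j$, a non-increasing function $g_j$ pointwise $\epsilon$-close to $v(j\mid\cdot)$. The natural candidate is the ``lower envelope''
\[
g_j(S) \;=\; \min\bigl\{\,v(j\mid T) \;:\; T \subseteq S,\ j \notin T\,\bigr\}.
\]
Three checks then finish the proof: (i) $g_j$ is non-increasing, because enlarging $S$ enlarges the set over which the minimum is taken; (ii) $g_j(S) \le v(j\mid S)$ trivially, by choosing $T=S$; and (iii) $v(j\mid S) \le (1+\epsilon)\,g_j(S)$ is precisely the statement that $v(j\mid S) \le (1+\epsilon)\,v(j\mid T)$ for every $T \subseteq S$, which is exactly the $(1+\epsilon)$-submodularity hypothesis applied to the pair $T \subseteq S$.

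There is no real obstacle here; the only ``design choice'' is picking the right witness $g_j$, and the lower-envelope construction makes all three requirements follow from a single line of definition-unpacking. I would present the proof as roughly half a page: one short paragraph per direction, with the definition of $g_j$ in display math in between.
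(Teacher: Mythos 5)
Your proof is correct and follows essentially the same route as the paper: the forward direction is the same chain of inequalities, and for the converse the paper's appendix observation constructs exactly your lower-envelope witness $g_j(S)=\min_{T\subseteq S} v(j\mid T)$, there phrased as picking the largest value in a shrinking feasibility range. Your three-check presentation is a bit more streamlined than the paper's range argument, but the witness function and the underlying idea are identical.
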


\begin{proof}
	For every $S\subseteq T$ and $j\notin T$, let $g_j$ be the decreasing function to which $v(j|\cdot)$ is $\epsilon$-close. Then $(1+\epsilon)v(j|S)\ge (1+\epsilon)g_j(S)\ge (1+\epsilon)g_j(T) \ge v(j|T)$, showing that $v$ is $(1+\epsilon)$-submodular.
	The converse follows from
	Observation \ref{obs:alpha-submod}
	in Appendix \ref{appx:alpha-submod}.
	\qed 
\end{proof}

Lehmann et al. \cite{LLN06} show that valuations which are $\alpha$-submodular have welfare guarantees similar to those of submodular valuations. This is in stark contrast to the hardness of sub-polynomial approximation for valuations that are only pointwise $\epsilon$-close to submodular \cite{HS16}.

Our positive result in Section \ref{sub:greedy-pos} is for $\epsilon$-close to linear valuations that are also $\alpha$-submodular. Note that the submodularity assumption is a standard one in the context of algorithmic game theory (as justified by \cite{LLN06}), and that $\alpha$-submodularity is a strictly weaker assumption.
We next show that $\epsilon$-close to linear, $\alpha$-submodular valuations subsume an interesting subclass. 

\subsubsection{An Interesting Special Case: Bounded Curvature.}

Consider the class of valuations that are marginal-$\epsilon$-close to
the constant (marginal) functions. It is not hard to see that this includes the class of
submodular valuations with \emph{bounded curvature}. 

\begin{definition}[\cite{CC84,SVW15}]
	A valuation $v$ has curvature $c\in[0,1]$ if for every item $j$ and bundle $S$ such that $j\notin S$, $v(j\mid S)\ge (1-c)v(j\mid\emptyset)$.
\end{definition}

The bounded curvature property of submodular functions was introduced by \cite{CC84}, and
has recently received much attention in the optimization and learning
communities (see, e.g., \cite{SDK15} and references within). Sviridenko et al.~\cite{SVW15}
design an approximation algorithm for such valuations with the best-possible approximation factor in
terms of its dependence on the curvature parameter.
We observe the following (the proof appears in Appendix \ref{appx:alpha-submod}): 

\begin{observation}
	\label{obs:curv}
	An $\alpha$-submodular valuation $v$ with curvature $c$ is $\frac{\alpha-1+c}{1-c}$-close to a linear valuation.
\end{observation}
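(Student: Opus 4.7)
The plan is to exhibit an explicit linear valuation $\ell$ and verify the two inequalities required by Definition~\ref{d:close}. The natural candidate, extracted from $v$'s behavior on singletons, is
\[
\ell(S) = v(\emptyset) + (1-c) \sum_{j \in S} v(j \mid \emptyset).
\]
This is linear in the sense of the preliminaries: the constant term $v(\emptyset)$ and the per-item coefficients $(1-c) v(j \mid \emptyset)$ are all non-negative because $v$ is monotone with values in $\RR_+$ and $c \in [0,1]$.

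The key technical step is a telescoping identity applied to an arbitrary enumeration $S = \{j_1, \ldots, j_k\}$:
\[
v(S) = v(\emptyset) + \sum_{i=1}^{k} v\bigl(j_i \,\big|\, \{j_1,\ldots,j_{i-1}\}\bigr).
\]
I would then bound each marginal on both sides: the curvature hypothesis (applied with the prefix $\{j_1,\ldots,j_{i-1}\}$) gives $v(j_i \mid \{j_1,\ldots,j_{i-1}\}) \ge (1-c)\, v(j_i \mid \emptyset)$, while $\alpha$-submodularity (applied to $\emptyset \subseteq \{j_1,\ldots,j_{i-1}\}$) gives $v(j_i \mid \{j_1,\ldots,j_{i-1}\}) \le \alpha\, v(j_i \mid \emptyset)$. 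Summing these marginal estimates yields simultaneously
\[
\ell(S) \;\le\; v(S) \;\le\; v(\emptyset) + \alpha \sum_{j \in S} v(j \mid \emptyset).
\]

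It remains to check that the right-hand side is at most $(1+\epsilon)\ell(S)$ with $\epsilon = (\alpha-1+c)/(1-c)$. This amounts to the two coordinate-wise inequalities $v(\emptyset) \le (1+\epsilon) v(\emptyset)$ (automatic since $v(\emptyset) \ge 0$) and $\alpha \le (1+\epsilon)(1-c)$, and the latter is precisely an equality for the claimed value of $\epsilon$ since $1+\epsilon = \alpha/(1-c)$. There is no real obstacle beyond choosing $\ell$ correctly: the reason to put $v(\emptyset)$ (rather than $0$) as the constant term is exactly so that the upper and lower bounds on $v(S)$ share the same constant term, making the worst-case ratio equal $\alpha/(1-c)$ uniformly across all bundles, including the empty set.
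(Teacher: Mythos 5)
Your proof is correct and takes essentially the same approach as the paper's: both pick a linear surrogate with coefficients $(1-c)\,v(j\mid\emptyset)$, telescope $v(S)$ along marginals, and bound each marginal from below by curvature and from above by $\alpha$-submodularity. The only cosmetic difference is the constant term (you use $v(\emptyset)$, the paper uses $(1-c)\,v(\emptyset)$); both satisfy the required sandwich since $\alpha\geq 1$ and $1-c\leq 1$.
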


Thus the class of $\alpha$-submodular valuations that are $\epsilon$-close to linear contains as a special case of interest $\alpha$-submodular valuations with bounded curvature.

\subsection{The Greedy Algorithm}
\label{sub:greedy-pos}

Recall that the greedy algorithm (Algorithm \ref{alg:greedy}) is optimal for linear valuations.
In this section we prove the following ``robustness'' theorem for this
algorithm; the implication for welfare maximization is stated in Corollary~\ref{cor:greedy-linear}.

\begin{theorem}
	\label{thm:greedy-linear}
	Let $v$ be an $\alpha$-submodular valuation that is $\epsilon$-close to a linear valuation $\ell$. Let $\mathcal{M}=(M,\cI)$ be a matroid of rank $k$ (represented by an independence oracle). 
	The greedy algorithm returns an independent set $S\in\cI$ such that $v(S)\ge \frac{1-3\epsilon}{\alpha}v(S^*)$, where $S^*\in \arg\max_{I\subseteq \mathcal I} v(I)$. 
\end{theorem}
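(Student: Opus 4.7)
The plan is to run the greedy algorithm on $v$, producing $S = \{s_1, \ldots, s_k\}$ with prefixes $S_j = \{s_1, \ldots, s_{j-1}\}$, and to lower-bound $v(S)$ in terms of $v(S^*)$ via the linear surrogate $\ell$. The proof will combine three ingredients: the standard matroid greedy exchange, the $\alpha$-submodular marginal inequality, and Observation~\ref{obs:marginals_close} for converting between $v$-marginals and $\ell$-marginals.

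First I would invoke the matroid exchange property to obtain a bijection pairing each greedy pick $s_j$ with a distinct $t_j \in S^*$ such that $S_j \cup \{t_j\} \in \cI$. Since greedy selects the maximum-$v$-marginal item among those available, this yields $v(s_j \mid S_j) \ge v(t_j \mid S_j)$ at every step. Next I would upper-bound $v(S^*)$ by telescoping along this $t_j$-ordering, writing $v(S^*) = v(\emptyset) + \sum_j v(t_j \mid S^*_{j-1})$, and apply $\alpha$-submodularity (via $\emptyset \subseteq S^*_{j-1}$, or $S_j \subseteq S_j \cup S^*_{j-1}$) to each term so as to bound $v(t_j \mid S^*_{j-1})$ by $\alpha$ times a marginal conditioned on a smaller set. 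Chaining with the greedy inequality then produces a preliminary bound of the form $v(S^*) \le \alpha\, v(S) + (\text{error})$.

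To turn the error into an $O(\epsilon)\cdot v(S^*)$ slack I would apply Observation~\ref{obs:marginals_close} in three distinct places: once globally to get $v(S^*) \le (1+\epsilon)\, \ell(S^*)$, once on the greedy side to bound $v(s_j \mid S_j)$ from above by $\ell_{s_j} + \epsilon\, \ell(S_j \cup \{s_j\})$, and once on the optimum side to bound $v(t_j \mid S_j)$ from below by $\ell_{t_j} - \epsilon\, \ell(S_j)$. The linearity of $\ell$ collapses the per-step sums $\sum_j \ell_{s_j}$ and $\sum_j \ell_{t_j}$ into $\ell(S) - c$ and $\ell(S^*) - c$, so that after cancellation of the shared $\epsilon\, \ell(S_j)$ terms across the two sides the three $\epsilon$-applications contribute a total slack of at most $3\epsilon\, v(S^*)$. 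Combining with the $\alpha$-submodular step, I would aim to derive $v(S^*) \le \alpha\, v(S) + 3\epsilon\, v(S^*)$, which rearranges to the claimed $v(S) \ge \frac{1-3\epsilon}{\alpha} v(S^*)$.

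The main obstacle is precisely this last cancellation. A naive per-step invocation of Observation~\ref{obs:marginals_close} introduces errors of the form $\epsilon\, \ell(S_j)$ at each of the $k$ greedy steps; summing directly gives $\epsilon \sum_j \ell(S_j) \le \epsilon k\, \ell(S)$, which would wreck the $(1-3\epsilon)/\alpha$ bound. The delicate part of the proof is therefore to match up the greedy-side upper bound on $v(s_j \mid S_j)$ with the optimum-side lower bound on $v(t_j \mid S_j)$ so that these $\epsilon\, \ell(S_j)$ error terms cancel after summation, leaving only the three global $\epsilon$-applications already accounted for. Without this interplay between $\alpha$-submodularity and the linear structure of $\ell$, one can only recover the weaker $1/(1+\alpha)$-type bound that arises from the $\alpha$-submodular matroid-greedy analysis alone; exploiting closeness-to-linear is what replaces the $(1+\alpha)$ in the denominator by $\alpha$ at the cost of the $3\epsilon$ slack.
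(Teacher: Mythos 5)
Your proposal correctly identifies the ingredients (matroid exchange bijection, $\alpha$-submodularity, Observation~\ref{obs:marginals_close}) and correctly recognizes the central obstacle — that a per-item application of Observation~\ref{obs:marginals_close} accumulates $k$ error terms of order $\epsilon\,\ell(S_j)$ — but the mechanism you propose to overcome it does not actually work, and it is also not the one the paper uses.

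Two concrete problems. First, the $\alpha$-submodularity step in your telescoping of $v(S^*)=v(\emptyset)+\sum_j v(t_j\mid S^*_{j-1})$ has no valid route to the greedy inequality. To ``chain'' you would need $v(t_j\mid S^*_{j-1})\le\alpha\,v(t_j\mid S_j)$, which requires $S_j\subseteq S^*_{j-1}$; this is false in general. Your alternative $S_j\subseteq S_j\cup S^*_{j-1}$ bounds the wrong quantity, $v(t_j\mid S_j\cup S^*_{j-1})$, which your telescoping does not produce. If you repair this by telescoping $v(S\cup S^*)$ instead (so that $S_j\subseteq S\cup T_{j-1}$), you recover only the $1/(1+\alpha)$-type bound that you yourself note is insufficient, and the closeness-to-linear plays no role. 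Second, the asserted ``cancellation'' of the per-step $\epsilon\,\ell(S_j)$ errors between the greedy-side upper bound and the optimum-side lower bound on $v(s_j\mid S_j)$ does not occur: those are independent sandwich bounds on the same term, and both sides accumulate $k$ separate error contributions when summed.

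What the paper actually does is structurally different and worth internalizing. It never applies Observation~\ref{obs:marginals_close} per item. Instead (Lemma~\ref{lem:bound}), it first uses $\alpha$-submodularity to \emph{enlarge} each conditioning set $S_j$ to a set $X_j\cup Z$ drawn from a nested sequence (prefixes $X_j$ of a fixed ordered $X$, plus a fixed $Z$), paying only the factor $\alpha$; the enlarged $v$-marginals then telescope to a single marginal $v(X\mid Z)$; and only then is Observation~\ref{obs:marginals_close} invoked once, giving a single $\epsilon\,\ell(Z)$ error rather than $k$ of them. Because the lemma requires $X$ and $Z$ disjoint, the paper also splits $S^*$ into the part shared with $S$ (positions $P$) and the rest (positions $Q$) and invokes the lemma twice, with $(X,Z)=(S^*(Q),S)$ and $(X,Z)=(S(P),S(Q))$. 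This is where the total error of $2\epsilon\,\ell(S)$ comes from; the third $\epsilon$ arises at the end when $\ell(S^*)$ is converted back to $v(S^*)$. Your plan is missing both the ``enlarge then telescope then approximate once'' order of operations and the $P/Q$ split that makes the lemma applicable.
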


\begin{corollary}
	\label{cor:greedy-linear}
	In a market with $\alpha$-submodular valuations $v_1,\dots,v_n$ that are $\epsilon$-close to linear valuations $\ell_1,\dots,\ell_n$, there is a polynomial time algorithm with value access that finds an allocation with  $\frac{1-3\epsilon}{\alpha}$-approximately optimal welfare.
\end{corollary}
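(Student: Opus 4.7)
The plan is to reduce welfare-maximization to the single-matroid problem handled by Theorem~\ref{thm:greedy-linear}, via a standard product construction. I would take the ground set $E = M \times N$ of (item, bidder) pairs, equipped with the partition matroid $\mathcal{M}_E$ whose blocks are $\{(j,i) : i \in N\}$, one per item $j \in M$. Independent sets $F \subseteq E$ are exactly (partial) allocations, where bidder $i$'s bundle is $F_i = \{j : (j,i) \in F\}$. Define $V(F) = \sum_{i=1}^n v_i(F_i)$ and, analogously, $L(F) = \sum_{i=1}^n \ell_i(F_i)$; this $L$ is itself a linear function on $E$, with constant term $\sum_i \ell_i(\emptyset)$ and weight on pair $(j,i)$ equal to the $j$-th coefficient of $\ell_i$.

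Next I would verify that $V$ on $\mathcal{M}_E$ satisfies both hypotheses of Theorem~\ref{thm:greedy-linear}. For $\epsilon$-closeness, summing the inequality $\ell_i(F_i) \le v_i(F_i) \le (1+\epsilon)\ell_i(F_i)$ over $i$ yields $L(F) \le V(F) \le (1+\epsilon) L(F)$ for every $F \subseteq E$. For $\alpha$-submodularity on $2^E$, take $F \subseteq F' \subseteq E$ and $e = (j,i) \notin F'$: the marginals decompose as $V(e \mid F) = v_i(j \mid F_i)$ and $V(e \mid F') = v_i(j \mid F'_i)$, and since $F_i \subseteq F'_i$ the $\alpha$-submodularity of $v_i$ gives $\alpha V(e \mid F) \ge V(e \mid F')$.

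Applying Theorem~\ref{thm:greedy-linear} to $V$ on $\mathcal{M}_E$ (whose rank is $m$) then outputs an independent set $F$ with $V(F) \ge \frac{1-3\epsilon}{\alpha} \max_{I \in \cI_E} V(I)$. The right-hand side equals the optimal welfare $\OPT$, because allocations $(S_1,\ldots,S_n)$ correspond bijectively to independent sets of $\mathcal{M}_E$ with $V$-value equal to welfare $W(\mathcal{S})$. Returning the allocation $(F_1,\ldots,F_n)$ therefore achieves welfare at least $\frac{1-3\epsilon}{\alpha}\OPT$. Finally, greedy runs in polynomial time with value access: the partition-matroid independence oracle is trivial, and each of at most $m$ iterations scans $|E| = mn$ elements, each requiring one value query of the form $v_i(j \mid F_i)$ to the corresponding bidder.

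The only real content is checking that the aggregated valuation $V$ inherits both $\alpha$-submodularity and $\epsilon$-closeness-to-linear from its components; this goes through precisely because the marginal of a pair $(j,i)$ under $V$ depends only on $v_i$ and on $F_i$, with no interaction across bidders. Everything else is straightforward bookkeeping about the partition matroid.
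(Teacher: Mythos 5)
Your proof is correct and follows essentially the same route as the paper: both reduce to Theorem~\ref{thm:greedy-linear} by building the aggregate valuation on item--player pairs, checking that it inherits $\epsilon$-closeness to linear and $\alpha$-submodularity from the components, and running greedy on the partition matroid that forbids assigning an item to two players. Your version is a bit more explicit about the marginal decomposition and the running-time accounting, but the argument is the same.
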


\begin{proof}
	As in \cite{LLN06}, we define a valuation $v$ over player-item pairs, such that for a set $S$ of such pairs, $v(S)=\sum_i v_i(S_i)$ where $S_i$ is the set of items paired with player $i$ in $S$.  
	We show that $v$ is also $\alpha$-submodular and $\epsilon$-close to linear:
	It is $\epsilon$-close to linear since
	$\sum_i \ell_i(S_i) \le v(S) \le (1+\epsilon)\sum_i \ell_i(S_i)$, and the sum of linear valuations is linear. It is $\alpha$-submodular since for every $S\subseteq T$ and $(i,j)\notin T$, $\alpha v((i,j)\mid S) = \alpha v_i(j\mid S_i)\ge v_i(j\mid T_i) = v((i,j),T)$.
	The proof follows by applying Theorem \ref{thm:greedy-linear} to $v$ and to the partition matroid $\mathcal M$ over player-item pairs, which allows each item to be paired with no more than one player. 
	\qed
\end{proof}

\subsubsection{Proof Idea.} Before we turn to the proof of Theorem \ref{thm:greedy-linear}, we highlight the difference between our analysis in this proof and the standard analysis of the greedy algorithm. Often the analysis of greedy is item by item (see, e.g., \cite{LLN06}), showing that greedily choosing the next item maintains the optimal value up to a small error. However this does not seem to prove a good approximation factor in our case, because of the distortion caused by the pointwise errors. For example, greedy may choose to include an item $j$ from the optimal solution, but at the particular point it is included, its marginal value may not be high relative to its linear contribution (recall the error term in the marginal value as shown in Observation \ref{obs:marginals_close}). This suggests analyzing multiple items at a time in order for the error terms to average out. Indeed, we split the items chosen by greedy into two sets -- those shared with the optimal solution and the rest -- and analyze each set as a whole in order to establish the claimed approximation ratio.

The following lemma (whose proof appears in Appendix \ref{appx:alpha-submod}) relates the sum of marginals to the linear contribution, and is applied in the proof of Theorem \ref{thm:greedy-linear}. Recall that $v$ is $\epsilon$-close to the linear valuation $\ell$, and that for an ordered set of items $X=(x_1,x_2,\dots,x_k)$, $X_j$ denotes the prefix $\{x_1,\dots x_{j-1}\}$. 

\begin{lemma}
	\label{lem:bound}
	Let $X$ and $Z$ be two disjoint sets. $X$ is ordered and has $m_X$ items.  
	Let $Y_1,\dots,Y_{m_X}$ be sets such that $Y_j\subseteq X_j \cup Z$ for every $j$. Then 
	\begin{equation*}
	\alpha\sum_{j=1}^{m_X}{v(x_j\mid Y_j)}\ge \sum_{j=1}^{m_X}{\ell(x_j)}-\epsilon \ell(Z).
	\end{equation*}
\end{lemma}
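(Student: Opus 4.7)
The plan is to chain three steps: $\alpha$-submodularity, telescoping, and the pointwise bound from $\epsilon$-closeness. First I would apply $\alpha$-submodularity to unify the conditioning sets across all terms. Since $Y_j \subseteq X_j \cup Z$ and $x_j \notin X_j \cup Z$ (using $X \cap Z = \emptyset$ together with the fact that $X_j$ is the prefix of $X$ strictly preceding $x_j$), Definition~\ref{def:alpha-submod} yields
\begin{equation*}
\alpha\, v(x_j \mid Y_j) \;\geq\; v(x_j \mid X_j \cup Z)
\end{equation*}
for every $j$.

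Because $X_1 = \emptyset$ and $X_{j+1} = X_j \cup \{x_j\}$, summing these inequalities and telescoping the right-hand side gives
\begin{equation*}
\alpha \sum_{j=1}^{m_X} v(x_j \mid Y_j) \;\geq\; \sum_{j=1}^{m_X} v(x_j \mid X_j \cup Z) \;=\; v(X \cup Z) - v(Z).
\end{equation*}
The final step is to apply $\epsilon$-closeness (Definition~\ref{d:close}) asymmetrically: $v(X \cup Z) \geq \ell(X \cup Z)$ and $v(Z) \leq (1+\epsilon)\,\ell(Z)$, so
\begin{equation*}
v(X \cup Z) - v(Z) \;\geq\; \ell(X \cup Z) - \ell(Z) - \epsilon\, \ell(Z) \;=\; \sum_{j=1}^{m_X} \ell(x_j) - \epsilon\, \ell(Z),
\end{equation*}
where the last equality uses linearity of $\ell$ together with disjointness of $X$ and $Z$ (the per-item marginals of a linear function sum to the difference $\ell(X \cup Z) - \ell(Z)$).

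I do not expect any real obstacle: the proof is essentially mechanical once one identifies $v(X \cup Z) - v(Z)$ as the right intermediate quantity. It is simultaneously the telescope arising from applying $\alpha$-submodularity with the uniform conditioning set $X_j \cup Z$, and the quantity to which one can apply a lower bound on $v(X \cup Z)$ and an upper bound on $v(Z)$. The only delicate point is this \emph{asymmetric} use of the two-sided pointwise bound, which is precisely what charges the $\epsilon$-error to $\ell(Z)$ (aggregating errors across $Z$) rather than to the larger set $\ell(X \cup Z)$ --- echoing the ``average the errors, don't bound them item by item'' philosophy flagged in the Proof Idea paragraph above.
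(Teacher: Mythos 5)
Your proof is correct and follows essentially the same route as the paper's: apply $\alpha$-submodularity to pass from $Y_j$ to $X_j \cup Z$, telescope the resulting sum to $v(X \mid Z)$, and then use the two-sided pointwise bound asymmetrically to charge the $\epsilon$-error to $\ell(Z)$. The only cosmetic difference is that you inline the short argument the paper packages as Observation~\ref{obs:marginals_close}.
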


\begin{proof}[Theorem \ref{thm:greedy-linear}]
	
	Denote by $S$ the independent set that greedy returns given the valuation $v$ and matroid $\mathcal M$, 
	ordered by the order in which the items were greedily added. 
	Denote by $S^*$ an independent set with optimal value $v(S^*)$ subject to the matroid constraint.
	Due to monotonicity we can assume without loss of generality that the sizes of both $S$ and $S^*$ are $k$, i.e., that both sets are bases.
	
	Let $P=\{j\mid s_j\in S\cap S^*\}$ be the positions in $S$ of the items that also appear in $S^*$, and let $Q$ be the remaining positions in $S$.
	Order $S^*$ such that the items in $S\cap S^*$ are in positions $P$. The rest of the items in $S^*$ are ordered among the remaining positions in $S^*$ in the following way: 
	By the exchange property of matroids, there exists a bijection $f$ from $S\setminus S^*$ to $S^*\setminus S$, such that for every position $j\in Q$, $S\setminus\{s_j\}\cup\{f(s_j)\}$ is independent \cite{Sch03}. For every $j\in Q$ we set $s^*_j=f(s_j)$. This ensures that $s^*_j$ can be added to $S_j$ without violating feasibility.    
	We use the notation $S(P)$ and $S(Q)$ for the set of items in $S$ in positions $P$ and $Q$, respectively; similarly for $S^*(P)$ and $S^*(Q)$.
	
	By the definition of greedy, and because adding $s^*_j$ to $S_j$ maintains feasibility for every $j\in Q$, it holds that $v(s_j\mid S_j)\ge v(s^*_j\mid S_j)$ for every $j\in[k]$. 
	Thus 
	\begin{equation}
	v(S) \ge v(\emptyset) + \sum_{j\in P}{v(s_j\mid S_j)} + \sum_{j\in Q}{v(s^*_j\mid S_j)}.\label{eq:pf}
	\end{equation}
		
	We now invoke Lemma \ref{lem:bound} twice. By Lemma \ref{lem:bound} instantiated with $X=S^*(Q)$ and $Z=S$ (note these are indeed disjoint), and using that $S_j\subseteq S$, we get that 
	\begin{equation}
	\alpha\sum_{j\in Q}{v(s^*_j\mid S_j)} \ge \sum_{j\in Q}{\ell(s^*_j)} - \epsilon \ell(S).\label{eq:pf2} 
	\end{equation}
	By Lemma \ref{lem:bound} instantiated with $X=S(P)$ and $Z=S(Q)$ (these are also disjoint), and using that $S_j\subseteq S(Q)\cup \{s_{p_1},\dots,s_{p_{j-1}} \}$, we get that 
	\begin{equation}
	\alpha\sum_{j\in P}{v(s_j\mid S_j)} \ge \sum_{j\in P}{\ell(s_j)} - \epsilon \ell(S) = \sum_{j\in P}{\ell(s^*_j)} - \epsilon \ell(S).\label{eq:pf3}
	\end{equation}
	
	Combining \eqref{eq:pf}, \eqref{eq:pf2}  and \eqref{eq:pf3}, we get that $\alpha v(S)\ge \alpha v(\emptyset) +\sum_{j=1}^k{\ell(s^*_j)} - 2\epsilon \ell(S)$. Since $v$ is $\epsilon$-close to $\ell$ and $S^*$ is optimal for $v$, $\ell(S)\le v(S)\le v(S^*)$. Again using $v$'s closeness to $\ell$, $\alpha v(\emptyset)\ge \alpha c\ge c$. Putting these together we get that $\alpha v(S)\ge \ell(S^*) -2\epsilon v(S^*)\ge v(S^*)/(1+\epsilon)-2\epsilon v(S^*)\ge(1-3\epsilon)v(S^*)$, as required.
	\qed
\end{proof}

% Section 5
\section{The Demand Oracle Model}
\label{sec:results-demand}

\subsection{Rounding the Configuration LP}

By Lemma \ref{lem:LP-val} and Corollary \ref{cor:LP-val} we already
know that \emph{estimation} of the optimal welfare is possible in this
model in polynomial time via the configuration LP. Another question is
whether we can also find an allocation of corresponding value,
given a fractional solution of the configuration LP.  We observe that in some cases,
existing rounding techniques yield
the result that we want. Note that the following proposition does not assume submodularity.

\begin{proposition}
	For combinatorial auctions in the demand oracle model, there is a polynomial time algorithm that finds:
	\begin{itemize}
		\item[$\bullet$] A $(1-\epsilon)$-approximately optimal allocation, if the valuations are
		$\epsilon$-close to linear;
		\item[$\bullet$] A $(1-1/e-\epsilon)$-approximately optimal allocation, if the valuations are
		$\epsilon$-close to XOS (or in particular $\epsilon$-close to submodular).
	\end{itemize}
\end{proposition}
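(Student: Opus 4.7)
The plan is to solve the configuration LP for $\tilde{\vals}$ using the demand-oracle separation subroutine from Lemma \ref{lem:LP-val}, obtaining a fractional optimum $x^*$, and then round $x^*$ to an integer allocation by a class-specific procedure. The rounding interacts with oracles only for $\tilde{\vals}$, but in the analysis we may reason about the structure of the nearby valuations $\vals$ and transfer the bound back using $v_i(S) \le \tilde{v}_i(S) \le (1+\eps)\,v_i(S)$ for each $i,S$.

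For valuations $\eps$-close to linear, the configuration LP is already integral for the underlying linear profile $\ell$: each item should be awarded to the bidder maximizing its linear coefficient. I would query $\tilde{v}_i(\{j\})$ and $\tilde{v}_i(\emptyset)$ for every $(i,j)$ and assign each item $j$ to the bidder maximizing $\tilde{v}_i(\{j\}) - \tilde{v}_i(\emptyset)$. Since $\ell_i(\{j\}) - \ell_i(\emptyset) = l_{i,j}$ is recovered up to $O(\eps)$ multiplicative distortion by these two singleton queries, combining this estimate with $\ell(S) \le \tilde{v}(S) \le (1+\eps)\ell(S)$ and the optimality of the largest-coefficient assignment rule in the linear case yields welfare at least $(1-\eps)\OPT(\tilde{\vals})$.

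For valuations $\eps$-close to XOS, the plan is to invoke a standard $(1-1/e)$-approximate rounding of the configuration LP for XOS in the demand oracle model: sample a configuration $S_i$ independently for each bidder $i$ from the marginal distribution $x^*_{i,\cdot}$, then resolve item contentions so that each item $j \in S_i$ is retained by bidder $i$ with probability at least $1-1/e$. For truly XOS $v_i$ with additive supporting functions, this guarantees $\E[v_i(T_i) \mid S_i] \ge (1-1/e)\,v_i(S_i)$, where $T_i$ is the bundle eventually assigned to $i$. Summing and using linearity of expectation,
\[
\E\!\left[\sum_i v_i(T_i)\right] \;\ge\; (1-1/e)\sum_{i,S} v_i(S)\, x^*_{i,S} \;\ge\; \frac{1-1/e}{1+\eps}\,\LP(\tilde{\vals}) \;\ge\; \frac{1-1/e}{1+\eps}\,\OPT(\tilde{\vals}),
\]
and since $\tilde{v}_i(T_i) \ge v_i(T_i)$ pointwise, the expected welfare under $\tilde{\vals}$ is at least $(1-1/e-O(\eps))\OPT(\tilde{\vals})$, as claimed (after rescaling $\eps$).

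The main obstacle will be justifying an XOS-specific rounding when the input $\tilde{\vals}$ is not actually XOS. The key point is that the rounding itself only samples from $x^*$ and resolves item contentions, both of which are oblivious to any XOS certificate; the supporting additive functions of the \emph{true} $\vals$ enter only in the analysis, where they are used to bound $\E[v_i(T_i) \mid S_i]$. As long as we employ a rounding that uses only demand queries (not XOS-specific access, such as queries to the supporting additive functions), its execution on $\tilde{\vals}$ is well-defined, and the $(1+\eps)$-pointwise closeness then converts the guarantee from $\vals$ to $\tilde{\vals}$ as above. The same philosophy — run the algorithm on $\tilde{\vals}$, analyze it against $\vals$ — explains why no extra near-submodularity assumption is needed here, in contrast to the value oracle model.
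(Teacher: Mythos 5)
Your XOS case matches the paper's approach and is essentially correct: solve the LP via demand queries, sample a tentative bundle for each bidder, apply contention resolution, and invoke fractional subadditivity of the underlying $v_i$ to get $\E[v_i(R_i)] \geq (1-1/e)\sum_S x^*_{i,S}v_i(S)$, then transfer to $\tilde{v}_i$ by pointwise closeness. Your observation that the rounding interacts only with $\tilde{\vals}$ while the XOS structure of $\vals$ is used only in the analysis is exactly the right way to frame it.

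The linear case, however, has a genuine error. After announcing the LP/rounding plan in the preamble, you abandon it and instead try to recover the coefficients $l_{i,j}$ from two singleton value queries, claiming that $\tilde{v}_i(\{j\}) - \tilde{v}_i(\emptyset)$ recovers $l_{i,j}$ to within $O(\eps)$ multiplicative error. That is false when the offset $c_i = \ell_i(\emptyset)$ is nonzero: the inequalities $\ell_i(S) \le \tilde{v}_i(S) \le (1+\eps)\ell_i(S)$ only yield
\[
l_{i,j} - \eps c_i \;\le\; \tilde{v}_i(\{j\}) - \tilde{v}_i(\emptyset) \;\le\; l_{i,j} + \eps\bigl(c_i + l_{i,j}\bigr),
\]
so the error is additive of order $\eps c_i$ and can dwarf $l_{i,j}$ entirely. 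This is precisely the mechanism exploited in the paper's Proposition~\ref{prop:close-to-linear-hard}: there $\ell_i(S) = \eps + \tfrac{1}{a}|S\cap A_i|$, and for $a \gtrsim 1/\eps^2$ one can make $\tilde{v}_i(\{j\})$ independent of whether $j\in A_i$, so singleton queries reveal nothing. Indeed, your linear algorithm uses only $O(mn)$ value queries, and Proposition~\ref{prop:close-to-linear-hard} rules out any good approximation from subexponentially many value queries; so the approach cannot work. (This is also why the statement of Proposition~\ref{prop:additive} is restricted to valuations close to \emph{additive}, where $c_i=0$ and singleton recovery does work.) The paper's actual linear argument sticks to the demand-oracle plan: solve the configuration LP for $\tilde{\vals}$, then independently assign each item $j$ to bidder $i$ with probability $y_{ij} = \sum_{S\ni j} x^*_{i,S}$. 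For a linear $v_i$ this per-item rounding satisfies $\E[v_i(R_i)] = \sum_S x^*_{i,S}v_i(S)$ exactly (including the constant term, since $\sum_S x^*_{i,S}=1$), and the $(1-\eps)$ guarantee then follows from $\LP(\tilde{\vals}) \le (1+\eps)\LP(\vals)$ and $v_i \le \tilde{v}_i$ --- without ever needing to identify the coefficients $l_{i,j}$.
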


\begin{proof}
	We solve the configuration LP using demand queries. Let $\tilde{v}_i$ denote the true valuations
	and ${v}_i$ the linear/XOS valuations that the $\tilde{v}_i$ are close to. 

	In the case of valuations close to linear, we round the fractional solution by allocating
	each item $j$ to player $i$ with probability $y_{ij} = \sum_{S: j \in S} x_{i,S}$. Call the resulting
	random sets $R_i$. For the underlying linear valuations ${v}_i$, it is clearly the case that
	$\E[{v}_i(R_i)] = \sum_{S} x_{i,S} {v}_i(S)$. Therefore, 
	$$ \sum_i \E[\tilde{v}_i(R_i)] \geq \sum_i \E[v_i(R_i)] =  \sum_{i,S} x_{i,S} {v}_i(S)
	\geq \frac{1}{1+\epsilon} \LP(\tilde{v}_i) \geq (1-\epsilon) \OPT.$$
	
	For valuations close to XOS, we allocate a set $S$ tentatively to player $i$ with probability $x_{i,S}$,
	and then we use the contention resolution technique to resolve conflicts \cite{Fei09,FV10}. 
	This technique has the property that  conditioned on requesting an item $j$, a player receives it with conditional probability at least $1-1/e$.
	Denote by $R_i$ the random set that player $i$ receives after contention resolution.
	Using the fractional subadditivity of XOS functions, we obtain that $\E[v_i(R_i)] \geq (1-1/e) \sum_S x_{i,S} v_i(S)$.
	Hence, similarly to the case above,
	$$ \sum_i \E[\tilde{v}_i(R_i)] \geq \sum_i \E[v_i(R_i)] \geq (1-1/e) \sum_{i,S} x_{i,S} {v}_i(S)
	\geq \frac{1-1/e}{1+\epsilon} \LP(\tilde{v}_i) \geq (1-1/e-\epsilon) \OPT.$$
\end{proof}

\subsection{Positive Results for Kelso-Crawford}
\label{sub:KC-pos}

The Kelso-Crawford algorithm (Algorithm \ref{alg:KC}) uses demand queries in order to find, in each iteration, the items to add to a player's existing bundle.  
In this section we show that it can work well for markets in which the valuations are submodular and $\epsilon$-close to simple subclasses of gross substitutes.
In particular, it finds an allocation and a price vector such that the allocation achieves approximately optimal welfare, and the price vector can be interpreted as approximately stabilizing (in a sense made precise below).  
We can also relax submodularity as long as we maintain approximately decreasing marginals.

\begin{remark}
	For simplicity, our analysis of the Kelso-Crawford algorithm shall treat prices as if raised continuously rather than discretely. This means that the approximation factors we show in this section hold up to a small additive error.\footnote{If the prices are increased by discrete $\delta$ increments, then the additive error is of order $O(mn\delta)$.}
\end{remark}

We begin by establishing that the Kelso-Crawford algorithm works well for the class of $\alpha$-submodular and $\epsilon$-close to linear valuations, as studied above in Section \ref{sec:results-value}. The proof of Theorem~\ref{thm:KC-linear} appears in Section~\ref{sub:KC-proof}.

\begin{theorem}[Linear]
	\label{thm:KC-linear}
	In a market with $\alpha$-submodular valuations that are $\epsilon$-close to linear valuations, the Kelso-Crawford algorithm finds an allocation with $\frac{1}{\alpha+2\epsilon}$-approximately optimal welfare (up to a vanishing error).
\end{theorem}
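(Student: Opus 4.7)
My plan is to adapt the analysis of Theorem~\ref{thm:greedy-linear} to the Kelso-Crawford setting, with two termination invariants of the algorithm playing the role of the matroid exchange argument used for greedy. First I would pin down these invariants: at termination in the continuous-price limit $\delta\to 0$, the allocation $(S_1,\ldots,S_n)$ and price vector $p$ satisfy, up to an $O(\delta mn)$ additive slack (the ``vanishing error'' in the statement): (a) a \emph{no-add} condition, $v_i(S_i\cup U)-v_i(S_i)\le p(U)$ for every player $i$ and every $U\subseteq M\setminus S_i$; and (b) an \emph{acquisition} condition, $v_i(s_j^i\mid T_j^i)\ge p(s_j^i)$ for every $j$, where $s_1^i,s_2^i,\ldots$ is the sequence of items that $i$ (last) acquired and $T_j^i$ is $i$'s bundle immediately before acquiring $s_j^i$. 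For simplicity I would first treat the case where no items ever leave $i$'s bundle once acquired (so that $T_j^i\subseteq\{s_1^i,\ldots,s_{j-1}^i\}$), and handle losses afterwards.

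Next I would invoke Lemma~\ref{lem:bound} twice, paralleling the two applications in the proof of Theorem~\ref{thm:greedy-linear}. The first application, with $X=S_i^*$, $Z=S_i$, and $Y_j=S_i$ for every $j$, combined with the no-add condition (a) (noting $v_i(s_j^*\mid S_i)=0$ whenever $s_j^*\in S_i$), would yield
\[
\alpha\,p(S_i^*\setminus S_i)\ge \ell_i(S_i^*)-c_i-\epsilon\,\ell_i(S_i),
\]
where $c_i=\ell_i(\emptyset)$. The second application, with $X=S_i$ ordered by acquisition time, $Z=\emptyset$, and $Y_j=T_j^i$, combined with the acquisition condition (b) and the telescoping identity $\sum_j v_i(s_j^i\mid T_j^i)= v_i(S_i)-v_i(\emptyset)$, would yield $p(S_i)\le v_i(S_i)-v_i(\emptyset)$. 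Summing the first inequality over $i$, plugging in the second via $\sum_i p(S_i^*\setminus S_i)\le p(M)=\sum_i p(S_i)$, translating back from $\ell_i$ to $v_i$ via pointwise closeness, and using $v_i(\emptyset)\ge c_i$ together with $\alpha\ge 1$ to cancel the $c_i$ terms, produces a bound of the form $\OPT\le(\alpha+2\epsilon)\,\ALG$ up to higher-order terms in $\epsilon$---which is the claimed $\frac{1}{\alpha+2\epsilon}$-approximation.

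The crux---and the main technical obstacle---is obtaining the factor $\alpha+2\epsilon$, which equals exactly $1$ when $\alpha=1$ and $\epsilon=0$ and thereby recovers the Walrasian welfare theorem (in particular, optimality for linear valuations), rather than the weaker $1+\alpha+O(\epsilon)$ that falls out of a naive combination of the two steps. The improvement comes from the cancellation identity $\sum_i p(S_i^*)=\sum_i p(S_i)=p(M)$: the items that player $i$ is ``missing'' relative to $S_i^*$ are precisely the items that other players are ``hoarding'' relative to the optimum, and their prices cancel between the two sides of the comparison. In the exact gross-substitutes (and linear) case this cancellation is automatic via the equilibrium demand condition; here it has to be implemented approximately, and this is exactly where the $\alpha$-submodular hypothesis and $\epsilon$-closeness are used. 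A secondary difficulty is handling items that a player acquires, loses, and possibly reacquires during the run, so that $T_j^i$ is not in general a prefix of $S_i$; this needs either an averaging argument over the history of acquisitions, or an appeal to the monotonicity of prices together with the $\alpha$-submodular bound $v_i(\cdot\mid T_j^i)\le \alpha\,v_i(\cdot\mid \emptyset)$ to absorb the contribution of lost items into the error term.
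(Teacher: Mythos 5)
Your high-level strategy (termination invariants plus price cancellation over a full allocation) is in the right spirit, but it diverges from the paper's route, and the key intermediate inequality you state does not hold as written.

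The paper's actual proof goes through a purpose-built notion of a \emph{$\mu$-biased Walrasian equilibrium} (Definition~\ref{def:WE-approx}), proves an approximate first welfare theorem for it (Proposition~\ref{pro:first-welfare-approx}), and then shows that Kelso-Crawford converges to such an equilibrium (Lemma~\ref{lem:KC-linear}). The heart of the convergence lemma is a single last-round comparison: at player $i$'s last acquisition round his bundle was a utility-maximizing response given his current bundle $B$, so his final utility is at least $u_p(B\cup T)$; decomposing $u_p(B\cup T)$ over $B\setminus T$, $B\cap T$, $T\setminus B$ and applying Lemma~\ref{lem:alpha-strong-IR} (strong $\alpha$-IR of $B$) together with Observation~\ref{obs:marginals_close} gives $v_i(S_i)-p(S_i)\ge v_i(T_i)-p(T_i)-(\alpha-1+2\epsilon)v_i(S_i)$ directly, and summing over $i$ cancels the prices since both allocations are full. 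There is no invocation of Lemma~\ref{lem:bound}, no item-by-item acquisition history, and no telescoping --- those were all greedy-specific devices.

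The concrete gap in your sketch: you claim that the no-add condition plus Lemma~\ref{lem:bound} with $X=S_i^*$, $Z=S_i$ yields $\alpha\,p(S_i^*\setminus S_i)\ge \ell_i(S_i^*)-c_i-\epsilon\,\ell_i(S_i)$. First, Lemma~\ref{lem:bound} requires $X$ and $Z$ disjoint, so $X$ would have to be $S_i^*\setminus S_i$, and then the conclusion only controls $\sum_{x\in S_i^*\setminus S_i}\ell_i(x)$; the linear contribution $\sum_{x\in S_i^*\cap S_i}l_x$ of the items player $i$ holds in both allocations is left unaccounted for on the right-hand side. That contribution can be large, and it is not bounded by $p(S_i^*\setminus S_i)$ in any way. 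If you instead absorb those items into $v_i(S_i)$ via $\sum_{x\in S_i^*\cap S_i}l_x\le\ell_i(S_i)\le v_i(S_i)$, and separately bound $p(M)$ by $\sum_i p(S_i)\le\alpha\sum_i v_i(S_i)$, the two contributions of $v_i(S_i)$ no longer cancel and you land at a factor of roughly $1+\alpha+O(\epsilon)$ rather than $\alpha+2\epsilon$. This is precisely the obstruction you flag as ``the crux,'' but your sketch does not actually overcome it; the cancellation $\sum_i p(S_i^*)=\sum_i p(S_i)$ by itself is not enough without the paper's utility-level comparison, in which the price of $S_i\cap T_i$ appears on both sides of the same inequality and cancels before summation. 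A secondary issue: the acquisition condition (b) as stated omits an $\alpha$ factor (the demand oracle returns a bundle, so the marginal of $s_j^i$ is only bounded against $p(s_j^i)$ \emph{given the rest of the acquired bundle}, and $\alpha$-submodularity is needed to move back to the prefix), and the telescoping identity $\sum_j v_i(s_j^i\mid T_j^i)=v_i(S_i)-v_i(\emptyset)$ fails whenever items are lost and reacquired; the paper avoids both difficulties entirely by never unrolling the acquisition history.
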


The positive result in Theorem~\ref{thm:KC-linear} does not extend even to $\epsilon$-close to unit-demand (and still $\alpha$-submodular) valuations -- see Proposition \ref{pro:KC-neg} in Appendix~\ref{appx:KC-neg} for a counter-example. We consider instead the \emph{unweighted} version of unit-demand valuations, and more generally the direct sums of such valuations -- called \emph{unweighted transversal} valuations -- for which we establish that Kelso-Crawford maintains good welfare guarantees. 

A valuation $r$ is an unweighted unit-demand valuation if $v(S)=\max_{j\in S}r(j)$ for every bundle $S$ and $r(j)\in\{0,1\}$ for every item~$j$. 
A valuation $r$ is an unweighted transversal valuation if there exists a partition $\mathcal P=(P_1,\dots,P_k)$ of the items such that $v(S)=\sum_{P\in\mathcal P}\max_{j\in P}r(j)$ for every bundle $S$, and $r(j)\in\{0,1\}$ for every item $j$. An equivalent definition in terms of matchings in graphs is the following: $r$ is an unweighted transversal valuation if it can be represented by a bipartite graph whose vertices on one side of the graph all have degree 1, and whose edges $M$ have $\{0,1\}$ edge weights. The value for a bundle of edges $S$ is the weight of the maximum matching in the bipartite graph induced by $S$. See Figure \ref{fig:transversal} for an example. Such valuations arise in natural economic environments, e.g.~in the context of labor markets.%
\footnote{Consider for example a firm wishing to hire a team of several workers, each with a different specialization, from a pool of specialist workers who are each either acceptable to the firm or not.}

The proof of Theorem \ref{thm:KC-transversal} appears in Section~\ref{sub:KC-proof}.

\begin{figure}
	\caption{A graphical representation of an unweighted transversal valuation $r$. The edges correspond to items and have $\{0,1\}$ weights; they are partitioned in this case into two parts $P_1,P_2$. The value that $r$ attributes to the subset of solid bold edges in this example is 1, since this is the weight of the maximum-weight matching within the subset.}
	\label{fig:transversal}
	\centerline{\includegraphics[width=3in]{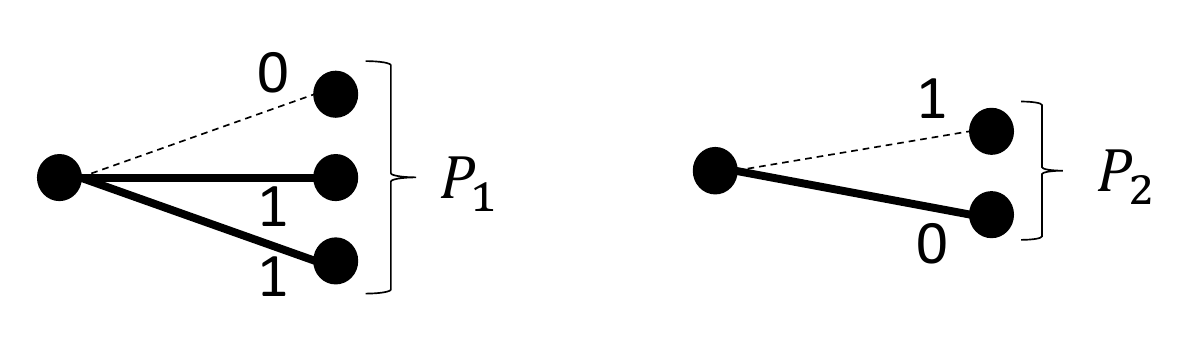}}
\end{figure}

\begin{theorem}[Transversal]
	\label{thm:KC-transversal}
	In a market with $\alpha$-submodular valuations that are $\epsilon$-close to unweighted transversal valuations, the Kelso-Crawford algorithm finds an allocation with $\frac{1}{\alpha(1+3\epsilon)^2}$-approximately optimal welfare (up to a vanishing error).
\end{theorem}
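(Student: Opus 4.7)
The plan is to follow the template of the proof of Theorem~\ref{thm:KC-linear}, adapting it from the additive structure of linear valuations to the partition-matroid structure of unweighted transversal valuations. Let $v_i$ denote the true $\alpha$-submodular valuations and $r_i$ the underlying unweighted transversal valuations with partitions $\{P^i_l\}_l$ and $\{0,1\}$-weights; let $S_i$ and $p$ be the final allocation and price vector at termination of Kelso-Crawford (in the $\delta\to 0$ limit that absorbs the vanishing additive error). The two structural conditions I would extract from the algorithm are: (i) the demand-stop condition $v_i(D\mid S_i)\le p(D)$ for every player $i$ and every bundle $D\subseteq M\setminus S_i$; and (ii) an acquisition bound $p(j)\le \alpha\, v_i(j\mid S_i\setminus\{j\})$ for each $j\in S_i$, obtained by tracing back to the last moment $j$ was demanded by $i$ and invoking $\alpha$-submodularity.

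Next I would perform the matroid accounting. Let $T_i$ be a welfare-optimal allocation, so $\mathrm{OPT} = \sum_i v_i(T_i) \le (1+\epsilon)\sum_i r_i(T_i)$. Since $r_i$ is the rank function of the partition matroid restricted to active items, $r_i(T_i)$ counts the parts $l$ for which $T_i\cap P^i_l$ contains an active item; for each such part, choose a representative $a(i,l)\in T_i\cap P^i_l$. Split these into the parts already ``covered'' by $S_i$ (whose contribution of $1$ is already reflected in $r_i(S_i)$) and the ``missing'' set $U_i$ of parts with $a(i,l)\notin S_i$, for which $r_i(a(i,l)\mid S_i)=1$. The bundle $R_i := \{a(i,l) : l\in U_i\}$ satisfies $R_i\subseteq T_i\setminus S_i$ and $r_i(R_i\mid S_i) = |U_i|$, so Observation~\ref{obs:marginals_close} together with $r_i(S_i)\le v_i(S_i)$ gives $v_i(R_i\mid S_i) \ge |U_i| - \epsilon\, v_i(S_i)$. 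Applying condition (i) to the bundle $R_i$ yields $\sum_{l\in U_i} p(a(i,l)) \ge |U_i| - \epsilon\, v_i(S_i)$, and summing across players (noting that the $R_i$ are pairwise disjoint as subsets of the partition $T$) produces $\sum_i (r_i(T_i)-r_i(S_i)) \le \sum_i\sum_{l\in U_i} p(a(i,l)) + \epsilon\sum_i v_i(S_i)$.

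The remaining step, and the one I expect to be the main obstacle, is to bound the total representative price $\sum_i\sum_{l\in U_i} p(a(i,l))$ by $\alpha\cdot W$ (up to an $O(\epsilon)$ factor), where $W=\sum_i v_i(S_i)$ is the achieved welfare. Since the representatives are distinct items, this reduces to bounding $p(j)$ for each such $j$ in terms of some player's valuation contribution to $W$: items currently allocated use condition (ii) and telescope via $\alpha$-submodularity into $v_i(S_i)$; items currently unallocated are handled by invoking the last player who demanded them, for whom the price is again bounded by a marginal value that translates through $\alpha$-submodularity. The delicate point is that items may migrate between players during the run of Kelso-Crawford, so the acquisition bound must be applied at the correct historical moment to be consistent with the final $v_i(S_i)$, and for approximately-GS valuations the usual monotone-acquisition argument requires extra care. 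Combining this with the previous paragraph and tracking the compounding $(1+\epsilon)$ slacks, namely from $\mathrm{OPT}\le(1+\epsilon)\sum r_i(T_i)$, from Observation~\ref{obs:marginals_close}, and from $r_i(S_i)\le v_i(S_i)$, yields the claimed $\frac{1}{\alpha(1+3\epsilon)^2}$ guarantee.
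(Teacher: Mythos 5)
Your plan cannot yield the claimed factor, and the gap is exactly where you flagged it. Your accounting chain
$\OPT \le (1+\epsilon)\sum_i r_i(T_i) \le (1+\epsilon)\bigl(\sum_i r_i(S_i) + \sum_i p(R_i) + \epsilon\sum_i v_i(S_i)\bigr)$
is correct, but the only handle you have on $\sum_i p(R_i)$ is $\alpha$-IR applied to the players currently holding the items $\bigcup_i R_i$, which gives $\sum_i p(R_i)\le p(M)\le \alpha W$. Plugging this in yields $W\ge \frac{1}{(1+\epsilon)(1+\alpha+\epsilon)}\OPT$, i.e.\ roughly a $\frac{1}{1+\alpha}$-approximation -- this is essentially Corollary~\ref{cor:KC-alpha}, which holds for all $\alpha$-submodular valuations without any transversal structure. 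To reach $\frac{1}{\alpha(1+3\epsilon)^2}$ (which tends to $1$ rather than $1/2$ as $\alpha\to1,\epsilon\to0$), the residual price term must \emph{cancel} rather than be bounded by $\alpha W$. The paper achieves this by proving, per player, a biased-Walrasian inequality of the form of~\eqref{eq:WE-approx}, in which $p(S_i)$ enters with coefficient $1$ on the left and $p(T_i)$ with coefficient $1$ on the right; summing over players and using $p(\mathcal{S})=p(\mathcal{T})$ for full allocations makes every price disappear. Getting these coefficients to be $1$ is exactly the content of Claim~\ref{cla:price-relation}, $p(S'_i)\le p(T'_i)+3\alpha\epsilon\,r_i(S'_i)$, a per-part comparison between the items $i$ holds and the items $i$ would hold in $T_i$ from the \emph{same} parts of $i$'s partition. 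Your decomposition only ever touches parts represented in $T_i$ but not $S_i$; you never compare prices within doubly-represented parts, so there is no mechanism for the cancellation.

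Two smaller issues. First, your acquisition bound (ii), $p(j)\le\alpha\,v_i(j\mid S_i\setminus\{j\})$, is not justified as stated: the bundle player $i$ held when it last acquired $j$ need not be contained in $S_i\setminus\{j\}$ (items can leave later), so $\alpha$-submodularity does not transfer the historical marginal bound to the final bundle in the direction you need. The paper instead maintains the invariant of \emph{strong} $\alpha$-IR throughout the run (Lemma~\ref{lem:alpha-strong-IR}), and in the proof of Claim~\ref{cla:price-relation} orders the items of a part by last-acquisition time and treats $s_1$ and $s_2,s_3,\dots$ separately, with a dedicated swap argument for $s_1$ and a telescoping-via-$\alpha$-submodularity argument for the rest. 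Second, your approach never exploits the $\{0,1\}$-weight structure beyond counting parts; the paper's Claim~\ref{cla:stronger} (``once a part is represented, any further items from that part have marginal value at most $\alpha\epsilon$'') is essential for pinning down both $p(s_1)$ and $\sum_{j\ge2}p(s_j)$, and is precisely what distinguishes unweighted transversal valuations from the general unit-demand case where the theorem fails (Proposition~\ref{pro:KC-neg}).
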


\subsection{Proofs of Positive Results for Kelso-Crawford}
\label{sub:KC-proof}

The proofs of Theorems \ref{thm:KC-linear} and \ref{thm:KC-transversal} follow directly from the welfare guarantee of a ``biased'' Walrasian equilibrium (Definition \ref{def:WE-approx} and Proposition \ref{pro:first-welfare-approx}), together with an appropriate lemma showing that Kelso-Crawford converges to such an equilibrium (the lemma for Theorem \ref{thm:KC-linear} is Lemma \ref{lem:KC-linear} -- see Appendix \ref{appx:KC-linear}; the lemma for Theorem \ref{thm:KC-transversal} is Lemma \ref{lem:KC-transversal} -- see Appendix \ref{appx:KC-transversal}). We first describe the idea of a ``biased'' equilibrium, and then give a high-level overview of how we analyze the performance and convergence of the Kelso-Crawford algorithm. 

\subsubsection{Biased Walrasian equilibrium.}

Recall that the standard proof establishing the optimality (up to a small error) of Kelso-Crawford for gross substitutes relies on showing convergence to a Walrasian equilibrium. 
Unfortunately in our case it does not hold -- even approximately -- that Kelso-Crawford allocates to each player a bundle in his demand. Instead,
our proofs will define and utilize a different notion of an approximate Walrasian equilibrium. 

\begin{definition}
	\label{def:WE-approx}
	Consider a market with $n$ players, and let $\mu\in[0,1]$.
	A full allocation $\mathcal S$ and a price vector $p$ form a \emph{$\mu$-biased Walrasian equilibrium} if there exists $\mu'\in[\mu,1]$ such that for every alternative allocation $T$ and for every player $i$,
	\begin{equation}
	\frac{\mu'}{\mu} v_i(S_i)-p(S_i)\ge \mu' v_i(T_i)-p(T_i).\label{eq:WE-approx}
	\end{equation} 
\end{definition}

\begin{proposition}[Approximate first welfare theorem.]
	\label{pro:first-welfare-approx}
	The welfare of a $\mu$-biased Walrasian equilibrium is a $\mu$-approximation to the optimal welfare.
\end{proposition}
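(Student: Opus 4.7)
The plan is to imitate the classical first welfare theorem: apply the equilibrium inequality to each player with a welfare-maximizing alternative and sum, exploiting that the price terms nearly cancel because $\mathcal{S}$ is a full allocation.

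Concretely, let $T = (T_1,\ldots,T_n)$ be any allocation, which by monotonicity of the valuations I may assume to be full without loss of generality (unallocated items can be distributed arbitrarily without decreasing welfare). By hypothesis, there is some $\mu' \in [\mu,1]$ such that for every player $i$,
\[
\frac{\mu'}{\mu}\, v_i(S_i) - p(S_i) \;\ge\; \mu'\, v_i(T_i) - p(T_i).
\]
Summing this inequality over all $i \in [n]$ gives
\[
\frac{\mu'}{\mu} \sum_{i=1}^{n} v_i(S_i) \;-\; \sum_{i=1}^{n} p(S_i) \;\ge\; \mu' \sum_{i=1}^{n} v_i(T_i) \;-\; \sum_{i=1}^{n} p(T_i).
\]
Since $\mathcal{S}$ and $T$ are both full allocations, each side contains $\sum_{j\in M} p(j) = p(M)$ as the price term, and these cancel. (Even without assuming $T$ is full, nonnegativity of $p$ gives $\sum_i p(T_i) \le p(M) = \sum_i p(S_i)$, which goes the right way.)

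After this cancellation the inequality reduces to $\frac{\mu'}{\mu}\sum_i v_i(S_i) \ge \mu' \sum_i v_i(T_i)$, and since $\mu' > 0$ I can divide through to obtain
\[
\sum_{i=1}^{n} v_i(S_i) \;\ge\; \mu \sum_{i=1}^{n} v_i(T_i),
\]
i.e.\ $W(\mathcal{S}) \ge \mu \, W(T)$. Taking $T$ to be a welfare-maximizing allocation yields the claim.

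There is no real obstacle here; the only point worth being careful about is the direction of the price inequality. It is crucial that the ``bias'' factor $\mu'/\mu \ge 1$ sits on the equilibrium side and $\mu' \le 1$ on the deviation side, so that after summing the net price contribution on the equilibrium side is at least as large as on the alternative side; this is exactly what guarantees the cancellation works in our favor. The argument uses only linearity of the sum and that $\mathcal{S}$ covers all items, so no structural assumption on the valuations is needed.
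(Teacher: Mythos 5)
Your proof is correct and follows essentially the same route as the paper's: sum the biased-equilibrium inequality over all players, cancel the price terms using the fact that both allocations are full, then divide by $\mu'$ to conclude $W(\mathcal{S}) \ge \mu\,\OPT$. The small remark that nonnegative prices make the cancellation go the right way even if $T$ is not full is a harmless extra observation; otherwise the arguments coincide.
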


\begin{proof}
	Let $\mathcal{S}^*$ be an optimal allocation. Without loss of generality we can assume that $S^*$ is a full allocation.
	By summing up Inequality \eqref{eq:WE-approx} over all players, we get $(\mu'/\mu)W(\mathcal S)-p(\mathcal S) \ge \mu' W(\mathcal{S}^*)-p(\mathcal{S}^*)$.
	Since both $\mathcal{S},\mathcal{S}^*$ are full allocations, $p(\mathcal{S})=p(\mathcal{S}^*)$. We conclude that $W(\mathcal{S}) \ge \mu\OPT$.
	\qed
\end{proof}

A possible economic interpretation of a $\mu$-biased Walrasian equilibrium is that its allocation and prices induce market stability under the \emph{endowment effect} (see, e.g., \cite{KKT90}). This effect is modeled as an increase of factor $\mu'/\mu$ in a player's value for the bundle he owns, and a decrease of factor $\mu'$ in his value for bundles he does not own.

\subsubsection{Proof Idea}
To give intuition for the proofs of Lemmas \ref{lem:KC-linear} and \ref{lem:KC-transversal} (applied in order to prove Theorems \ref{thm:KC-linear} and \ref{thm:KC-transversal}, respectively),
let us compare the case of linear valuations to that of $\epsilon$-close to linear (and $\alpha$-submodular) valuations. The analysis of Kelso-Crawford for the linear case is simple -- every player ends up with the items for which he has the highest marginal value and can afford to pay the highest price. In the close-to-linear case, an item could end up belonging to the wrong player for two reasons: at some point, given his current allocation (which is subject to changes), either (1) the right player has a low marginal value for the item, or (2) the wrong player has a high marginal value for it. The latter issue is resolved by submodularity, but the former could drive Kelso-Crawford to performance that is bounded away from optimal, as demonstrated in the bad example in Appendix~\ref{appx:KC-neg}. 
Thus the crux of the proofs is to show this cannot  happen for the classes of valuations in question. 

For the close-to-linear case, we achieve this by considering the marginal value of all ``under-valued'' items together, if we were to add them to the player's final allocation (see proof of Lemma \ref{lem:KC-linear}). The transversal case is more involved since we cannot analyze only the addition of items to the player's final allocation -- we need to also consider swapping out items from the final allocation and replacing them with others (see proof of Lemma \ref{lem:KC-transversal}).

% Section 6
\section{Conclusions and Open Questions}
\label{sec:open-questions}

Let us summarize the findings of this paper: The robustness of results for various classes of valuations should not be assumed without further investigation. The default assumption should be that positive results are {\em not} robust and may break down abruptly when small deviations from the class in question are introduced. Robust results can be attained, but they are surprisingly challenging to obtain even in simple cases. Most importantly, it matters how ``closeness" to a class is defined, and what the other attributes of the variant of the problem are (oracle model, class of valuations).

We wish to conclude by stating what we believe to be the main open problems in this area:
\begin{itemize}
	\item[$\bullet$] Is it possible to obtain a $(1-O(\epsilon))$-approximation to optimal social welfare in the value oracle model for submodular valuations that are $\epsilon$-close to gross substitutes?
	\item[$\bullet$] Is it possible to obtain a $(1-O(\epsilon))$-approximation to optimal social welfare in the demand oracle model for (non-submodular) valuations $\epsilon$-close to gross substitutes?
	\item[$\bullet$] Is there any other natural notion of being close to a gross substitutes valuation that captures interesting examples and allows a $(1-O(\epsilon))$-approximation?
\end{itemize}

% Bibliography
\bibliographystyle{splncs03}
\bibliography{abb,bibfile}

\begin{thebibliography}{10}
\providecommand{\url}[1]{\texttt{#1}}
\providecommand{\urlprefix}{URL }

\bibitem{AM02}
Ausubel, L., Milgrom, P.R.: Ascending auctions with package bidding. Frontiers
  of Theoretical Economics  1(1),  article 1 (2002)

\bibitem{ACMM97}
Ausubel, L.M., Cramton, P., McAfee, R.P., McMillan, J.: Synergies in wireless
  telephony: Evidence from the braodband {PCS} auctions. Journal of Economics
  and Management Strategy  6(3),  497--527 (1997)

\bibitem{BH11}
Balcan, M.F., Harvey, N.J.A.: Learning submodular functions. In: Proceedings of
  the 43rd Annual ACM Symposium on Theory of Computing. pp. 793--802 (2011)

\bibitem{BRS16}
Balkanski, E., Rubinstein, A., Singer, Y.: The limitations of optimization from
  samples (2016), working paper

\bibitem{BLNR15}
Belloni, A., Liang, T., Narayanan, H., Rakhlin, A.: Escaping the local minima
  via simulated annealing: Optimization of approximately convex functions. In:
  Proceedings of the 28th Annual Conference on Learning Theory. pp. 240--265
  (2015)

\bibitem{BM97}
Bikhchandani, S., Mamer, J.W.: Competitive equilibrium in an exchange economy
  with indivisibilities. Journal of Economic Theory  74(2),  385--413 (1997)

\bibitem{BN07}
Blumrosen, L., Nisan, N.: Combinatorial auctions. In: Nisan, N., Roughgarden,
  T., Tardos, E., Vazirani, V.V. (eds.) Algorithmic Game Theory, chap.~11.
  Cambridge University Press (2007)

\bibitem{BCL00}
Bykowsky, M.M., Cull, R.J., Ledyard, J.O.: Mutually destructive bidding: The
  {FCC} auction design problem. Journal of Regulatory Economics  17(3),
  205--228 (2000)

\bibitem{CCPV11}
Calinescu, G., Chekuri, C., P\'al, M., Vondr\'ak, J.: Maximizing a monotone
  submodular function subject to a matroid constraint. SIAM Journal on
  Computing  40(6),  1740--1766 (2011)

\bibitem{CDDK15}
Chierichetti, F., Das, A., Dasgupta, A., Kumar, R.: Approximate modularity. In:
  Proceedings of the 56th Symposium on Foundations of Computer Science. pp.
  1143--1162 (2015)

\bibitem{CC84}
Conforti, M., Cornu\'ejols, G.: Submodular set functions, matroids and the
  greedy algorithm: Tight worst-case bounds and some generalizations of the
  {R}ado-{E}dmonds theorem. Discrete Applied Mathematics  7(3),  251--274
  (1984)

\bibitem{Fei09}
Feige, U.: On maximizing welfare when utility functions are subadditive. SIAM
  J. Comput.  39(1),  122--142 (2009)

\bibitem{FFI+14}
Feige, U., Feldman, M., Immorlica, N., Izsak, R., Lucier, B., Syrgkanis, V.: A
  unifying hierarchy of valuations with complements and substitutes. In:
  Proceedings of the 29th AAAI Conference on Artificial Intelligence. pp.
  872--878 (2014)

\bibitem{FV10}
Feige, U., Vondr\'ak, J.: The submodular welfare problem with demand queries.
  Theory of Computing  6(1),  247--290 (2010)

\bibitem{FNW78}
Fisher, M.L., Nemhauser, G.L., Wolsey, L.A.: An analysis of approximations for
  maximizing submodular set functions -- {II}. Mathematical Programming Study
  8,  73--87 (1978)

\bibitem{FKL12}
Fu, H., Kleinberg, R., Lavi, R.: Conditional equilibrium outcomes via ascending
  price processes with applications to combinatorial auctions with item
  bidding. In: Proceedings of the 13th ACM Conference on Economics and
  Computation. p. 586 (2012), extended abstract

\bibitem{GS99}
Gul, F., Stacchetti, E.: Walrasian equilibrium with gross substitutes. Journal
  of Economic Theory  87,  95--124 (1999)

\bibitem{HS16}
Hassidim, A., Singer, Y.: Submodular optimization under noise (2016),
  manuscript

\bibitem{HMR+16}
Hsu, J., Morgenstern, J., Rogers, R., Roth, A., Vohra, R.: Do prices coordinate
  markets? (2016), to appear in STOC 2016

\bibitem{KKT90}
Kahneman, D., Knetsch, J.L., Thaler, R.H.: Experimental tests of the endowment
  effect and the {C}oase theorem. Journal of Political Economy  98(6),
  1325–--1348 (1990)

\bibitem{KD07}
Karande, C., Devanur, N.R.: Computing market equilibrium: Beyond weak gross
  substitutes. In: Proceedings of the 3rd International Workshop on Internet
  and Network Economics. pp. 368--373 (2007)

\bibitem{KelsoCrawford1982}
Kelso, A., Crawford, V.: Job matching, coalition formation, and gross
  substitutes. Econometrica  50(6),  1483--1504 (1982)

\bibitem{KLMM08}
Khot, S., Lipton, R.J., Markakis, E., Mehta, A.: Inapproximability results for
  combinatorial auctions with submodular utility functions. Algorithmica
  52(1),  3--18 (2008)

\bibitem{KC10}
Krause, A., Cevher, V.: Submodular dictionary selection for sparse
  representation. In: Proceedings of the 27th International Conference on
  Machine Learning. pp. 567--574 (2010)

\bibitem{LLN06}
Lehmann, B., Lehmann, D., Nisan, N.: Combinatorial auctions with decreasing
  marginal utilities. Games and Economic Behavior  55,  270--296 (2006)

\bibitem{MM15}
Maehara, T., Murota, K.: Valuated matroid-based algorithm for submodular
  welfare problem. Annals of Operations Research  229,  565–--590 (2015)

\bibitem{Mil00}
Milgrom, P.R.: Putting auction theory to work: The simultaneous ascending
  auction. Journal of Political Economy  108(2),  245--272 (2000)

\bibitem{Mil15}
Milgrom, P.R.: The substitution metric and the performance of clock auctions
  (2015), talk at Simons Insitute, available at
  \url{https://simons.berkeley.edu/talks/paul-milgrom-10-13}

\bibitem{MSV08}
Mirrokni, V.S., Schapira, M., , Vondr\'ak, J.: Tight information-theoretic
  lower bounds for welfare maximization in combinatorial auctions. In:
  Proceedings of the 9th ACM Conference on Economics and Computation. pp.
  70--77 (2008)

\bibitem{Mur96a}
Murota, K.: Valuated matroid intersection {I}: Optimality criteria. SIAM J.
  Discrete Math.  9(4),  545--561 (1996)

\bibitem{Mur96b}
Murota, K.: Valuated matroid intersection {II}: Algorithms. SIAM J. Discrete
  Math.  9(4),  562--576 (1996)

\bibitem{NS06}
Nisan, N., Segal, I.: The communication requirements of efficient allocations
  and supporting prices. Journal of Economic Theory  129,  192--224 (2006)

\bibitem{Ox-92}
Oxley, J.G.: Matroid Theory. Oxford (1992)

\bibitem{Pae14}
{Paes Leme}, R.: Gross substitutability: An algorithmic survey (2014), working
  paper

\bibitem{PW16}
{Paes Leme}, R., Wong, S.C.: Computing walrasian equilibria: Fast algorithms
  and economic insights (2016), working paper

\bibitem{PSS08}
Papadimitriou, C.H., Schapira, M., Singer, Y.: On the hardness of being
  truthful. In: Proceedings of the 49th Symposium on Foundations of Computer
  Science. pp. 250--259 (2008)

\bibitem{PS00}
Papadimitriou, C.H., Steiglitz, K.: Combinatorial Optimization: Algorithms and
  Complexity. Dover Publications (2000)

\bibitem{Sch03}
Schrijver, A.: Combinatorial Optimziation: Polyhedra and Efficiency. Springer
  (2003)

\bibitem{SDK15}
Sharma, D., Deshpande, A., Kapoor, A.: On greedy maximization of entropy. In:
  Proceedings of the 32nd International Conference on Machine Learning. pp.
  1330--1338 (2015)

\bibitem{SV15}
Singer, Y., Vondr\'ak, J.: Information-theoretic lower bounds for convex
  optimization with erroneous oracles. In: Proceedings of the 28th Neural
  Information Processing Systems Conference (2015)

\bibitem{SVW15}
Sviridenko, M., Vondr\'ak, J., Ward, J.: Optimal approximation for submodular
  and supermodular optimization with bounded curvature. In: Proceedings of the
  26th Annual ACM-SIAM Symposium on Discrete Algorithms. pp. 1134--1148 (2015)

\bibitem{Voh11}
Vohra, R.V.: Mechanism Design: A Linear Programming Approach. Econometric
  Society Monographs (2011)

\bibitem{Von08}
Vondr\'ak, J.: Optimal approximation for the submodular welfare problem in the
  value oracle model. In: Proceedings of the 40th Annual ACM Symposium on
  Theory of Computing. pp. 67--74 (2008)

\end{thebibliography}

% Appendix
\appendix
\section*{APPENDIX}
\setcounter{section}{0}
\section{Standard Algorithms for Welfare-Maximization}
\label{appx:welfare-algos}

Algorithms \ref{alg:KC} and \ref{alg:greedy} describe the ascending price and the greedy approaches to welfare maximization. 

\begin{algorithm}
	\SetAlgoNoLine
	\KwIn{Player valuations $v_1,\dots,v_n$ represented by demand oracles; a parameter $\delta>0$}
	\KwOut{An allocation $\mathcal S$ and a price vector $p$}
	$p := 0$ and $\mathcal S := \emptyset$; \quad\% Initialization
	
	\While{there exists a player $i$ and a non-empty bundle $D_i$ such that $D_i$ is in demand given prices $p+\delta \vec{\mathbbm{1}}_{j\notin S_i}$ and current allocation $S_i$,}{
		$S_i := S_i \cup D_i$;
		
		$S_{i'} := S_{i'}\setminus D_{i}$ for every $i'\ne i$;
		
		$p(j) := p(j) + \delta$ for every $j\in D_i$;
	}
	
	\caption{The Kelso-Crawford ascending-price auction (formulated as an algorithm)}
	\label{alg:KC}
\end{algorithm}

\begin{algorithm}
	\SetAlgoNoLine
	\KwIn{A valuation $v$ represented by a value oracle; a feasibility constraint represented by a feasiblity oracle}
	\KwOut{A bundle $S$}
	$S := \emptyset$; \quad\% Initialization
	
	\While{there exists an item $j\notin S$ such that $S\cup\{j\}$ is feasible}{
		Let $j^*$ be an item that maximizes $v(j^*\mid S)$ among all items $j\notin S$ such that $S\cup\{j\}$ is feasible;
		
		$S := S\cup\{j^*\}$;
	}
	
	\caption{Greedy maximization of value subject to a feasibility constraint}
	\label{alg:greedy}
\end{algorithm}

\section{$\alpha$-Submodularity}
\label{appx:alpha-submod}

\begin{observation}
	\label{obs:alpha-submod}
	If a valuation $v$ is $\alpha$-submodular then it is marginal-$\epsilon$-close for $\epsilon=\alpha-1$ to the class of decreasing functions.
\end{observation}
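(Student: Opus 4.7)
The plan is to exhibit, for each item $j$, an explicit non-increasing set function $g_j$ that pointwise sandwiches the marginal $v(j\mid\cdot)$ within a multiplicative factor of $\alpha$. The interesting bundles are those with $j\notin S$ (on bundles containing $j$ the marginal is zero, and we simply set $g_j(S)=0$ there, which preserves monotonicity since $g_j$ is non-negative elsewhere). The candidate I would use is the monotone lower envelope of the marginal:
\[
g_j(S) \;:=\; \min_{T\,\subseteq\, S} v(j\mid T), \qquad S\subseteq M\setminus\{j\}.
\]

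With this candidate in hand, the proof reduces to three short verifications, to be carried out in order. First, $g_j$ is non-increasing: if $S\subseteq S'$ then the family $\{T : T\subseteq S\}$ is contained in $\{T : T\subseteq S'\}$, so the defining minimum for $g_j(S')$ is taken over a larger family, yielding $g_j(S')\le g_j(S)$. Second, the lower bound $g_j(S)\le v(j\mid S)$ holds by taking $T=S$ in the minimum. Third, the upper bound $v(j\mid S)\le \alpha\,g_j(S)$ is the only step that uses the hypothesis: $\alpha$-submodularity applied to each pair $T\subseteq S$ with $j\notin S$ gives $\alpha\,v(j\mid T)\ge v(j\mid S)$, and minimizing the left side over $T\subseteq S$ yields $\alpha\,g_j(S)\ge v(j\mid S)$.

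Combining the last two inequalities gives $g_j(S)\le v(j\mid S)\le (1+\epsilon)\,g_j(S)$ with $\epsilon=\alpha-1$, which is precisely the definition of $v(j\mid\cdot)$ being pointwise $\epsilon$-close to the non-increasing function $g_j$; doing this for every $j$ establishes the claim. I do not anticipate any real obstacle. The one design decision is the choice of $g_j$, and the monotone lower envelope is essentially forced: any non-increasing $g$ with $g\le v(j\mid\cdot)$ must satisfy $g(S)\le g(T)\le v(j\mid T)$ for every $T\subseteq S$, hence $g\le g_j$, so $g_j$ is the tightest such envelope and the natural candidate to test against the $\alpha$-multiplicative upper bound.
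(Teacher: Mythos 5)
Your proof is correct and matches the paper's approach: the paper's sketch also takes $g_j(S)$ to be the largest value consistent with monotonicity and the pointwise lower bound, which is precisely the monotone lower envelope $\min_{T\subseteq S} v(j\mid T)$ you write down explicitly, with $\alpha$-submodularity supplying the needed lower bound $g_j(S)\ge \frac{1}{1+\epsilon}v(j\mid S)$. Your version is a cleaner writeup of the same argument.
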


\begin{proof}[Sketch]
	Fix an item $j$. We want to show that the set function $v(j\mid \cdot)$ is $\epsilon$-close to a decreasing function $g_j$. We define $g_j$ as follows. For every bundle $S$, there is a range $[\frac{1}{1+\epsilon}v(j\mid S),v(j\mid S)]$ to which $g_j(S)$ must belong. Furthermore, for $g_j$ to be decreasing, for every $S$ and every subset $T\subseteq S$, the upper-bound on the range of $g_j(T)$ is also an upper-bound on the range of $g_j(S)$. We claim that: (1) If there is a non-zero range for $g_j(S)$ for every bundle $S$, then by picking the highest value in the range to be $g_j(S)$, we have defined a decreasing $g_j$ such that $v(j\mid \cdot)$ is $\epsilon$-close to it; (2) If there is a bundle $S$ with negative range then we have found a contradiction to $\alpha$-submodularity of $v$. Indeed, if this is the case then we have found a subset $T$ such that $v(j\mid T)<v(j\mid S)/1+\epsilon=v(j\mid S)/\alpha$. This concludes the proof. 
	\qed
\end{proof}

\begin{proof}[of Observation \ref{obs:curv}]
	Define a linear valuation $\ell$ as follows: $\ell(S)=(1-c)(v(\emptyset) + \sum_{j=1}^{|S|}v(s_j\mid \emptyset))$ for every bundle $S$. Using that $v$ has curvature $c$, $v(S) = v(\emptyset) + \sum_{j=1}^{|S|}v(s_j|S_j) \ge (1-c)( v(\emptyset)+\sum_{j=1}^{|S|}v(s_j|\emptyset))=\ell(S)$. Using that $v$ is $\alpha$-submodular, $v(S) = v(\emptyset) + \sum_{j=1}^{|S|}v(s_j|S_j) \le v(\emptyset) + \alpha\sum_{j=1}^{|S|}v(s_j|\emptyset) \le \frac{\alpha}{1-c}\ell(S)$. Thus $v$ is $\epsilon$-close to $\ell$ for $\epsilon$ such that $1+\epsilon=\frac{\alpha}{1-c}$.
	\qed
\end{proof}

\begin{proof}
	\begin{eqnarray}
	\alpha\sum_{j=1}^{m_X}{v(x_j\mid Y_j)} &\ge& \sum_{j=1}^{m_X}{v(x_j\mid Z\cup X_j)}\label{eq:by_submod} \\		
	&=& v(X\mid Z)\label{eq:by_disjoint}\\
	&\le& \ell(X\mid Z) - \epsilon\ell(Z)\label{eq:by_lemma}, 
	\end{eqnarray}
	where \eqref{eq:by_submod} follows since $Y_j\subseteq Z\cup X_j$ and since $v$ is $\alpha$-submodular, \eqref{eq:by_disjoint} follows by the disjointness of $X$ and $Z$, and \eqref{eq:by_lemma} follows since $v$ is $\epsilon$-close to $\ell$ and by Observation \ref{obs:marginals_close}.
	\qed
\end{proof}

\section{Proofs for Kelso-Crawford}
\label{appx:KC}

In this appendix we present several results and missing proofs related to the Kelso-Crawford algorithm. 

\subsection{Kelso-Crawford and $\alpha$-Submodular Valuations}

We generalize a well-known invariant of Kelso-Crawford for submodular valuations \cite{FKL12} to $\alpha$-submodular valuations. This lemma will be used in the next sections to prove Lemmas \ref{lem:KC-linear} and~\ref{lem:KC-transversal}.

\begin{definition}
	Let $v$ be a valuation and $p$ be a price vector. A bundle $S$ is \emph{$\alpha$-IR} ($\alpha$-individually rational) for $v$ given $p$ if $\alpha v(S)\ge p(S)$. A bundle $S$ is \emph{strongly $\alpha$-IR} for $v$ given $p$ if $\alpha v(T)\ge p(T)$ for every $T\subseteq S$. 
\end{definition}

\begin{lemma}
	\label{lem:alpha-strong-IR}
	Consider a market with a player whose valuation $v$ is $\alpha$-submodular. Then the Kelso-Crawford algorithm maintains the following invariant: the player's allocation throughout the algorithm is strongly $\alpha$-IR.
\end{lemma}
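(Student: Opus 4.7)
The plan is to prove the invariant by induction on the rounds of Algorithm~\ref{alg:KC}, extending the standard submodular-case argument (cf.~\cite{FKL12}) to the $\alpha$-submodular setting. The base case is immediate: the empty initial allocation satisfies $\alpha v(\emptyset) \geq 0 = p(\emptyset)$.

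For the inductive step, the substantive case is when the player in question acquires a new bundle. Suppose her current allocation $S$ is strongly $\alpha$-IR under prices $p$, and in this round she adds a bundle $D$ that lies in her demand given the adjusted prices $p' := p + \delta\,\vec{\mathbbm{1}}_{j \notin S}$. Then her new allocation is $S \cup D$ and the updated prices $\hat{p}$ agree with $p$ on $S$ and equal $p + \delta$ on $D$; in particular $\hat{p}(D) = p'(D)$. I need to show that for every $T \subseteq S \cup D$, $\alpha v(T) \geq \hat{p}(T)$. Decompose $T = T_S \sqcup T_D$ with $T_S := T \cap S$ and $T_D := T \cap D$, and combine three ingredients.

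First, the inductive hypothesis applied to $T_S \subseteq S$ yields $\alpha v(T_S) \geq p(T_S) = \hat{p}(T_S)$. Second, since $D$ maximizes the marginal utility $v(\cdot \mid S) - p'(\cdot)$ over bundles disjoint from $S$, testing the deviation $D \setminus T_D$ gives $v(D \mid S) - p'(D) \geq v(D \setminus T_D \mid S) - p'(D \setminus T_D)$, which rearranges to $v(T_D \mid S \cup (D \setminus T_D)) \geq p'(T_D) = \hat{p}(T_D)$. Third, I will use a set-version of $\alpha$-submodularity: for any $A \subseteq B$ and $X$ disjoint from $B$, $\alpha v(X \mid A) \geq v(X \mid B)$. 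This follows from Definition~\ref{def:alpha-submod} by telescoping item-by-item marginals along an arbitrary ordering of $X$. Applying it with $A = T_S$, $B = S \cup (D \setminus T_D)$, and $X = T_D$ (noting $T_S \subseteq B$ since $T_S \subseteq S$) gives $\alpha v(T_D \mid T_S) \geq v(T_D \mid S \cup (D \setminus T_D)) \geq \hat{p}(T_D)$. Summing the first and third bounds yields $\alpha v(T) = \alpha v(T_S) + \alpha v(T_D \mid T_S) \geq \hat{p}(T_S) + \hat{p}(T_D) = \hat{p}(T)$, as required.

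In the only other way the player's allocation can change -- when a different player acquires a bundle $D'$ -- her allocation shrinks to $S \setminus D'$ while the prices of her retained items do not change, so strong $\alpha$-IR is preserved because the property is closed under taking subsets. I expect the only subtle point is the set-version of $\alpha$-submodularity, which is a routine telescoping but needs to be stated explicitly; the remaining care in the proof goes into lining up the adjusted prices $p'$ used in the demand query with the updated prices $\hat{p}$ on the newly acquired items.
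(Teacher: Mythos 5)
Your proof is correct and takes essentially the same approach as the paper's: both induct over rounds, decompose an arbitrary $T \subseteq S \cup D$ into its old part ($T_S$) and new part ($T_D$), bound the old part via the inductive hypothesis, bound the marginal of the new part via the optimality of the demanded bundle (testing the deviation that drops $T_D$), and glue the two with a set-version of $\alpha$-submodularity. The paper phrases this as a proof by contradiction (``if $S' \cup T'$ violated $\alpha$-IR, then $D \setminus T'$ would beat $D$ in the demand query'') whereas you argue directly, and you spell out the telescoping lemma and the $p$/$p'$/$\hat p$ bookkeeping a bit more explicitly, but these are presentational differences rather than a genuinely different route.
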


\begin{proof}
	Since removing items from the player's allocation maintains the strong $\alpha$-IR property, the only case we need to check is when a bundle $T$ is added to the player's current allocation $S$. We know that $T$ maximizes the player's utility given $S$ and the current price vector $p$. We show this by induction. In the base case all prices are $0$ and so the invariant holds. Now assume for contradiction that $S\cup T$ is not strongly $\alpha$-IR, i.e., there exists a set $S'\cup T'$ where $S'\subseteq S$ and $T'\subseteq T$ such that $\alpha v(S'\cup T')<p(S')+p(T')$. By the induction assumption, $\alpha v(S')\ge p(S')$, and so it must be the case that $\alpha v(T'\mid S')<p(T')$. So using $\alpha$-submodularity we can write the utility $v(T\mid S)-p(T)$ as $v(T\setminus T'\mid S)+v(T'\mid S\cup T\setminus T')-p(T\setminus T')-p(T') \le v(T\setminus T'\mid S)+\alpha v(T'\mid S')-p(T\setminus T')-p(T') < v(T\setminus T'\mid S) -p(T\setminus T')$. Thus adding $T\setminus T'$ to $S$ adds more to the utility than adding $T$ to $S$, contradiction.
	\qed
\end{proof}

A corollary of Lemma \ref{lem:alpha-strong-IR} that generalizes a result of \cite{FKL12} is the following. 

\begin{corollary}
	\label{cor:KC-alpha}
	In a market with $\alpha$-submodular valuations, the Kelso-Crawford algorithm finds an allocation with $(1+\alpha)$-approximately optimal welfare (up to a vanishing error).
\end{corollary}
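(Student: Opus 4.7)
\textbf{Proof proposal for Corollary~\ref{cor:KC-alpha}.}
The plan is to mimic the standard analysis of Kelso--Crawford for submodular valuations, with the stronger IR invariant from Lemma~\ref{lem:alpha-strong-IR} replacing ordinary IR. Let $\mathcal{S} = (S_1,\dots,S_n)$ and $p$ denote the allocation and price vector at termination, and let $\mathcal{S}^\ast = (S_1^\ast,\dots,S_n^\ast)$ be a welfare-optimal allocation, which we may take to be full. I would treat $\delta$ as an error parameter that will be driven to $0$ at the end, tracking terms of order $\delta m n$ as additive slack.

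The first step is to extract two clean consequences of termination and of the invariant. From the termination condition in Algorithm~\ref{alg:KC}, for every player $i$ and every non-empty bundle $D \subseteq M \setminus S_i$,
\begin{equation*}
v_i(D \mid S_i) \;\le\; p(D) + \delta\,|D|,
\end{equation*}
since otherwise the while-loop would not have exited. From Lemma~\ref{lem:alpha-strong-IR} applied to $S_i$ itself, we have
\begin{equation*}
p(S_i) \;\le\; \alpha\, v_i(S_i).
\end{equation*}

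The second step bounds each player's optimal value by the current value plus the price of the ``missing'' items. Applying the termination inequality to $D := S_i^\ast \setminus S_i$ and using monotonicity,
\begin{equation*}
v_i(S_i^\ast) \;\le\; v_i(S_i \cup S_i^\ast) \;=\; v_i(S_i) + v_i(S_i^\ast \setminus S_i \mid S_i) \;\le\; v_i(S_i) + p(S_i^\ast \setminus S_i) + \delta m.
\end{equation*}

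The third step sums this over $i$ and absorbs prices on the optimal allocation into prices on the Kelso--Crawford allocation. Since $\mathcal{S}^\ast$ is a partition of $M$ and any item never included in some $S_i$ during the run still has price $0$, we have $\sum_i p(S_i^\ast \setminus S_i) \le \sum_i p(S_i^\ast) \le p(M) = \sum_i p(S_i)$. Combining with the invariant gives
\begin{equation*}
\OPT \;=\; \sum_i v_i(S_i^\ast) \;\le\; W(\mathcal{S}) + \sum_i p(S_i) + \delta m n \;\le\; (1+\alpha)\, W(\mathcal{S}) + \delta m n,
\end{equation*}
so $W(\mathcal{S}) \ge \frac{1}{1+\alpha}\OPT - O(\delta m n)$, which is the claimed $(1+\alpha)$-approximation up to the vanishing additive error as $\delta \to 0$.

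The only step that requires care is the handling of the termination condition, because Algorithm~\ref{alg:KC} charges the raised prices $p(j)+\delta$ only on items outside $S_i$, so one must verify that the inequality $v_i(D \mid S_i) \le p(D) + \delta|D|$ really does hold for every $D \subseteq M \setminus S_i$ (not just singletons); this is the usual observation that otherwise the demand oracle would have returned a profitable bundle. Everything else is bookkeeping, and the rest of the heavy lifting has already been done in Lemma~\ref{lem:alpha-strong-IR}.
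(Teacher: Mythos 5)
Your proof is correct and follows precisely the argument the paper intends (the paper states the corollary as an immediate consequence of Lemma~\ref{lem:alpha-strong-IR} generalizing the submodular case of \cite{FKL12}, without spelling out the details). You correctly combine the termination condition $v_i(D\mid S_i)\le p(D)+\delta|D|$ for all $D\subseteq M\setminus S_i$, the strong $\alpha$-IR invariant giving $p(S_i)\le\alpha v_i(S_i)$, and the accounting $\sum_i p(S_i^\ast\setminus S_i)\le p(M)=\sum_i p(S_i)$ (which is valid since any item with positive price is allocated), yielding $\OPT\le(1+\alpha)W(\mathcal S)+O(\delta mn)$.
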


\subsection{Kelso-Crawford and Close to Linear Valuations}
\label{appx:KC-linear}

\begin{lemma}
	\label{lem:KC-linear}
	In a market with $\alpha$-submodular valuations $v_1,\dots,v_n$ that are $\epsilon$-close to linear valuations $r_1,\dots,r_n$, the Kelso-Crawford algorithm converges to a $\frac{1}{\alpha+2\epsilon}$-biased Walrasian equilibrium.
\end{lemma}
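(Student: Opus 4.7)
The plan is to establish directly that the final state $(\mathcal{S}, p)$ produced by Kelso-Crawford is a $\frac{1}{\alpha+2\epsilon}$-biased Walrasian equilibrium in the sense of Definition~\ref{def:WE-approx}; Theorem~\ref{thm:KC-linear} then follows from Proposition~\ref{pro:first-welfare-approx}. Fix a player $i$ and an arbitrary alternative bundle $T_i\subseteq M$. I would decompose $T_i = A\cup B$ with $A=T_i\cap S_i$ and $B=T_i\setminus S_i$, and let $C=S_i\setminus T_i$. Two properties of the terminal configuration are essential. First, by Lemma~\ref{lem:alpha-strong-IR}, $S_i$ is strongly $\alpha$-IR, so $p(U)\le \alpha\,v_i(U)$ for every $U\subseteq S_i$. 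Second, the termination condition of Algorithm~\ref{alg:KC} (sending $\delta\to 0$) forbids player $i$ from wanting any non-empty $D\subseteq M\setminus S_i$, giving $v_i(D\mid S_i)\le p(D)$.

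The central step---realizing the paper's proof-idea of ``considering under-valued items together''---is to upper-bound $v_i(T_i)$ through the linear proxy $r_i$, treating $B$ as a single bundle. Applying Observation~\ref{obs:marginals_close} to the pair $(B, S_i)$ converts the termination bound into an aggregate bound on the linear weights of $B$:
\[
\sum_{j\in B}\rho_{ij}\;=\;r_i(B\mid S_i)\;\le\;v_i(B\mid S_i)+\epsilon\,r_i(S_i)\;\le\;p(B)+\epsilon\,v_i(S_i).
\]
Combining this with pointwise closeness $v_i(T_i)\le(1+\epsilon)\,r_i(T_i)=(1+\epsilon)\bigl(r_i(A)+\sum_{j\in B}\rho_{ij}\bigr)$ and the monotonicity of $r_i$ (which gives $r_i(A)\le r_i(S_i)\le v_i(S_i)$) yields an upper bound of the form
\[
v_i(T_i)\;\le\;\bigl(1+O(\epsilon)\bigr)\,v_i(S_i)\;+\;(1+\epsilon)\,p(B).
\]
Plugging this into the biased WE inequality with $\mu=\frac{1}{\alpha+2\epsilon}$ and a suitably chosen $\mu'\in[\mu,1]$ (on the order of $1/\alpha$, so that the coefficient of $p(B)$ in $\mu'v_i(T_i)-p(T_i)$ becomes non-positive and drops out), the condition reduces to a scalar inequality relating $v_i(S_i)$ and $p(C)$; the latter is controlled by strong $\alpha$-IR via $p(C)\le \alpha\,v_i(S_i)$, and the claimed rate $\frac{1}{\alpha+2\epsilon}$ falls out by a short arithmetic balancing.

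The main obstacle is to see why one \emph{cannot} attack $v_i(T_i)=v_i(A)+v_i(B\mid A)$ by bounding $v_i(B\mid A)$ against the marginal $v_i(B\mid S_i)$: submodularity says marginals \emph{decrease} with the base set, so $\alpha$-submodularity pushes the inequality in the wrong direction, $v_i(B\mid A)\ge v_i(B\mid S_i)/\alpha$, and offers no useful upper bound on $v_i(B\mid A)$. The detour through $r_i$ is thus forced, and within that detour the crucial gain is applying Observation~\ref{obs:marginals_close} to the whole bundle $B$ at once rather than to individual items, which keeps the additive error at a single $\epsilon\,v_i(S_i)$ instead of blowing it up by a factor of $|B|$. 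The remaining delicacy is to choose $\mu'$ so that the $(1+\epsilon)^2$ factor produced by the double use of closeness and the $\alpha$ coming from strong $\alpha$-IR on $p(C)$ combine to give exactly $\alpha+2\epsilon$ in the denominator.
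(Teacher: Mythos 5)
Your high-level plan (biased Walrasian equilibrium via Proposition~\ref{pro:first-welfare-approx} plus a convergence lemma) matches the paper, your decomposition $A=T_i\cap S_i$, $B=T_i\setminus S_i$, $C=S_i\setminus T_i$ is the same one the paper uses (under different labels), and your key move of applying Observation~\ref{obs:marginals_close} to the entire bundle $T_i\setminus S_i$ at once is exactly the paper's central trick. The structural ingredients (termination condition, strong $\alpha$-IR from Lemma~\ref{lem:alpha-strong-IR}, detour through the linear proxy) are all correct.

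However, the ``short arithmetic balancing'' you defer to does not in fact yield $\frac{1}{\alpha+2\epsilon}$. The obstruction is that you charge two separate quantities against the whole $v_i(S_i)$: you bound $r_i(A)\le r_i(S_i)\le v_i(S_i)$, and independently you bound $p(C)\le\alpha v_i(C)\le\alpha v_i(S_i)$. Carrying this through, with any feasible $\mu'\le\frac{1}{1+\epsilon}$ the condition $\frac{\mu'}{\mu}v_i(S_i)-p(C)\ge\mu'v_i(T_i)-p(B)$ reduces to $\frac{\mu'}{\mu}\ge\alpha+\mu'(1+\epsilon)^2$, which forces $\mu\le\frac{1}{(1+\epsilon)(\alpha+1+\epsilon)}\approx\frac{1}{\alpha+1}$ rather than $\frac{1}{\alpha+2\epsilon}$ --- off by roughly a factor of two when $\alpha$ is near $1$. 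The bounds on $r_i(A)$ and on $p(C)$ are not independent (if $A$ is large, $C$ is small), and the paper's argument is organized precisely so as to exploit this. The paper lower-bounds $v_i(S_i)-p(S_i)$ by $u_p(S_i\cup T_i)$ and expands $v_i(S_i\cup T_i) = v_i(\emptyset)+v_i(C\mid\emptyset)+v_i(A\mid C)+v_i(B\mid S_i)$; the marginal-closeness observation is applied only to the last two terms, so $v_i(C\mid\emptyset)-p(C)$ is kept as a matched pair and costs just $(\alpha-1)v_i(C\mid\emptyset)$ via strong $\alpha$-IR, while $v_i(\emptyset)$ absorbs the constant $c$ of the linear proxy when reassembling $\ell(T_i)$. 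In your framing you would need to use the linear identity $r_i(A)+\sum_{j\in C}\rho_{ij}=r_i(S_i)$ to bound $\mu'(1+\epsilon)r_i(A)+p(C)$ \emph{jointly} (combining $p(C)\le\alpha\,v_i(C\mid\emptyset)$ with $v_i(C\mid\emptyset)\le(1+\epsilon)r_i(S_i)-r_i(A)+O(\epsilon)v_i(S_i)$), rather than bounding each piece against $v_i(S_i)$ separately; the former recovers the claimed rate, the latter does not.
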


\begin{proof}
	As in the proof of Theorem \ref{thm:KC-transversal}, we can assume without loss of generality that Kelso-Crawford returns a full allocation. 
	From now until the end of the proof, fix a player $i$.  
	Let $S_i$ be player $i$'s allocation and let $p$ be the price vector at termination of the KC algorithm. 
	Let $T_i$ be an alternative bundle for player $i$. We show that 
	\begin{eqnarray}
	v_i(S_i)-p(S_i)\ge v_i(T_i)-p(T_i)-\left(2\epsilon +\alpha - 1\right) v_i(S_i).\label{eq:KC-single-player}	
	\end{eqnarray}
	This is sufficient to complete the proof, since by summing up over all players and rearranging, we get 
	\begin{eqnarray*}
		\left(2\epsilon+\alpha\right)\sum_iv_i(S_i)-\sum_ip(S_i)\ge \sum_i v_i(T_i)-\sum_i p(T_i),
	\end{eqnarray*}
	and so $\mu\ge 1/(\alpha+2\epsilon)$.
	
	It remains to prove Inequality \eqref{eq:KC-single-player}. For simplicity we omit $i$ from the notation. Without loss of generality, assume that in the last round of Kelso-Crawford, the player added to his existing bundle, which we denote by $B$, a bundle which maximizes his utility given $B$ and given the price vector $p$. So the player's utility at termination $v_i(S_i)-p(S_i)$ is at least $u_p(B\cup T)$. We use the notation $B'=B\setminus T$, $T'=T\setminus B$, and $C=B\cap T$. We can now write the lower bound on the player's utility as
	\begin{equation}
	u_p(B\cup T)=v(\emptyset)+v(B'\mid\emptyset)+v(C\mid B')+v(T'\mid B)-p(B')-p(C)-p(T').\label{eq:utility}
	\end{equation}
	By Lemma \ref{lem:alpha-strong-IR}, $B$ is $\alpha$-strongly IR. So 
	\begin{equation}
	v(B'\mid\emptyset) + (\alpha-1) v(B'\mid\emptyset)\ge p(B').\label{eq:IR}
	\end{equation}
	By Observation~\ref{obs:marginals_close}, 
	\begin{equation}
	v(C\mid B')\ge \sum_{j\in C}\ell(j) - \epsilon v(B).\label{eq:app1}
	\end{equation}
	Again by Observation~\ref{obs:marginals_close}, 
	\begin{equation}
	v(T'\mid B) \ge \sum_{j\in T'}\ell(j) - \epsilon v(B).\label{eq:app2}
	\end{equation}
	Plugging Inequalities \eqref{eq:IR}-\eqref{eq:app2} into \eqref{eq:utility} and using $v$'s $\epsilon$-closeness to $\ell$ we get
	\begin{eqnarray*}
		u_p(B\cup T) &\ge& c -(\alpha-1)v(B'\mid \emptyset) + \sum_{j\in T}\ell(j) - p(T) - 2\epsilon v(B)\\
		&\ge & \ell(T)-p(T)- \left(2\epsilon+\alpha-1\right) v(B).
	\end{eqnarray*}
	Inequality \eqref{eq:KC-single-player} follows, and this completes the proof.
	\qed
\end{proof}

\subsection{Kelso-Crawford and Close to Transversal Valuations}
\label{appx:KC-transversal}

\begin{lemma}
	\label{lem:KC-transversal}
	In a market with $\alpha$-submodular valuations $v_1,\dots,v_n$ that are $\epsilon$-close to unweighted transversal valuations $r_1,\dots,r_n$, the Kelso-Crawford algorithm converges to a $\frac{1}{\alpha(1+3\epsilon)^2}$-biased Walrasian equilibrium.
\end{lemma}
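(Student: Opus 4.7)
The plan is to verify the $\mu$-biased Walrasian equilibrium condition of Definition~\ref{def:WE-approx} for the allocation and prices returned by Kelso-Crawford, with $\mu = 1/(\alpha(1+3\epsilon)^2)$ and an appropriate $\mu' \in [\mu,1]$. Following the template used in the proof of Lemma~\ref{lem:KC-linear}, we fix a player $i$ (dropping subscripts throughout) and rely on two invariants that hold at KC termination: (i)~the \emph{no-demand} property, which yields $v(F \mid S) \le p(F)$ for every $F \subseteq M \setminus S$ (up to the vanishing $\delta$ error); and (ii)~the $\alpha$-strong IR property established in Lemma~\ref{lem:alpha-strong-IR}, which gives $\alpha\, v(S') \ge p(S')$ for every $S' \subseteq S$.

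The transversal structure provides a partition $\mathcal{P}$ of the items. Let $P^S = \{P \in \mathcal{P} : r(S \cap P) = 1\}$ and analogously $P^T$, and decompose the parts into $G = P^T \cap P^S$, $U = P^T \setminus P^S$, and $V = P^S \setminus P^T$, so that $r(S) = |G|+|V|$ and $r(T) = |G|+|U|$. A preliminary reduction (at the cost of only an $\epsilon$-factor that we absorb into $(1+3\epsilon)^2$) is to assume without loss of generality that $T$ contains exactly one good adversary-item per part in $P^T$, so that extraneous items of $T$ whose marginal $r$-value is zero cannot artificially inflate $v(T)$. The first substantive step is to apply no-demand to the bundle $X = \{t_P : P \in U\}$ of one good item per part in $U$; each such item lies outside $S$ by the very definition of $U$. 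The additivity of $r$ across parts gives $r(X \mid S) = |U|$, and Observation~\ref{obs:marginals_close} then yields $v(X \mid S) \ge |U| - \epsilon\, r(S)$, so that
\[
p(T \setminus S) \;\ge\; p(X) \;\ge\; |U| - \epsilon\, r(S).
\]
Combined with the pointwise closeness bound $v(T) \le (1+\epsilon)(|G|+|U|)$, this already produces an upper bound on $v(T) - p(T)$ linear in $|G|,|U|$, and $r(S)$.

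The swap-style step, absent from the close-to-linear case, handles the items in $S$ that cover parts of $V$ (parts good for $S$ but not for $T$): conceptually, the analysis compares $S$ to a hypothetical ``swap'' bundle obtained by dropping these items and acquiring the good items from $U$, using $\alpha$-strong IR to control what is dropped and no-demand to control what is added. Concretely, $\alpha$-strong IR caps the total price of the items in $S \setminus T$ by $\alpha\, v(S \setminus T) \le \alpha\, v(S)$, while closeness credits them with $r$-value $|V|$ in $v(S)$. Putting all the pieces together and choosing $\mu' = 1/(1+3\epsilon)^2$ (so that $\mu'/\mu = \alpha$ and the bias factor on $v(T)$ precisely absorbs the $(1+\epsilon)^2$-worth of slack generated by pointwise closeness on the two sides) yields the claimed constant. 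The main obstacle is handling two regimes uniformly: when $|U|$ is large, the $p(T \setminus S) \ge |U|-\epsilon r(S)$ bound dominates the adversary's utility; when $|U|$ is small (in particular $|U|=0$), the comparison essentially reduces to the parts in $G$ and $V$ and one must exploit, in addition to $\alpha$-strong IR, the small positive gap between $p(S)$ and $\alpha v(S)$ guaranteed by each demand step of Kelso-Crawford, which is precisely where the ``vanishing error'' in the statement enters.
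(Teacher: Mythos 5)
Your decomposition into parts $G = P^S \cap P^T$, $U = P^T \setminus P^S$, $V = P^S \setminus P^T$ is a reasonable reorganization of the paper's partition (the paper's $S'$ and $T'$ are exactly the items of $S$ and $T$ whose parts lie in $G$), and the no-demand bound on the $U$-items plus $\alpha$-strong IR on the remaining $S$-items are indeed two of the three ingredients. But there is a genuine gap where you close the ``swap-style step'' by invoking $\alpha$-strong IR to bound $p(S\setminus T) \le \alpha\, v(S\setminus T)$: this does not control the prices of the $S$-items covering parts in $G$, and those prices are the crux of the transversal case. To see the gap concretely, take $\alpha=1$, $\epsilon=0$, and a part $P \in G$ represented in $S$ by an item $s$ and in $T$ by a different item $t$. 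Both bundles get $r$-value $1$ from $P$; termination-time no-demand on $t$ gives only $v(t\mid S)\le p(t)$, i.e.\ $0 \le p(t)$ (vacuous, since $P$ is already represented in $S$), and strong IR gives only $p(s)\le 1$. Nothing relates $p(s)$ to $p(t)$. The Walrasian condition $v(S)-p(S)\ge v(T)-p(T)$ can then fail by up to $p(s)-p(t)$ per such part, and the bias factor---which scales like $\epsilon + (\alpha-1)$ times $v$---is zero in this regime and cannot absorb it.

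What is missing is a direct comparison between $p$ restricted to $S$'s items on $G$-parts and $p$ restricted to $T$'s items on $G$-parts, and such a comparison cannot be extracted from end-of-algorithm invariants (no-demand, strong IR) alone. The paper proves precisely this bound, $p(S') \le p(T') + 3\alpha\epsilon\, r(S')$, via a separate claim that tracks the Kelso--Crawford \emph{dynamics} within each part $P\in G$: order the $S$-items covering $P$ by the round in which they were last added; compare the price of the first such item to the price of the $T$-item for $P$ at the time it was added, using no-demand at that earlier round together with the monotonicity of KC prices; and bound the total price of the remaining $S$-items covering $P$ by $O(\alpha\epsilon)$ via $\alpha$-submodularity, since once one item from $P$ is in the bundle the marginal $r$-value of any further item from $P$ is zero. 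In your write-up the auction dynamics appear only as a way to absorb the $\delta$ discretization error, but they are in fact the source of the essential price comparison; without them the argument does not deliver the stated $\mu$.
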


\begin{proof}
	By monotonicity we can assume without loss of generality that all items are allocated at price 0 in the first step of the Kelso-Crawford algorithm, and once an item is allocated then it remains allocated throughout. Thus without loss of generality, Kelso-Crawford returns a full allocation as required by the definition of a $\mu$-biased Walrasian equilibrium. 
	
	From now until the end of the proof, fix a player $i$. For simplicity we omit $i$ from the notation. 
	Let $\mathcal P=(P_1,\dots,P_k)$ be the partition of the items corresponding to the unweighted transversal valuation $r$, to which the player's valuation $v$ is $\epsilon$-close. For an item $j$, let $P(j)$ denote the part to which this item belongs. We say that a part $P$ is \emph{represented} in a bundle $X$ if $X$ contains at least one item $j\in P$ with value $r(j)=1$. 
	
	Towards proving the guarantee in Inequality \eqref{eq:WE-approx}, let $S$ be the bundle allocated to the player by the Kelso-Crawford algorithm, and let $T$ be an alternative bundle.  
	Let $P(S)$ (resp., $P(T)$) be the parts represented in $S$ (resp., $T$).  
	Let $S'=\{j\in S\mid P(j)\in P(S)\cap P(T)\}$ be the set of items in $S$ that belong to parts represented both in $S$ and in $T$, and similarly $T'=\{j\in T\mid P(j)\in P(S)\cap P(T)\}$.
	
	\begin{claim*}
		\label{cla:same-rank}
		$r(S')=r(T')$.
	\end{claim*}
	
	\begin{proof}[of Claim \ref{cla:same-rank}]
		The value assigned to a set by valuation $r$ is the number of parts represented in it. By definition, the same parts are represented in $S'$ and in $T'$. 
		\qed
	\end{proof}
	
	We now relate the prices of $S'$ and $T'$ according to the price vector $p$ with which the Kelso-Crawford algorithm terminated. In particular we show that the price of $S'$ cannot be too high in comparison to the price of $T'$.
	
	\begin{claim*} 
		\label{cla:price-relation}
		$p(S')\le p(T') + 3\alpha\epsilon r(S')$. 
	\end{claim*}
	
	The proof of Claim \ref{cla:price-relation} appears below.
	We use the above claims to complete the proof of Lemma \ref{lem:KC-transversal}. 
	Let $S''=S\setminus S'$ and $T''=T\setminus T'$. We know that when the Kelso-Crawford algorithm terminates, the player cannot improve his utility by adding $T''$ to his allocation $S$, and so:
	\begin{eqnarray}
	v(S)-p(S)&\ge& v(S\cup T'')-p(S\cup T'')\nonumber\\
	& =& v(S'')+v(S'\mid S'')+v(T''\mid S)-p(S'')-p(S')-p(T'').\label{eq:transversal-utility}
	\end{eqnarray}
	By $\alpha$-submodularity Lemma \ref{lem:alpha-strong-IR} applies and $S''$ is $\alpha$-IR, therefore
	\begin{equation}
	v(S'')- p(S'')\ge(1-\alpha)v(S'').\label{eq:transversal-IR}
	\end{equation}
	The marginal value assigned by $r$ to a bundle $X$ given a bundle $Y$ is the number of parts represented in $X$ but not in $Y$. Therefore $r(S'\mid S'')=r(S')$ and $r(T''\mid S)=r(T'')$. 
	By Observation~\ref{obs:marginals_close} and since $r(S')=r(T')$ (Claim \ref{cla:same-rank}), 
	\begin{eqnarray}
	v(S'\mid S'') \ge& r(S'\mid S'') - \epsilon r(S'')& \ge r(S') - \epsilon r(S) = r(T') - \epsilon r(S);\label{eq:transversal-app1}\\
	v(T''\mid S) \ge& r(T''\mid S) - \epsilon r(S)&= r(T'') -\epsilon r(S).\label{eq:transversal-app2}
	\end{eqnarray}
	Plugging \eqref{eq:transversal-IR}-\eqref{eq:transversal-app2} into \eqref{eq:transversal-utility}, and using that $r(T')+r(T'')=r(T)$ and $p(S')\le p(T') + 3\alpha\epsilon r(S')$ (Claim \ref{cla:price-relation}),
	\begin{eqnarray*}
		v(S)-p(S) &\ge& (1-\alpha)v(S'') + r(T) - 2\epsilon r(S) - p(T) - 3\alpha\epsilon r(S')\\
		&\ge& \frac{1}{1+\epsilon}v(T)-p(T) - (\alpha-1 + 2\epsilon+3\alpha\epsilon) v(S),
	\end{eqnarray*}
	where the last inequality uses $v$'s $\epsilon$-closeness to $r$. Thus $\mu\ge((1+\epsilon)(\alpha+2\epsilon+3\alpha\epsilon))^{-1}\ge (\alpha(1+3\epsilon)^2)^{-1}$.
	This completes the proof of Lemma \ref{lem:KC-transversal}.
	\qed
\end{proof}

	Towards proving Claim \ref{cla:price-relation}, the following is a property of valuations that are $\alpha$-submodular and $\epsilon$-close to unweighted transversal valuations. It can be seen as a strengthening of Observation \ref{obs:marginals_close}.
	
	\begin{claim*}
		\label{cla:stronger}
		For every part $P$ and bundles $X\subseteq P$ and $Y$, if $P$ is represented in $Y$ then $v(X\mid Y)\le \alpha\epsilon$. 
	\end{claim*}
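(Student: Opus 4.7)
The plan is to combine the two hypotheses in tandem: naive use of Observation \ref{obs:marginals_close} alone would only give $v(X\mid Y) \le \epsilon\, r(X\cup Y)$ (since $r(X\mid Y)=0$, because $X\subseteq P$ adds no new represented part to $Y$), which is far too weak because $r(X\cup Y)$ can be as large as the number of parts. The trick is to pivot through a tiny witness inside $P$, where $\epsilon$-closeness gives a tight additive bound, and then transport that bound back to the marginal against $Y$ using $\alpha$-submodularity.

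First, I would reduce to the case $X\cap Y=\emptyset$: since $v(X\mid Y)=v(X\setminus Y\mid Y)$ and $X\setminus Y\subseteq P$, this is without loss of generality. Because $P$ is represented in $Y$, fix any $j^{\star}\in Y\cap P$ with $r(j^{\star})=1$; note $j^{\star}\notin X$.

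The key computation is a bound on $v(X\mid\{j^{\star}\})$. Since $X\cup\{j^{\star}\}\subseteq P$ and already contains the witness item $j^{\star}$ with $r(j^{\star})=1$, the transversal valuation assigns $r(X\cup\{j^{\star}\})=1$ and $r(\{j^{\star}\})=1$. Applying the two sides of $\epsilon$-closeness in the directions that give the tightest marginal, we get $v(X\cup\{j^{\star}\})\le(1+\epsilon)$ and $v(\{j^{\star}\})\ge 1$, hence
\begin{equation*}
v(X\mid\{j^{\star}\})=v(X\cup\{j^{\star}\})-v(\{j^{\star}\})\le \epsilon.
\end{equation*}

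Finally, I would amplify this using $\alpha$-submodularity. Enumerate $X=\{x_1,\dots,x_k\}$ and write $v(X\mid Y)=\sum_{i=1}^{k} v(x_i\mid Y\cup X_{<i})$, where $X_{<i}=\{x_1,\dots,x_{i-1}\}$. Since $\{j^{\star}\}\cup X_{<i}\subseteq Y\cup X_{<i}$ (as $j^{\star}\in Y$), Definition~\ref{def:alpha-submod} yields $v(x_i\mid Y\cup X_{<i})\le \alpha\, v(x_i\mid \{j^{\star}\}\cup X_{<i})$ for each $i$. Summing telescopes to
\begin{equation*}
v(X\mid Y)\le \alpha\sum_{i=1}^{k} v(x_i\mid \{j^{\star}\}\cup X_{<i})=\alpha\cdot v(X\mid\{j^{\star}\})\le \alpha\epsilon,
\end{equation*}
as claimed. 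The main obstacle here is conceptual rather than technical: one has to notice that neither hypothesis alone suffices, and that the right move is to collapse $Y$ down to the singleton witness $\{j^{\star}\}$ — paying a factor $\alpha$ via approximate submodularity — precisely so that $\epsilon$-closeness can be applied where both sides of the marginal have rank exactly $1$.
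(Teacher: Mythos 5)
Your proof is correct and is essentially the paper's own argument: both pivot to the singleton witness $j^\star\in Y\cap P$ with $r(j^\star)=1$, show $v(X\mid j^\star)\le\epsilon$ via $\epsilon$-closeness (the paper cites Observation~\ref{obs:marginals_close} for this, yours derives it inline), and then lift to $v(X\mid Y)\le\alpha\,v(X\mid j^\star)$ by the bundle form of $\alpha$-submodularity, which you spell out as a telescoping sum while the paper invokes it implicitly. The extra bookkeeping (reducing to $X\cap Y=\emptyset$) is fine and makes the $\alpha$-submodularity application airtight.
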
 
	
	\begin{proof}
		Let $y$ be an item representing $P$ in $Y$. Then by $\alpha$-submodularity, $v(X\mid Y)\le \alpha v(X\mid y) \le \alpha r(X\mid y) + \alpha\epsilon r(X\cup\{y\}) = \alpha\epsilon$, where the last inequality is by invoking Observation \ref{obs:marginals_close}. 
		\qed
	\end{proof}

	\begin{proof}[of Claim \ref{cla:price-relation}]
		Fix a part $P$ that is represented in $S'$ and $T'$. 
		Let $t$ be an item representing $P$ in $T'$. 
		Order the items representing $P$ in $S'$ according to the order in which the Kelso-Crawford algorithm added them to the player's allocation for the last time (i.e., at their termination prices according to $p$), breaking ties arbitrarily; denote the ordered set by $(s_1,s_2,\dots)$. Let $B_j$ be the bundle of the player right before item $s_j$ was added, and let $D_j$ be the set of items (not including $s_j$) with which $s_j$ was added to~$B_j$.  
		
		We first consider item $s_1$. If it is the case that $P$ is not represented in $B_1$ and no other item represents $P$ in $D_1$, then we argue that 
		$v(s_1\mid B_1\cup D_1)-p(s_1)\ge v(t\mid B_1\cup D_1)-p(t)$. The reason for this is that otherwise, the player's utility could have been improved by replacing $s_1$ by $t$ (using that $t$'s termination price $p(t)$ is weakly higher than its price when $s_1$ was added). By monotonicity, we can write $v(s_1\mid B_1\cup D_1) \le v(s_1,t\mid B_1\cup D_1)\le v(t\mid B_1\cup D_1)+\alpha\epsilon$, where the last inequality is by Claim \ref{cla:stronger}. Therefore $p(s_1)\le p(t)+\alpha\epsilon$.
		In the remaining case, $P$ is represented in either $B_1$ or $D_1$. We know that $v(s_1\mid B_1\cup D_1)\ge p(s_1)$, otherwise the utility could have been improved by dropping~$s_1$. By Claim \ref{cla:stronger}, the left-hand side is at most $\alpha\epsilon$, and so $p(s_1)\le \alpha\epsilon$.
		
		Now consider the rest of the items $s_2,s_3,\dots$. When item $s_j$ is added, $\{s_1,\dots,s_{j-1}\}\subseteq B_j\cup D_j$. We also know that, as above, $v(s_j\mid B_j\cup D_j)\ge p(s_j)$. By $\alpha$-submodularity this means that $\alpha \sum_{j\ge 2} v(s_j\mid S_j) \ge \sum_{j\ge 2} p(s_j)$. The left-hand side is equal to $\alpha v(\{s_2,s_3,\dots\}\mid s_1)$, and is therefore $\le \alpha\epsilon$ by Observation \ref{obs:marginals_close}. We have thus shown that $\sum_{j\ge 2} p(s_j)\le \alpha\epsilon$, and the same argument shows that the total payment for items in $S'\cap P$ that do not represent $P$ is at most $\alpha\epsilon$. 
		
		We conclude that the total payment for items in $S'\cap P$ is at most $p(t)+ 3\alpha\epsilon$. Summing up over all parts represented in $S'$ and $T'$ completes the proof of the claim.
		\qed
	\end{proof}

\section{Negative Results for Specific Algorithms}
\label{appx:negative}

How well do standard algorithms for welfare maximization perform for valuations that are close to, but are not quite, gross substitutes? 
After discussing the LP-based approach in Section \ref{sub:value-vs-demand},
in this section we discuss the  two other main approaches to welfare maximization for gross substitutes -- the ascending auction algorithm of Kelso and Crawford \cite{KelsoCrawford1982}, and the cycle canceling algorithm of Murota \cite{Mur96a,Mur96b}. 

At first glance the Kelso-Crawford algorithm seems like a promising approach due to its known welfare guarantee of a $1/2$-approximation for the class of submodular valuations \cite{FKL12}. However
Section~\ref{appx:KC-neg} shows that the Kelso-Crawford algorithm cannot in general guarantee much better than that even for simple submodular valuations that are arbitrarily close to unit-demand. An interesting open question is whether the Kelso-Crawford algorithm can be modified to eliminate such bad examples, e.g., by ordering the players in an optimal way.

The cycle canceling algorithm of Murota is a different approach which relies on properties of GS (or equivalently $M^\natural$-concave) valuations under local improvements. In Section \ref{sub:murota-neg} we show that such properties hold for submodular valuations in a certain approximate sense, but unfortunately the cycle canceling algorithm cannot find a local optimum that would allow us to exploit these properties. In fact we show that a local optimum in Murota's sense does not exist for submodular valuations.

\subsection{A Negative Result for Kelso-Crawford}
\label{appx:KC-neg}

\begin{proposition}
\label{pro:KC-neg}
	For every $\epsilon\le 1$, there exists a market with $O(1/\epsilon)$ players whose submodular valuations are $\epsilon$-close to unit-demand, for which the Kelso-Crawford algorithm with adversarial ordering of the players finds an allocation with $\approx2/3$ of the optimal welfare.
\end{proposition}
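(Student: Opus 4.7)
The plan is to exhibit a specific small market and trace Algorithm~\ref{alg:KC} under a carefully engineered adversarial schedule. Take $M = \{a, b\}$ and three players: $P_0$ with $v_0(\emptyset) = 0$, $v_0(\{a\}) = v_0(\{b\}) = 1$, $v_0(\{a,b\}) = 1 + \epsilon$; $P_1$ unit-demand with $\rho_a = 1/3$, $\rho_b = 0$; and $P_2$ unit-demand with $\rho_a = 0$, $\rho_b = 1$. First I would check that $v_0$ is submodular (marginals $1$ then $\epsilon$) and $\epsilon$-close to the unit-demand $r_0$ with $\rho_a = \rho_b = 1$, and that $v_1, v_2$ are (trivially) $\epsilon$-close to unit-demand and submodular. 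The optimal allocation gives $a$ to $P_0$ and $b$ to $P_2$, so $\mathrm{OPT} = 2$.

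The adversarial KC trace proceeds as follows. First $P_1$ grabs $\{a\}$. Next $P_0$ acts: because of the $\epsilon$-slack, $P_0$'s unique utility-maximizing bundle to add from $\emptyset$ is the pair $\{a,b\}$ (not a singleton), so $P_0$ grabs $\{a, b\}$ and $P_1$ loses $a$. A tug-of-war over $a$ then ensues — $P_1$ retakes $a$, $P_0$ re-grabs via the marginal $\epsilon$ on the step from $\{b\}$ to $\{a,b\}$, and so on — until one player's utility hits zero. The stopping thresholds are $p_a \approx \epsilon$ for $P_0$ and $p_a \approx 1/3$ for $P_1$; for $\epsilon < 1/3$, $P_0$ exits first, leaving $P_1$ holding $a$ and $P_0$ holding $\{b\}$. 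Next $P_2$ takes $b$ from $P_0$, after which $P_0$ (from $\emptyset$) and $P_2$ both value $b$ at $1$ and a second tug-of-war occurs; since their thresholds coincide at $p_b \approx 1$, adversarial tie-breaking in the demand oracle makes $P_0$ the final grabber of $b$. KC terminates in the state $S_0 = \{b\}$, $S_1 = \{a\}$, $S_2 = \emptyset$ with welfare $1 + 1/3 = 4/3$, giving the ratio $(4/3)/2 = 2/3$.

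The main obstacle is bookkeeping the adversarial schedule precisely: at each step the chosen player must have a non-empty demand bundle (possibly via adversarial tie-breaking), and prices must evolve consistently under the $\delta$-increments — particular care is needed where $P_0$'s demand toggles between pair and singleton as the price of a contested item crosses the $\epsilon$ threshold. A secondary obstacle is covering the full range $\epsilon \in (0,1]$; the regime $\epsilon \in (1/3, 1]$ requires rescaling the instance (either inflating $P_0$'s base value $\rho$ or inserting $O(1/\epsilon)$ decoy regulars with carefully chosen unit-demand values) so that the cycle thresholds remain in the right order while the resulting KC and optimal welfares still differ by a factor of roughly $3/2$. This is where the $O(1/\epsilon)$ player count of the statement becomes operative.
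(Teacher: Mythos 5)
Your construction has a genuine flaw in the claimed adversarial trace. In the second tug-of-war (over $b$), each time $P_2$ takes $b$ from $P_0$, $P_0$ is returned to the empty allocation. From $\emptyset$, with prices $p_a \approx \epsilon$ and $p_b$ climbing, $P_0$'s utilities are: $\{a\}$ at $1 - p_a - \delta$, $\{b\}$ at $1 - p_b - \delta$, and $\{a,b\}$ at $1+\epsilon - p_a - p_b - 2\delta$. Once $p_b > \epsilon - \delta$, the pair $\{a,b\}$ is dominated by the singleton $\{a\}$; and once $p_b > p_a \approx \epsilon$, $\{a\}$ \emph{strictly} beats $\{b\}$. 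So as soon as $p_b$ crosses $\epsilon$ (which happens very early, since $p_b$ must eventually climb toward $1$), $P_0$ from $\emptyset$ uniquely demands $\{a\}$, not $\{b\}$, and no tie-breaking can help. This re-ignites the fight over $a$, which $P_0$ now wins decisively: $P_0$ values $a$ at $1$ from $\emptyset$, while $P_1$ only values it at $1/3$. The algorithm therefore terminates with $S_0 = \{a\}$, $S_1 = \emptyset$, $S_2 = \{b\}$ and welfare $2 = \OPT$ --- there is no loss at all. Adversarial ordering cannot prevent this because once $P_0$ has a positive-utility demand from $\emptyset$, the algorithm cannot terminate without letting $P_0$ act.

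The underlying issue is structural: with only two items, a near-unit-demand player who is ever ejected back to $\emptyset$ will simply re-evaluate both items from scratch and grab the cheaper one, undoing the adversary's work. The paper's construction avoids this by introducing decoy items ($x_i$) that the main player snaps up (thanks to a $\delta$ price advantage), after which nobody else wants $x_i$ at its raised price. The main player is thus \emph{parked} at $\{x_i\}$ forever, never returning to $\emptyset$, and from $\{x_i\}$ the marginal value of the optimal item $y_i$ is exactly its price, so the bad allocation locks in. The $O(1/\epsilon)$ player count arises because each main player's welfare loss is only an $\epsilon$-sized fraction of their value, so $\Theta(1/\epsilon)$ of them are needed to swamp the additive contribution of the high-value item $h$ and push the ratio to a constant $2/3$. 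Your plan to handle $\epsilon \in (1/3,1]$ by adding decoy players is pointing in roughly the right direction, but the decoy mechanism is not an optional patch for large $\epsilon$ --- it is the essential ingredient for all $\epsilon$, and the two-item core of your instance cannot produce any loss.
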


\begin{proof}
	Let $\rho=\frac{\epsilon}{1+\epsilon}$.
	Let $H$ be a large constant, and set $n'=H/\rho$. Let $\delta$ be negligibly small. 
	
	There are three kinds of items on the market: one ``large'' item $h$, $n'$ items $x_1,\dots,x_{n'}$, and $n'$ items $y_1,\dots,y_{n'}$.
	There are four kinds of players: $n'$ main players, two $h$-players who have unit-demand valuations for the item $h$, $2n'$ $x$-players who have unit-demand valuations for one of the $x$'s, and $2n'$ $y$-players who have unit-demand valuations for one of the $y$'s. 
	In particular, the $h$-players value item $h$ at $H$, each $x_j$ has a pair of $x$-players who value it at $\rho H-\delta$, and each $y_j$ has a pair of $y$-players who value it at $\rho H$. 
	Denote by $v_i$ the valuation of the $i$th main player, and by $r_i$ the unit-demand valuation to which it is $\epsilon$-close. Table \ref{tab:approx-unit-demand} compares $v_i$ and $r_i$ for the items to which they attribute non-zero value (all other items have value $0$). The proof of the next claim appears below.
	
	\begin{table}
		\caption{The valuation $v_i$ of the $i$th main player, and the unit-demand valuation $r_i$ to which it is $\epsilon$-close.}
		\centering
			\begin{tabular}{ | c || c | c | c | c | c | c | c | c | }
				\hline
				&&&&&&&&\\
				& ~$\{\}$~ & ~$\{h\}$~ & ~$\{x_i\}$~ & ~$\{y_i\}$~ & $\{h,x_i\}$ & $\{h,y_i\}$ & $\{x_i,y_i\}$ & $\{h,x_i,y_i\}$ \\
				&&&&&&&&\\
				\hline
				\hline
				&&&&&&&&\\
				~$v_i$~ & 0 & $\frac{H}{1+\epsilon}$ & $\rho H$ & $2\rho H$ & $H$ & $H$ & $2\rho H$ & $H$ \\
				&&&&&&&&\\
				\hline
				&&&&&&&&\\
				$r_i$ & 0 & $\frac{H}{1+\epsilon}$ & $\frac{\rho H}{1+\epsilon}$ & $\frac{2\rho H}{1+\epsilon}$ & $\frac{H}{1+\epsilon}$ & $\frac{H}{1+\epsilon}$ & $\frac{2\rho H}{1+\epsilon}$ & $\frac{H}{1+\epsilon}$\\
				&&&&&&&&\\  
				\hline
			\end{tabular}
			\label{tab:approx-unit-demand}
		\end{table} 
		
		\begin{claim*}
			\label{cla:KC-neg-submod}
			The valuation $v_i$ is submodular. 
		\end{claim*}
		
		We now describe how the Kelso-Crawford algorithm can go wrong. First, the $x$-players and $y$-players raise the price of every item $x_j$ to $\rho H-\delta$ and of every item $y_j$ to $\rho H$. The price of $h$ remains $0$ at this point. It is now the turn of the first main player. His utilities for the relevant bundles are described in Table \ref{tab:possible-utilities}. The first main player chooses $\{h,x_1\}$ as the unique bundle that maximizes his utility, and the prices of items $h$ and $x_1$ slightly increase. Now the second main player chooses $\{h,x_2\}$ as the bundle that maximizes his utility, taking $h$ away from the first main player (the price of $h$ has increased, but if the price increments are small enough -- of order $O(\epsilon)$ -- then it is still the unique utility-maximizing bundle for the second main player). This repeats for the third main player and so on. After all the main players have chosen bundles in this way, the $h$-players raise the price of item $h$ to $H$. At this point, no player wants to add to his current bundle: For the main players, $h$ is now too expensive, and since for every $i$ the $i$th main player has $\{x_i\}$ as his bundle, the marginal value of $y_i$ for him is equal to $y_i$'s price $\rho H$. For the $x$-players, the $x$ items have become too expensive. For the $y$-players and the $h$-players, the $y$ and $h$ items have prices equal to their values, respectively. Thus the Kelso-Crawford algorithm terminates.
		
		\begin{table}
			\caption{The $1$st main player's utilities for different bundles after the prices of the $x$ and $y$ items were raised (the raised prices appear in the first row).}
			\centering
				\begin{tabular}{ | c || c | c | c | c | c | c | c | c | }
					\hline
					&&&&&&&&\\
					& ~$\{\}$~ & ~$\{h\}$~ & ~$\{x_1\}$~ & ~$\{y_1\}$~ & $\{h,x_1\}$ & $\{h,y_1\}$ & $\{x_1,y_1\}$ & $\{h,x_1,y_1\}$\\
					&&&&&&&&\\
					\hline
					\hline
					&&&&&&&&\\
					$~p~$ & 0 & 0 & $\rho H-\delta$ & $\rho H$ & $\rho H-\delta$ & $\rho H$ & $2\rho H-\delta$ & $2\rho H-\delta$\\
					&&&&&&&&\\
					\hline
					&&&&&&&&\\
					$v_1 - p$ & 0 & $\frac{H}{1+\epsilon}$ & $\delta$ & $\frac{\epsilon H}{1+\epsilon}$ & $\frac{H}{1+\epsilon}+\delta$ & $\frac{H}{1+\epsilon}$ & $\delta$ & $\frac{(1-\epsilon)	H}{1+\epsilon} + \delta$\\
					&&&&&&&&\\
					\hline
				\end{tabular}
				\label{tab:possible-utilities}
			\end{table} 
			
			Consider the welfare at termination. The main players contribute $n'\rho H$ to the welfare, the $y$-players also contribute $n'\rho H$ and $h$-players contribute $H$. 
			Alternatively, if the main players were to take the $y$ items, the $x$-players were to take the $x$ items and the $h$-players would keep the $h$ item, the welfare would have been
			$2n'\rho H + n'\rho H + H$ and this is optimal.
			The ratio between the welfare achieved by Kelso-Crawford and $\OPT$ is therefore $\approx \frac{3}{4}$, and this completes the proof.
			\qed
		\end{proof}

\begin{proof}[of Claim \ref{cla:KC-neg-submod}]
	The valuation $v_i$ attributes non-zero values to three items: $h$, $x_i$ and $y_i$. 
	Table \ref{tab:approx-unit-demand} shows $v_i$'s monotonicity. To see its submodularity, consider
	Table \ref{tab:KC-neg-marginals}, which shows the marginals of the three items (one row per item) given combinations of the other two items. Note that adding any other items on the market does not change these marginals. We can see that the marginals for every item are indeed decreasing, concluding the proof.
\begin{table}
	\caption{The marginal values of valuation $v_i$ for items $h,x_i,y_i$.}
	\centering
		\begin{tabular}{ | c || c | c | c | c | }
			\hline
			&&&&\\
			$~h~$ & $v_i(h)=$ & $v_i(h\mid x_i)=$ & $v_i(h\mid y_i)=$ & $v_i(h \mid x_i,y_i)=$\\
			& $\frac{H}{1+\epsilon}=(1-\rho)H$ & $(1-\rho)H$ & $(1-2\rho)H$ & $(1-2\rho)H$ \\
			&&&&\\
			\hline
			&&&&\\
			$x_i$ & $v_i(x_i)=$ & $v_i(x_i\mid h)=$ & $v_i(x_i\mid y_i)=$ & $v_i(x_i \mid h,y_i)=$\\
			& $\rho H$ & $\rho H$ & 0 & 0 \\
			&&&&\\
			\hline
			&&&&\\
			$y_i$ & $v_i(y_i)=$ & $v_i(y_i\mid h)=$ & $v_i(y_i\mid x_i)=$ & $v_i(y_i \mid h,x_i)=$\\
			& $2\rho H$ & $\rho H$ & $\rho H$ & 0 \\
			&&&&\\
			\hline
		\end{tabular}
	\label{tab:KC-neg-marginals}
\end{table}
\qed
\end{proof}

\subsection{Murota's Cycle Canceling Approach}
\label{sub:murota-neg}

The cycle canceling approach is based on the following useful property of gross substitutes valuations, called the \emph{single improvement (SI)} property \cite{GS99}: 
Given a price vector $p$, we say that a bundle $S$ is in \emph{local demand} for a valuation $v$ if its utility cannot be improved by adding an item, removing an item or swapping one item for another. The SI property holds if every bundle $S$ in local demand is also in (global) demand (i.e., $v(S)-p(S)\ge v(T)-p(T)$ for every bundle $T$). Murota's cycle canceling algorithm finds an allocation and prices such that each player gets a bundle in local demand, thus arriving at an optimal allocation for gross substitutes valuations. 

We observe that a variant of the SI property characterizes submodular valuations, and thus the SI property interpolates between submodularity and gross substitutes.

\begin{definition} 
	Let $\beta\in[0,1]$. A valuation $v$ is $\beta$-SI if for every $S$ in local demand, $S$ is strongly individually rational,%
	\footnote{Being strongly IR implicitly holds for SI as well, i.e., if $v$ is SI and $S$ is in local demand then $S$ is strongly IR.}
	and $v(S)-\beta p(S)\ge v(T)-p(T)$ for every $T$.
\end{definition}

\begin{observation}
	\label{obs:neg-submod-equiv-SI}
	A monotone valuation is submodular if and only if it is $0$-SI.
	A monotone valuation is gross substitutes if and only if it is $1$-SI.
\end{observation}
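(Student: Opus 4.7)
My plan is to prove the two equivalences separately: the submodular $\Leftrightarrow$ $0$-SI claim is the new content, while the gross substitutes $\Leftrightarrow$ $1$-SI claim will reduce to the classical single-improvement characterization of Gul--Stacchetti.

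For the forward direction submodular $\Rightarrow$ $0$-SI, I use a standard telescoping argument. Suppose $v$ is monotone submodular and $S$ is in local demand under prices $p$. The remove condition $v(i \mid S \setminus i) \geq p(i)$, combined with submodularity, gives $v(i \mid T' \setminus i) \geq p(i)$ for every $i \in T' \subseteq S$, so telescoping over $T'$ together with $v(\emptyset) \geq 0$ yields $v(T') \geq p(T')$ (strong IR). For the main inequality $v(S) \geq v(T) - p(T)$, monotonicity gives $v(T) \leq v(S \cup T)$, submodularity plus telescoping over $T \setminus S$ gives $v(S \cup T) - v(S) \leq \sum_{i \in T \setminus S} v(i \mid S)$, and the add condition $v(i \mid S) \leq p(i)$ converts this into $v(T) - v(S) \leq p(T \setminus S) \leq p(T)$.

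The reverse direction is by contrapositive and is where I expect the main obstacle. If $v$ is monotone but not submodular, pick $A \subseteq M$ and distinct $j, k \notin A$ with $v(\{j,k\} \mid A) > v(j \mid A) + v(k \mid A)$, and define prices $p(i) = 0$ for $i \in A$, $p(j) = v(j \mid A)$, $p(k) = v(k \mid A)$, and $p(\ell)$ prohibitively large for every other item. I need to verify that $S = A$ is in local demand: removing $i \in A$ is unprofitable since $p(i) = 0 \leq v(i \mid A \setminus i)$; adding $j$ or $k$ is tight and adding any other item is ruled out by its huge price; and swapping $i \in A$ for $\ell \in \{j,k\}$ changes the utility by exactly $v(A \setminus i \cup \ell) - v(A \cup \ell)$, which is $\leq 0$ by monotonicity since $A \setminus i \cup \ell \subseteq A \cup \ell$. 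The key point is that this reduction of the swap condition to plain monotonicity makes any minimality assumption on the violating triple unnecessary. Applying $0$-SI at $T = A \cup \{j, k\}$ then yields $v(T) - p(T) = v(A) + v(\{j,k\} \mid A) - v(j \mid A) - v(k \mid A) > v(A) = v(S)$, contradicting $0$-SI; strong IR holds vacuously since $p$ vanishes on $A$.

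For the gross substitutes $\Leftrightarrow$ $1$-SI equivalence, the key observation is that $1$-SI says precisely that every $S$ in local demand is in \emph{global} demand and is strongly IR. Since GS is contained in submodular, strong IR follows for monotone GS valuations from the same telescoping argument used in the first part, and the classical Gul--Stacchetti theorem \cite{GS99} is exactly the statement that local demand equals global demand precisely for GS valuations. Conversely, any monotone $1$-SI valuation is in particular $0$-SI, hence submodular by the first equivalence, and the single-improvement content of $1$-SI then yields GS via Gul--Stacchetti.
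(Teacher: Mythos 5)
Your argument for the first equivalence matches the paper's almost exactly: the forward direction uses the same telescoping argument driven by the add/remove conditions and submodularity, and the reverse direction builds the same counterexample prices and relies on the same reduction of the swap check to plain monotonicity (your ``prohibitively large'' prices outside $A\cup\{j,k\}$ are a cosmetic variant of the paper's choice $p(\ell)=v(\ell\mid A)$ for every $\ell\notin A$). The paper does not write out the second equivalence, deferring to Gul--Stacchetti; your brief argument for it is correct, though the detour through submodularity in the $1$-SI $\Rightarrow$ GS step is superfluous, since $1$-SI already contains the SI property verbatim and Gul--Stacchetti gives SI $\Leftrightarrow$ GS directly.
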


\begin{proof}
Suppose $v$ is submodular; we want to prove that $v$ is $0$-SI. Let $S$ be in local demand, $v(S+i) - v(S) \leq p_i$, and $v(S-j) - v(S) \leq -p_j$. (We don't even need the swap property.)
By submodularity and the second property, we have $v(S \cup T) \leq v(S) + \sum_{i \in T \setminus S} p_i$.
Therefore, $v(S) \geq v(S \cup T) - p(T \setminus S) \geq v(T) - p(T)$ by monotonicity. Also, for every $S' \subset S$, by submodularity we have $v(S \setminus S') \leq v(S) - \sum_{j \in S'} p_j$. Therefore $v(S') \geq v(S) - v(S \setminus S') \geq p(S')$.

Conversely, suppose that $v$ is not submodular, i.e.~$v(S+a+b) > v(S) + v(a \mid S) + v(b \mid S)$ for some $S$ and $a,b \notin S$. We set $p_i = 0$ for $i \in S$ and $p_j = v(j \mid S)$ for $j \notin S$. Clearly $S$ is in local demand, because swapping at most 1 element does not increase utility. However, $v(S \cup \{a,b\}) - p(S \cup \{a,b\}) > v(S)$, so $v$ is not $0$-SI.
\qed
\end{proof}

Moreover, as the next observation shows, finding an allocation with prices such that each player gets a bundle in their local demand would give a smooth transition would provide a smooth transition between a $1/2$-approximation for submodular valuations and an optimal solution for gross substitutes.

\begin{observation}
\label{obs:neg-1-welfare}
For $\beta$-SI valuations, any allocation (of all items) with prices such that each player gets a bundle in their local demand has value at least $\frac{1}{2-\beta} OPT$.
\end{observation}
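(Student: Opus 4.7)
The plan is to follow the standard Walrasian-equilibrium template (as in the proof of the first welfare theorem, cf.\ Proposition~\ref{pro:first-welfare-approx}), but with the additional slack of a factor $\beta$ on one side and the strong IR guarantee on the other. Let $\mathcal S = (S_1,\dots,S_n)$ be the given allocation and $p$ the given prices, and let $\mathcal S^* = (S_1^*,\dots,S_n^*)$ be a welfare-optimal allocation; by monotonicity we may assume $\mathcal S^*$ is also a full allocation, so $p(\mathcal S) = p(\mathcal S^*) = p(M)$.

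Since each $S_i$ is in player $i$'s local demand, the $\beta$-SI property applied with $T = S_i^*$ gives
\begin{equation*}
v_i(S_i) - \beta\, p(S_i) \;\ge\; v_i(S_i^*) - p(S_i^*)
\end{equation*}
for every $i$. Summing over players and using $p(\mathcal S) = p(\mathcal S^*)$ (both full allocations), I obtain
\begin{equation*}
W(\mathcal S) - \beta\, p(\mathcal S) \;\ge\; \OPT - p(\mathcal S), \qquad\text{i.e.}\qquad W(\mathcal S) + (1-\beta)\, p(\mathcal S) \;\ge\; \OPT.
\end{equation*}

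Next, I invoke the strong individual rationality clause built into the $\beta$-SI definition: because each $S_i$ is in local demand, $v_i(S_i) \ge p(S_i)$ (taking the trivial subset $T = S_i$), so summing gives $W(\mathcal S) \ge p(\mathcal S)$. Substituting this back into the previous display, and recalling that $1 - \beta \ge 0$, yields
\begin{equation*}
(2-\beta)\, W(\mathcal S) \;\ge\; W(\mathcal S) + (1-\beta)\, W(\mathcal S) \;\ge\; W(\mathcal S) + (1-\beta)\, p(\mathcal S) \;\ge\; \OPT,
\end{equation*}
which rearranges to $W(\mathcal S) \ge \frac{1}{2-\beta}\,\OPT$, as desired.

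There is no real obstacle here; the proof is essentially mechanical once one notices that $\beta$-SI provides exactly the two ingredients needed---an approximate utility-maximization inequality with a $\beta$-discounted price term, and strong IR that lets one bound the total price from above by the welfare. The only care required is to observe that both allocations can be taken full so that $p(\mathcal S) = p(\mathcal S^*)$ cancels, and to note that the identity recovers the known extremes: $\beta = 1$ gives $W(\mathcal S) \ge \OPT$ (the first welfare theorem for gross substitutes), while $\beta = 0$ gives $W(\mathcal S) \ge \OPT/2$ (the submodular bound).
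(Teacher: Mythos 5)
Your proof is correct and is essentially the same argument as the paper's: apply the $\beta$-SI inequality against the optimal allocation, sum over players using that both full allocations pay the same total price, and then use the (strong) IR clause to bound $p(\mathcal S)$ above by $W(\mathcal S)$, yielding $(2-\beta)W(\mathcal S) \ge \OPT$. The paper phrases the final step as substituting $\sum p_j \le \sum v_i(A_i)$ into the rearranged inequality rather than adding $(1-\beta)W(\mathcal S)$ to both sides, but this is only a cosmetic difference.
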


\begin{proof}
Suppose that each $v_i$ is $\beta$-SI and we have an allocation $(A_1,\ldots,A_n)$ with prices such that each $A_i$ is in the local demand of player $i$. We assume that all items are allocated, and the allocation is individually rational; therefore, $\sum_{i=1}^{n} v_i(A_i) \geq \sum_{i=1}^{n} p(A_i) = \sum p_j$. 

Suppose that the optimal allocation is $(O_1,\ldots,O_n)$. By the $\beta$-SI property, we have $v_i(A_i) - \beta p(A_i) \geq v_i(O_i) - p(O_i)$. Adding up over all players, $\sum_{i=1}^{n} v_i(A_i) - \beta \sum p_j \geq OPT - \sum p_j$. From here, $\sum_{i=1}^{n} v_i(A_i) \geq OPT - (1-\beta) \sum p_j \geq OPT - (1-\beta) \sum_{i=1}^{n} v_i(A_i)$. Thus $(2-\beta) \sum_{i=1}^{n} v_i(A_i) \geq OPT$.
\qed
\end{proof}

Unfortunately, this approach (whether by cycle canceling or by any other method) is doomed to fail: Example \ref{ex:neg-no-local} demonstrates that there are markets with submodular valuations for which it is impossible to find an allocation and prices such that every player's bundle is in their local demand. In other words, it is impossible to get rid of all negative cycles in Murota's algorithm. 

\begin{example}
\label{ex:neg-no-local}
We consider 2 players, 4 items, and submodular (coverage) valuations defined as follows:
$v_1(S) = w(\bigcup_{i \in S} A_i)$, where the weights of the regions in the Venn diagram of $A_1,A_2,A_3,A_4$ are depicted in Figure~\ref{fig:coverage}; $\epsilon>0$ here is a small constant. We also define $v_2(S) = w(\bigcup_{i \in S} B_i)$ where $B_1 = A_4$, $B_2 = A_1$, $B_3 = A_2$, $B_4 = A_3$.

The claim is that the optimal allocation is $S_1 = \{1,3\}$ for player $1$, and $S_2 = \{2,4\}$ for player $2$, which has welfare $16$.
However, given this allocation there is a negative cycle in Murota's sense:
\begin{itemize}
\item[$\bullet$] Player $1$ swaps item $1$ for item $2$;
\item[$\bullet$] Player $2$ swaps item $2$ for item $3$;
\item[$\bullet$] Player $1$ swaps item $3$ for item $4$;
\item[$\bullet$] Player $2$ swaps item $4$ for item $1$.
\end{itemize}
Each exchange alone brings an improvement in value, from $8$ to $8+\epsilon$ for each player, and thus is represented by a negative edge in Murota's graph. However, performing all the exchanges in a cycle decreases the value from $16$ to $8+4\epsilon$. 

For other allocations, it is even easier to see that there is a negative cycle: The reader can check that for every other allocation (which is suboptimal), there is a swap of items that improves the solution. Hence this swap corresponds to a negative cycle. 

It follows that for every allocation there is a negative cycle in Murota's graph. Equivalently, for any assignment of prices, there exists a local swap that improves some player's utility. (Prices without an improving swap exist if and only if there is no negative cycle in Murota's graph; see \cite{Pae14}.) In other words, we cannot arrange for all players to have a bundle in their local demand.

One might object that these coverage valuations are not close to gross substitutes. However, one can easily achieve this by adding
a linear function with large coefficients; the resulting valuation is close to linear, and the same conclusion still holds --- we cannot get rid of negative cycles. 

We remark that \cite{MM15} discuss the issue of non-improving negative cycles, and present a heuristic modification of their algorithm for submodular valuations. However, as far as we see, they do not address the question whether a solution without negative cycles could be found by some other approach. Our example shows that it is indeed impossible to eliminate negative cycles for submodular valuations.

\begin{figure}
\caption{Coverage valuation for a counterexample to cycle canceling.}
\label{fig:coverage}
\begin{tikzpicture}

\node at (-3,8) {};

\draw (2,2) circle (2cm);
\draw (5,2) circle (2cm);
\draw (2,5) circle (2cm);
\draw (5,5) circle (2cm);

\node at (1.5,1.5) {$\epsilon$};
\node at (5.5,1.5) {$\epsilon$};
\node at (1.5,5.5) {$2$};
\node at (5.5,5.5) {$2$};

\node at (3.5,2) {$2$};
\node at (3.5,5) {$2$};
\node at (2,3.5) {$1$};
\node at (5,3.5) {$1$};

\node at (0,6.5) {$A_1$};
\node at (0,0.5) {$A_2$};
\node at (7,6.5) {$A_3$};
\node at (7,0.5) {$A_4$};

\end{tikzpicture}
\end{figure}
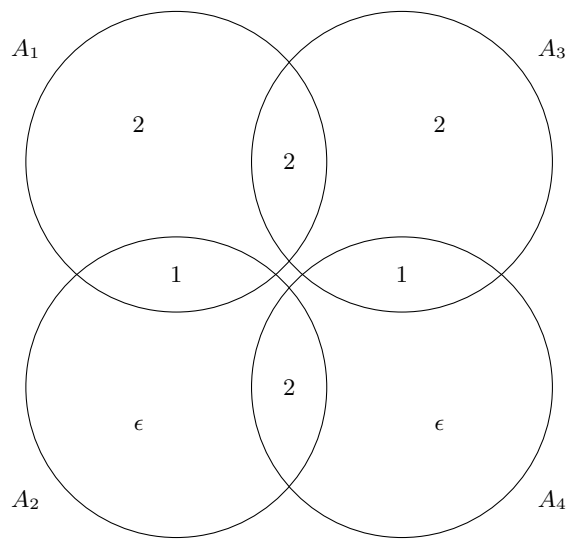

\end{example}

\end{document}